\PassOptionsToPackage{unicode}{hyperref}
\PassOptionsToPackage{hyphens}{url}
\documentclass[12pt]{article}

\usepackage[T1]{fontenc}
\usepackage[utf8]{inputenc}
\usepackage{textcomp}
\usepackage[english]{babel}
\usepackage{lmodern}
\usepackage{amsmath,amssymb,amscd,amsthm,amsfonts,amsbsy,bm,mathrsfs,bbm}
\usepackage{graphicx,epsfig,epstopdf}
\usepackage{booktabs,array,longtable,makecell,multirow,diagbox,adjustbox}
\usepackage[flushleft]{threeparttable}
\usepackage{rotating,pdflscape}
\usepackage{algorithm}
\usepackage{algpseudocode}
\usepackage[font=footnotesize,labelfont=bf]{caption}
\usepackage{subcaption}
\usepackage{comment,float,afterpage,enumitem,pifont,xcolor,url}
\usepackage{natbib}
\usepackage{xr-hyper}
\usepackage[colorlinks,citecolor=blue,linkcolor=blue,urlcolor=blue]{hyperref}
\usepackage{bookmark}
\makeatletter
\let\saved@bibcite\bibcite
\let\bibcite\@gobbletwo
\let\bibcite\saved@bibcite
\makeatother

\graphicspath{{../}}

\makeatletter
\def\maxwidth{\ifdim\Gin@nat@width>\linewidth\linewidth\else\Gin@nat@width\fi}
\def\maxheight{\ifdim\Gin@nat@height>\textheight\textheight\else\Gin@nat@height\fi}
\setkeys{Gin}{width=\maxwidth,height=\maxheight,keepaspectratio}
\def\fps@figure{htbp}
\makeatother

\def\spacingset#1{\renewcommand{\baselinestretch}{#1}\small\normalsize}

\newtheorem{theorem}{{Theorem}}
\newtheorem{lemma}{{Lemma}}

\newtheorem{proposition}{{Proposition}}
\newtheorem{corollary}{{Corollary}}

\newtheorem{assumption}{{Assumption}}
\newtheorem*{assumption*}{{Assumption}}
\def\sym#1{\ifmmode^{#1}\else\(^{#1}\)\fi}

\def\A{\bm A}
\def\B{\bm B}

\def\I{\bm I}

\def\P{\bm P}
\def\SS{\mathbb S}
\def\Set{\mathcal S}
\def\J{\mathcal J}
\def\X{\bm X}
\def\Y{\bm Y}

\def\W{\bm W}
\def\bbeta{\bm \beta}
\def\ttheta{\bm \theta}
\def\eeta{\bm \eta}

\def\eepsilon{\bm \epsilon}
\def\Eepsilon{\mathbf{\mathcal{E}}}

\def\Ssigma{\bm \Sigma}

\def\pphi{\bm \phi}
\def\vvarphi{\bm \varphi}
\def\Ppi{\bm \Pi}

\def\oomega{\bm \omega}
\def\rrho{\bm \rho}
\def\xxi{\bm \xi}
\def\Xxi{\bm \Xi}
\def\Ggamma{\bm \Gamma}

\def\e{\bm e}
\def\g{\bm g}
\def\h{\bm h}
\def\x{\bm x}
\def\y{\bm y}
\def\r{\bm r}

\def\u{\bm u}
\def\v{\bm v}

\def\z{\bm z}

\def\o{\bm o}
\def\Stilde{\tilde S}
\def\stilde{\tilde s}
\newcommand{\PP}{\mathbb P}
\newcommand{\R}{\mathbb R}
\newcommand{\EE}{\mathbb E}
\newcommand{\LL}{\mathscr L}
\newcommand{\cmark}{\ding{51}}
\newcommand{\xmark}{\ding{55}}
\newcommand{\vertiii}[1]{{\vert\kern-0.25ex\vert\kern-0.25ex\vert #1 
    \vert\kern-0.25ex\vert\kern-0.25ex\vert}}
\newcommand{\bigvertiii}[1]{{\left\vert\kern-0.25ex\left\vert\kern-0.25ex\left\vert #1 
    \right\vert\kern-0.25ex\right\vert\kern-0.25ex\right\vert}}
\def\diag{\text{diag}}
\DeclareMathOperator*{\argmin}{arg\,min}

\DeclareMathOperator{\tr}{\text{tr}}
\DeclareMathOperator{\vvec}{\text{vec}}

\newcommand{\anon}{1}

\begin{document}

\spacingset{1}


\if1\anon
{
  \title{\bf Moment-Based Selection of Multiresponse Linear Mixed-Effects Models}
  \author{Yifan Chen  \thanks{
    Email address: ychen300@ucsb.edu} \\
    Department of Statistics and Applied Probability,\\
    University of California, Santa Barbara\\
    and \\
    Yuedong Wang \\
    Department of Statistics and Applied Probability,\\
    University of California, Santa Barbara\\
    and \\
    Guo Yu \\
    Department of Statistics and Applied Probability,\\
    University of California, Santa Barbara}
  \maketitle
} \fi

\if0\anon
{
  \bigskip
  \bigskip
  \bigskip
  \begin{center}
    {\LARGE\bf Moment-Based Selection of Multiresponse Linear Mixed-Effects Models}
  \end{center}
  \medskip
} \fi

\bigskip
\begin{abstract}
We propose \texttt{MOMENT} (\textbf{MO}ment-Based \textbf{M}ixed-\textbf{E}ffects Selectio\textbf{N} and Es\textbf{T}imation), a stage-wise moment-based framework that exploits second-order cross-moment identities to select and estimate the random-effects covariance matrix and fixed-effects coefficients. By inducing sparsity through its diagonal under a positive semidefinite constraint, the random-effects selection problem reduces to a smooth constrained convex optimization problem that can be solved efficiently by projected gradient descent. We further establish finite-sample theoretical guarantees for the proposed procedure, including random-effects selection consistency and fixed-effects selection consistency under joint sub-Weibull errors. Simulation studies show that \texttt{MOMENT} performs competitively overall and can substantially outperform separate univariate analyses when responses are correlated. An application to the hemodialysis dataset demonstrates that the proposed method yields an interpretable and flexible approach for multivariate longitudinal data.
\end{abstract}

\noindent%
{\it Keywords:} Variable selection; Covariance estimation; Random effects; Proximal gradient descent; Finite-sample theory
\vfill

\newpage
\spacingset{1.8} 

\section{Introduction}\label{sec-intro}
Linear mixed-effects models (LMMs) are essential for analyzing non-independent data, such as repeated measurements, clustered, or longitudinal data, which are ubiquitous across the physical, biomedical, and social sciences. They allow researchers to decompose variability into population-level trends (i.e., fixed effects) and subject-specific deviations (i.e., random effects). Compared with the well-studied problem of fixed-effects selection, a central challenge in mixed-effects modeling is identifying relevant random effects that capture subject-specific deviations from the population trajectory.  

The existing literature on model selection for linear mixed-effects models can be broadly classified into \emph{likelihood-based} and \emph{moment-based} methods, with the former comprising the majority of the work. Despite their widespread use, likelihood-based methods, together with the restricted maximum likelihood (REML) approaches \citep{verbyla1993modelling, lindstrom1988newton, jiang2007linear}, often face substantial computational challenges: the full negative log-likelihood is typically nonconvex, complicating the development of efficient, scalable algorithms and limiting the development of sharp finite-sample theoretical guarantees. 
This challenge is even more pronounced in settings where the number of random effects or the number of responses is large, which is increasingly common in modern applications.
As a result, many existing approaches impose strong structural assumptions on the model that may be restrictive in practice. 
Specifically, a prominent line of work leverages Cholesky-based parameterization of the covariance matrix of the random effects. Building on the modified Cholesky reparameterization used in earlier Bayesian random-effects selection work \citep{chen2003random, kinney2007fixed}, \cite{bondell2010joint} proposed a penalized likelihood procedure for simultaneous selection of fixed and random effects.
\cite{ibrahim2011fixed} extended the standard Cholesky decomposition to generalized linear mixed-effects models (GLMMs), and \cite{li2018doubly} employed the same decomposition within the REML framework for LMMs, establishing asymptotic selection and estimation consistency. Nevertheless, likelihood formulations based on Cholesky parameterization remain nonconcave. More importantly, the Cholesky decomposition hinges on the order of the random effects and is thus not permutation invariant \citep{bickel2008regularized}.

Other likelihood-based methods focus on structured penalties and alternative selection strategies under stronger model assumptions.
\cite{hui2017hierarchical} introduced a penalty enforcing a \emph{hierarchical constraint}, whereby a random effect may be selected only if the corresponding fixed effect is included. An alternative approach to encouraging the hierarchy is to directly consider a subset of fixed effects as the potential random effects \citep{li2018doubly, heiling2024glmmpen}. \cite{hui2017joint} developed a penalized quasi-likelihood approach with adaptive group lasso regularization for joint selection of fixed and random effects. Separately, \cite{fan2012variable} studied penalized profile likelihood methods that decouple selection tasks: fixed effects are selected using a proxy for the random-effects covariance matrix, while random effects are selected via group regularization. More recently, \cite{heiling2024glmmpen} developed a Monte Carlo expectation conditional minimization (MCECM) algorithm for selecting GLMMs, which is efficient when the covariance matrix of the random effects is diagonal, but faces significant computational challenges otherwise. Under the same strong assumption of the diagonal covariance matrix for random effects in LMMs, \cite{sholokhov2024relaxation} proposed a proximal gradient algorithm for a class of penalized estimators with additional assumption of a known variance of the random error, and \cite{thompson2025scalable} developed a scalable hierarchical joint best-subset selection method using an $\ell_0$ penalty under the additional assumption of the identical design matrix for fixed and random effects.

In contrast, moment-based methods avoid distributional assumptions on the random effects and do not rely on the full or profile likelihood. Instead, they exploit low-order moment identities, which often lead to improved computational efficiency and simpler optimization procedures. For instance, \cite{peng2012model} proposed an iterative scheme that alternates between the selection and estimation of fixed effects and those of random effects. \cite{ahn2012moment} further decoupled the estimation and selection of random effects by first constructing a moment-based estimator of the covariance structure, and then performing refined selection of important random effects either through thresholding \citep{bickel2008covariance} or by solving a nonconvex optimization problem arising from a sandwich reparameterization.
See Table~\ref{table-literature} in Supplementary Material~\ref{appendix-comparison-table} for a concise comparison of the various methods discussed above.

Despite the extensive literature on model selection for LMMs, several important gaps remain. First, nearly all existing methods focus on models with univariate responses, whereas multivariate responses are common in real-world applications. For instance, electronic health record (EHR) data often contain numerous laboratory and clinical variables. Fitting univariate LMMs individually to each response ignores correlations among responses, preventing the estimation and interpretation of clinically meaningful cross-response associations. Extending mixed-effects models to the multivariate-response setting is inherently challenging, as it requires accurately characterizing covariance structures across multiple dimensions. Specifically, in addition to the covariance induced by multiple random effects, one must also account for the covariance structure of the multivariate error terms --- an issue absent in the univariate case. Properly modeling these cross-response dependencies is essential for both selection accuracy and interpretability, yet it introduces substantial additional complexity. Although \cite{duan2024estimating} proposed a procedure that separately estimates the covariance among multiple random effects and that of the multivariate error terms, their method is restricted to random-effects-only models and does not address the more general mixed-effects setting considered in this paper.
Second, many existing approaches either lack rigorous theoretical guarantees due to the nonconvex nature of the underlying optimization problems or provide only asymptotic results that require the number of subjects to diverge to infinity.

To address these gaps, we propose \texttt{MOMENT} (\textbf{MO}ment-Based \textbf{M}ixed-\textbf{E}ffects Selectio\textbf{N} and Es\textbf{T}imation), a convex, moment-based framework for variable selection in multi-response linear mixed-effects models (MLMMs).
Our contributions and the organization of the paper are summarized as follows. In Section \ref{sec-est}, we develop a novel moment-based procedure for model selection in the multi-response setting (with background reviewed in Section \ref{sec-method}). The proposed approach explicitly leverages cross-response correlations that are ignored when the responses are analyzed separately. The formulation does not impose structural constraints on the covariance or design matrices, thereby enabling flexible and fully general modeling. By relying solely on second-order moment equations rather than full likelihood evaluation, our method avoids strong distributional assumptions and remains computationally efficient.
In Section \ref{sec-theory}, we establish finite-sample guarantees for both random- and fixed-effects selection consistency under joint sub-Weibull errors, explicitly characterizing the relevant scaling and effective sample size. Finally, Sections \ref{sec-sim} and \ref{sec-apply} demonstrate the favorable empirical performance of our approach through simulation and real-data applications.

Throughout this paper, we adopt the following notational conventions unless otherwise specified. Scalars are denoted by regular (non-bold) letters, vectors by bold lowercase letters, and matrices by bold uppercase letters. Greek letters are reserved for unknown but fixed model parameters to be estimated. For any index $m$, denote $[m] = \{1,2,\ldots,m\}$. Proofs of all theoretical results are provided in the Supplementary Material.

\section{Multiresponse linear mixed-effects models}
\label{sec-method}

Suppose we observe repeated measurements from $m$ subjects, where for the $i$-th subject, the $j$-th observation (with $j \in [n_i]$) follows the linear mixed-effects model:
    \begin{equation}\label{model_eq}
        \y_{ij} = \bbeta^\top \x_{ij} + \B_{i}^\top\z_{ij} +\eepsilon_{ij},
    \end{equation}
where $ \y_{ij} \in \mathbb{R}^d$ is a $d$-dimensional response vector, $\x_{ij} \in \mathbb{R}^p$ and $\z_{ij} \in \mathbb{R}^q$ denote the covariates associated with the fixed and random effects, respectively, and $\eepsilon_{ij} \in \mathbb{R}^d$ is a $d$-dimensional random error vector. The matrix $\bbeta \in \mathbb{R}^{p \times d}$ is the fixed-effects coefficient matrix, while $\B_i\in \mathbb{R}^{q \times d}$ denotes the subject-specific random-effects matrix.
For generality, we assume only that
$\B_i$ are independent and identically distributed random matrices with mean $\mathbf{0} \in \mathbb{R}^{q \times d}$ across $i \in [m]$, and that $\eepsilon_{ij}$ are independent and identically distributed random vectors with mean $\mathbf{0} \in \mathbb{R}^{d}$ and covariance matrix $\Ssigma_{\epsilon}$ across $i \in [m]$ and $j \in [n_i]$.
As is standard in the mixed-effects literature, we further assume that $\eepsilon_{ij}$'s and $\B_i$'s are mutually independent.
Throughout this paper, the covariates $\x_{ij}$ and $\z_{ij}$ are treated as fixed. Our framework is designed to accommodate a broad range of dimensional regimes. We allow $d$, $p$, $q$, $m$, and $n_i$ to vary jointly, subject only to the scaling conditions needed for estimation and selection consistency (see Section~\ref{sec-theory}).

Model \eqref{model_eq} involves two distinct covariance structures: the covariance of the random effects $\B_i$ and that of the random errors $\eepsilon_{ij}$. 
We characterize the covariance structure of the matrix-valued random effect $\B_i$ through its vectorized representation, i.e., 
\begin{equation}
\small
     \Ssigma_{B} := \mathrm{Cov}  \left[\mathrm{vec} \left( \B_i \right) \right] 
    = \begin{bmatrix}
      \Ssigma^{(1, 1)}   & \Ssigma^{(1, 2)} & \ldots & \Ssigma^{(1, d)} \\
      \Ssigma^{(2, 1)}   & \Ssigma^{(2, 2)} & \ldots & \Ssigma^{(2, d)} \\
      \vdots   & \vdots & \ddots & \vdots \\
      \Ssigma^{(d, 1)}   & \Ssigma^{(d, 2)} & \ldots & \Ssigma^{(d, d)}
    \end{bmatrix} \in \mathbb{R}^{qd \times qd},
    \label{eq:covB}
\end{equation}
where each block $\Ssigma^{(r,u)} \in \mathbb{R}^{q \times q}$ represents the covariance between the $q$ random effects associated with the $r$-th response and those associated with the $u$-th response. This block structure explicitly captures both within-response dependence ($r = u$) and cross-response dependence ($r \neq u$) among the random effects. 
The two covariance components $\Ssigma_B$ and $\Ssigma_{\epsilon}$ jointly model the covariance structure on the responses in model \eqref{model_eq}. In particular, for any $i \in [m]$ and $j, k \in [n_i]$, the within-subject covariance is characterized by the following second-order moment equation: 
\begin{equation}
\label{equ-moment}
\small
    \text{Cov}(\y_{ij},\y_{ik}) = \mathbb{E} \Big[ \big( \y_{ij} - \bm{\beta}^\top \bm{x}_{ij} \big) \big( \y_{ik} - \bm{\beta}^\top \bm{x}_{ik} \big)^\top \Big] = 
\begin{cases} 
    (\bm{I}_d \otimes \bm{z}_{ij}^\top)\Ssigma_B (\bm{I}_d \otimes \bm{z}_{ik}) + \Ssigma_{\epsilon}, & j = k, \\
    (\bm{I}_d \otimes \bm{z}_{ij}^\top)\Ssigma_B (\bm{I}_d \otimes \bm{z}_{ik}), & j \neq k.
\end{cases} 
\end{equation}
In contrast to the univariate-response case, where the random-effects covariance is a single $q \times q$ matrix, the multivariate-response setting entails estimating a substantially more complex $qd \times qd$ covariance structure $\Ssigma_B$. Moreover, one must also estimate the error covariance matrix $\Ssigma_{\epsilon}$. These additional layers of complexity necessitate new methodological developments for scalable variable selection and estimation in MLMMs.

\section{The proposed \texttt{MOMENT} method} \label{sec-est}
In this section, we formally introduce the regularized estimation procedure for model \eqref{model_eq}, which involves three unknown parameters to be estimated: $\bbeta$, $\Ssigma_B$, and $\Ssigma_{\epsilon}$. Our approach is moment-based and builds upon the key second-order moment identity \eqref{equ-moment}. Given that a reliable working estimate of $\bbeta$ can be obtained and subsequently refined with relative ease, we adopt a stage-wise procedure with a proxy initial estimate of $\bbeta$. 
For the remainder of this section, we outline the general procedure in Algorithm \ref{alg-5step} and provide a detailed discussion of each step in the subsequent subsections.

\begin{algorithm}[!htbp]
    \spacingset{1.2}
	\renewcommand{\algorithmicrequire}{\textbf{Input:}}
	\renewcommand{\algorithmicensure}{\textbf{Output:}}
	\caption{\texttt{MOMENT}: a stage-wise procedure}
	\label{alg-5step}
	\begin{algorithmic}[1]
		\State Compute an initial working estimate of the fixed-effects coefficient matrix $\bbeta$.
	    \State Perform random-effects selection by solving a convex penalized optimization problem in $\Ssigma_B$ that promotes sparsity in its diagonal entries.
        \State (Optional) Refine the estimate of $\Ssigma_B$ based on the selected random effects.
        \State Estimate the error covariance matrix $\Ssigma_{\epsilon}$.
        \State Refine the estimation and selection of the fixed-effects coefficient matrix $\bbeta$.
	\end{algorithmic}  
\end{algorithm}

\subsection{Selection of random effects}
\label{subsec-random-sel}
In this subsection, we focus on the most challenging task of Algorithm~\ref{alg-5step}, namely the selection (Step 2) and estimation (Step 3) of $\Ssigma_B$. Random-effects selection is equivalent to recovering the support of $\diag(\Ssigma_B)$, since a zero diagonal entry corresponds to a zero-variance random-effect component, which vanishes almost surely under the zero-mean assumption. We propose to induce sparsity in  $\Ssigma_B$ directly by penalizing its diagonal elements. 
The idea of exploiting sparsity in $\Ssigma_B$ for random-effects selection is not new. For example, \cite{ahn2012moment} and \cite{bondell2010joint} adopted nonconvex sandwich parameterizations that effectively enforce sparsity across entire rows and columns of $\Ssigma_B$. Our approach is closer in spirit to \cite{peng2012model}, \cite{lin2013fixed}, and \cite{williams2015covariance}, who also penalized the diagonal elements of  $\Ssigma_B$. However, the methods of \cite{peng2012model} and \cite{williams2015covariance} do not guarantee the positive definiteness of the resulting estimator for the random-effects covariance matrix, which may adversely affect both selection and estimation performance. Meanwhile, \cite{lin2013fixed} formulated the problem as a penalized REML, yielding a nonconvex optimization problem. In contrast, our framework incorporates a positive semidefinite constraint on $\Ssigma_B$ within a convex optimization formulation. Consequently, the resulting estimator is a global optimizer and a valid covariance matrix, thereby improving both selection accuracy and estimation efficiency.

Given a working estimate $\hat{\bbeta}$ obtained from Step 1 in Algorithm \ref{alg-5step}, we construct the random-effects selection procedure by quantifying the discrepancy between the empirical and population second moment in the moment equation\eqref{equ-moment} where $j \neq k$ . This focus on the ``off-diagonal'' component in \eqref{equ-moment} is common in the covariance estimation literature \citep{ahn2012moment, yao2005functional}. Specifically, for any fixed $\hat{\bbeta}$, we consider the following optimization problem:
\begin{equation}
\small
       \min_{\Ssigma_B \succeq 0} \frac{1}{2} \sum_{i=1}^m \sum_{j \neq k}^{n_i} \left\| (\y_{ij}- \hat{\bbeta}^\top\x_{ij})(\y_{ik}-\hat{\bbeta}^\top\x_{ik})^\top - (\bm{I}_d \otimes \bm{z}_{ij}^\top)\Ssigma_B (\bm{I}_d \otimes \bm{z}_{ik}) \right\|_F^2 + \mathcal{P}_B(\mathrm{diag}(\Ssigma_B)),
    \label{equ-random-sel} 
\end{equation}
where $\mathrm{diag} (\Ssigma_B)$ denotes the vector of diagonal elements of $\Ssigma_B$, and $\mathcal{P}_B$ is a generic convex sparsity-inducing penalty. For the rest of this paper, we consider an (adaptive) $\ell_1$-penalty \citep{zou2006adaptive} of the form 
$$\mathcal{P}_B(\mathrm{diag}(\Ssigma_B)) = \lambda \sum_{t = 1}^{qd} \text{w}_{t} |\Ssigma_{B_{tt}}| = \lambda \|\W\cdot \diag(\Ssigma_B)\|_1,
$$ where $\lambda \geq 0$ is a tuning parameter, $\text{w}_{t}$ is a pre-specified weight for $\Ssigma_{B_{tt}},$ the $t$-th diagonal element of $\Ssigma_B$, and $\W\in\R^{dq \times dq}$ is a diagonal weight matrix consisted of $\text{w}_{t}$ for $t\in [dq].$ The following proposition establishes that, under the positive semidefinite constraint, the nonsmooth $\ell_1$-penalty on the diagonal is equivalent to a smooth trace penalty. This equivalence facilitates the development of an efficient optimization algorithm to solve \eqref{equ-random-sel}.
\begin{proposition}
\label{prop-equivalence}
    Problem \eqref{equ-random-sel} with $\mathcal{P}_B = \lambda \|\W\cdot \mathrm{diag}(\Ssigma_B)\|_1$ is equivalent to the following optimization problem that has a smooth objective function:
\begin{equation}
\small
       \min_{\Ssigma_B \succeq 0} \frac{1}{2} \sum_{i=1}^m \sum_{j \neq k}^{n_i} \left\| (\y_{ij}- \hat{\bbeta}^\top\x_{ij})(\y_{ik}-\hat{\bbeta}^\top\x_{ik})^\top - (\bm{I}_d \otimes \bm{z}_{ij}^\top)\Ssigma_B (\bm{I}_d \otimes \bm{z}_{ik}) \right\|_F^2 + \lambda \cdot \mathrm{tr}(\W \cdot\Ssigma_B).
    \label{equ-equivalent}
\end{equation}
\end{proposition}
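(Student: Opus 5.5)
The plan is to show that the two objective functions coincide on the common feasible set $\{\Ssigma_B \succeq 0\}$. Two problems with the same feasible set and the same objective on it have the same optimal value and the same set of minimizers, which is the sense of equivalence intended in Proposition~\ref{prop-equivalence}. The quadratic data-fit term is identical in \eqref{equ-random-sel} and \eqref{equ-equivalent}, so it suffices to prove that
\[
\lambda \|\W\cdot \mathrm{diag}(\Ssigma_B)\|_1 = \lambda\, \mathrm{tr}(\W\Ssigma_B) \quad \text{for every } \Ssigma_B \succeq 0 .
\]

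The key steps, in order, are as follows.

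First, observe that any $\Ssigma_B \succeq 0$ has nonnegative diagonal entries, since $\Ssigma_{B_{tt}} = \bm e_t^\top \Ssigma_B \bm e_t \ge 0$ for the $t$-th standard basis vector $\bm e_t$. Consequently $|\Ssigma_{B_{tt}}| = \Ssigma_{B_{tt}}$.

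Second, use that the adaptive weights are nonnegative, $\text{w}_t \ge 0$. This is implicit in the penalty being a convex sparsity-inducing penalty, and I would state it explicitly as part of the setup. It gives $\text{w}_t|\Ssigma_{B_{tt}}| = \text{w}_t \Ssigma_{B_{tt}}$.

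Third, note that $\W$ is diagonal, so $(\W\Ssigma_B)_{tt} = \text{w}_t \Ssigma_{B_{tt}}$. Summing over $t \in [qd]$ then yields
\[
\|\W\cdot\mathrm{diag}(\Ssigma_B)\|_1 = \sum_{t=1}^{qd} \text{w}_t \Ssigma_{B_{tt}} = \mathrm{tr}(\W\Ssigma_B).
\]

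Finally, I would remark on smoothness. The map $\Ssigma_B \mapsto \mathrm{tr}(\W\Ssigma_B)$ is linear, and the data-fit term is a quadratic in $\Ssigma_B$. The objective in \eqref{equ-equivalent} is therefore smooth and convex on the whole space of symmetric matrices, whereas the original objective is nonsmooth only off the PSD cone, where it is never evaluated.

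There is no real obstacle in this argument. The only points needing care are to make the nonnegativity of the weights explicit and to note that equivalence here means identical objective values on the feasible set, not equality of the two functions on all symmetric matrices.
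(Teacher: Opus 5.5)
Your proposal is correct and follows the paper's own argument: both use that a PSD matrix has nonnegative diagonal entries, so the weighted $\ell_1$ penalty coincides with $\lambda\,\mathrm{tr}(\W\Ssigma_B)$ on the feasible set, and both conclude that the two problems have identical minimizers (the paper writes this last step out as the two inclusions $\Set_1\subseteq\Set_2$ and $\Set_2\subseteq\Set_1$). One small slip is in your closing remark: the original objective is not ``nonsmooth only off the PSD cone,'' because its kinks $\{\Ssigma_{B_{tt}}=0\}$ lie on the boundary of the cone, inside the feasible set; the accurate statement is that on the cone it agrees with the smooth objective of \eqref{equ-equivalent}.
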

In contrast to the nonsmooth objective in \eqref{equ-random-sel}, the objective function in the equivalent formulation \eqref{equ-equivalent} is smooth and therefore substantially easier to optimize. The problem can be efficiently solved using a proximal gradient descent algorithm \citep{nesterov2013gradient,parikh2014proximal}, which in this setting is also commonly referred to as projected gradient descent. Each iteration requires evaluating the proximal operator associated with the constraint set $\{\Ssigma_B \succeq 0\}$, which admits a closed-form solution given by the projection onto the positive semi-definite cone. Full algorithmic details, including the choice of the weights $\text{w}_t$ and a cross-validation procedure to select the tuning parameter $\lambda$, are provided in Supplementary Material~\ref{appendix-algo}.

One may further refine the estimate of $\Ssigma_B$ by refitting \eqref{equ-equivalent}, with $\lambda = 0$, given the selected random effects $\hat{S} = \{t \in [qd]: \hat{\Ssigma}_{B_{tt}} \neq 0\}$. Specifically, 
a solution can be obtained by first solving the optimization problem in a reduced dimension corresponding to the selected active index set using proximal gradient descent, and then embedding the resulting estimator back into the original matrix dimension by assigning zeros to the rows and columns associated with inactive random effects.

\subsection{Estimation of the random error covariance}
\label{subsec-sigmae}
For generality, we do not impose any structural assumptions on the error covariance matrix $\Ssigma_{\epsilon}$. In Step 4 of Algorithm \ref{alg-5step}, the optimization problem for estimating $\Ssigma_{\epsilon}$ can be formulated using the ``diagonal'' blocks where $j = k$ in the moment equation \eqref{equ-moment}, given the current estimates of $\bbeta$ and $\Ssigma_B$. Specifically, we consider the following optimization problem:
\begin{equation}
\label{equ-opt-sigmae}
\min_{\Ssigma_{\epsilon}\succeq 0} \frac{1}{2} \sum_{i=1}^m \sum_{j = 1}^{n_i} \left\| (\y_{ij}- \hat{\bbeta}^\top\x_{ij})(\y_{ij}- \hat{\bbeta}^\top\x_{ij})^\top - (\bm{I}_d \otimes \bm{z}_{ij}^\top)\hat{\Ssigma}_B (\bm{I}_d \otimes \bm{z}_{ij}) - \Ssigma_{\epsilon} \right\|_F^2, 
\end{equation}
which is an unpenalized problem since we do not impose any structural assumptions on $\Ssigma_{\epsilon}$. Then, \eqref{equ-opt-sigmae} has a closed-form solution
\begin{equation}
    \hat{\Ssigma}_\epsilon = \mathrm{proj}_{\mathbb{S}^d_+} \left( \frac{1}{N} \sum_{i=1}^m \sum_{j=1}^{n_i} \left[(\y_{ij}-\hat{\bbeta}^\top \x_{ij})(\y_{ij}-\hat{\bbeta}^\top \x_{ij})^\top -(\I_d \otimes \z_{ij}^\top)\hat{\Ssigma}_{B}(\I_d \otimes \z_{ij})\right] \right),
    \nonumber
\end{equation}
where $N=\sum_{i=1}^m n_i$ is the total number of observations and $\mathrm{proj}_{\mathbb{S}^d_+}(\bm{M})$ corresponds to projecting a symmetric matrix $\bm{M} \in \mathbb{R}^{d \times d}$ onto the positive semidefinite cone $\mathbb{S}^d_+$ and admits a closed-form solution via eigen-decomposition
$\sum_{h=1}^d \max(v_h,0)\u_h \u_h^\top,$
where $\bm{M}=\sum_{h=1}^d v_h \u_h \u_h^\top$ is the eigen-decomposition of $\bm{M}$.

A class of stage-wise approaches to variable selection in mixed models proceeds by first projecting out the random effects in \eqref{model_eq} to estimate auxiliary quantities, such as $\bbeta$ \citep{wu2017new} or $\Ssigma_{\epsilon}$ \citep{peng2012model}, which are then used to construct an estimator of $\Ssigma_B$. Such procedures typically rely on strong conditions on the subject-specific sample sizes, in particular requiring $n_i > q$, to ensure identifiability of the intermediate estimators. In contrast, our method avoids these requirements by directly estimating $\Ssigma_B$.

\subsection{Initial estimation and refinement of fixed-effects coefficients}
\label{subsec-beta}
In Step 1 of Algorithm \ref{alg-5step}, a common choice for the initial estimate of $\bbeta$ is the ordinary least squares (OLS) estimator obtained by fitting the working homoscedastic multiresponse linear model $\y_{ij} = \x_{ij}^\top \bbeta + \eepsilon_{ij}$, effectively omitting the random effects from \eqref{model_eq}. 


Algorithm \ref{alg-5step} finishes with a refinement of the fixed-effects coefficients that can be considered. With the estimates of $\Ssigma_B$ and $\Ssigma_{\epsilon}$, covariance structures between and within responses can be estimated using \eqref{equ-moment}. Consequently, a natural approach to refine the estimate of $\bbeta$ is the feasible generalized least squares (FGLS) \citep{baltagi2008econometrics}. 

Consider the stack form of model \eqref{model_eq} on the subject level. Specifically, for any $i \in [m]$, we have $\bm{Y}_i = \left[\bm{y}_{i1}^\top,  \bm{y}_{i2}^\top, ..., \bm{y}_{in_i}^\top \right]^\top \in \mathbb{R}^{n_i \times d}$ and $\bm{X}_i = \left[\bm{x}_{i1}^\top,  \bm{x}_{i2}^\top, ..., \bm{x}_{in_i}^\top \right]^\top \in \mathbb{R}^{n_i \times p}$. By vectorizing the response matrix $\Y_i$, we have $\EE[\mathrm{vec}(\Y_i^\top)] = \mathrm{vec}(\X_i \bbeta)$ and
\begin{equation}
\label{equ-sigmai}
  \Ssigma_{i}:= \mathrm{Cov} [\mathrm{vec}(\Y_i^\top)] = \mathcal{Z}_i \Ssigma_B \mathcal{Z}_i^\top + \I_{n_i} \otimes \Ssigma_{\epsilon} \in \mathbb{R}^{n_i d \times n_i d},
\end{equation}
where $\mathcal{Z}_i = [\I_d \otimes \z_{i1}, ..., \I_d \otimes \z_{i n_i}]^\top \in \R^{n_i d \times qd}$. And $\Ssigma_i$ can be estimated by plugging in the estimates of $\Ssigma_B$ and $\Ssigma_{\epsilon}$ obtained in the previous steps.
Consequently, the penalized FGLS estimator of $\bbeta$ is given by solving the following optimization problem:
\begin{equation}\label{equ-fgls}
    \min_{\bbeta} \frac{1}{2N} \sum_{i=1}^m \left\| \hat{\Ssigma}_{i}^{-1/2} \mathrm{vec}(\Y_i^\top) - \hat{\Ssigma}_{i}^{-1/2} (\X_i \otimes \I_d) \mathrm{vec}(\bbeta^\top) \right\|_2^2 + \mathcal{P}_\beta(\bbeta),
\end{equation}
where the penalty function $\mathcal{P}_\beta$ can be chosen to promote various structural patterns in $\bbeta$ depending on the specific application and scientific context.
For instance, in the literature of multiresponse linear regression models, various structural patterns for $\bbeta$ have been considered \citep{liu2025estimation}, leading to different choices of $\mathcal{P}_\beta$. These include elementwise sparsity via an $\ell_1$ penalty \citep{rothman2010sparse}, row-wise sparsity via a group lasso penalty \citep{argyriou2008convex, obozinski2011support}, and low-rank structure via a nuclear norm penalty \citep{yuan2007dimension, chen2013reduced, gregory2022multivariate}, among many other possibilities and combinations thereof. 

Note that although both the estimated covariance matrices $\hat{\Ssigma}_B$ and $\hat{\Ssigma}_\epsilon$ are guaranteed positive semidefinite, the resulting $\hat{\Ssigma}_i$ is not necessarily positive definite and may still be singular, making \eqref{equ-fgls} ill-posed. 
A natural strategy to address this issue is to add a small ridge-type perturbation to the covariance matrix, which guarantees invertibility and improves numerical stability, i.e., 
\begin{equation}
    \min_{\bbeta} \frac{1}{2N} \sum_{i=1}^m \left\| (\hat{\Ssigma}_{i} + \gamma \I_{n_id})^{-1/2} \mathrm{vec}(\Y_i^\top) - (\hat{\Ssigma}_{i} + \gamma \I_{n_id})^{-1/2} (\X_i \otimes \I_d) \mathrm{vec}(\bbeta^\top) \right\|_2^2 + \mathcal{P}_\beta(\bbeta),
\end{equation}
for some small $\gamma > 0$. Note that this procedure can be viewed as a penalized FGLS estimator of $\bbeta$ where the responses are perturbed by an additional homoscedastic noise with a pre-specified variance $\gamma$. In practice, one can repeat the process multiple times and then aggregate the resulting estimates to reduce variance and stabilize variable selection. For the particular choice of $\mathcal{P}_\beta (\bbeta) = \tau \|\vvec(\bbeta^\top)\|_1$ for tuning parameter $\tau > 0$, this procedure coincides with the bootstrap lasso \citep{bach2008bolasso}, which has been shown to be effective in improving the stability of variable selection in high-dimensional regression problems.

\section{Theoretical analysis}
\label{sec-theory}
We begin by introducing notation useful for theoretical analysis. Let $N_i = \sum_{j\neq k}^{n_i} 1 = n_i(n_i - 1)$ and $M =  \sum_{i=1}^m N_i$. For a sub-Weibull($\alpha$) random variable $x$, define the associated Orlicz norm by $\|x\|_{\psi_\alpha} = \inf \left\{\varphi>0:\EE\left[\exp\left(\frac{|x|}{\varphi}\right)^\alpha\right] \leq 2\right\}$, and for a random vector $\X$ following a joint sub-Weibull($\alpha$) distribution, define its Orlicz norm as $\|\X\|_{J,\psi_{\alpha}} = \sup_{\|\eeta\|_2 = 1} \|\eeta^\top \X \|_{\psi_{\alpha}}$. Let $\Lambda_{\min}(\B)$ and $\Lambda_{\max}(\B)$ denote the minimum and maximum singular value of matrix $\B$, respectively. We denote the matrix $\ell_\infty/\ell_\infty$ operator norm by $\vertiii{\B}_{\infty} = \max_i \sum_j |\B_{ij}|$ and matrix $\ell_2/\ell_2$ operator norm (i.e., the spectral norm) by $\vertiii{\B}_2 = \Lambda_{\max}(\B)$. We write $a \vee b = \max\{a,b\}$. 
Finally, for sequences $f(n)$ and $g(n)$, we write $f(n) = \mathcal O(g(n))$ if there exists a constant $C>0$ such that $f(n) / g(n) \leq C$ as $n\rightarrow\infty$. For a deterministic positive sequence $r_n$, we write $X_n=\mathcal{O}_p(r_n)$ if $X_n/r_n$ is bounded in probability, i.e., for every $\epsilon>0$, there exist constants $M_\epsilon<\infty$ and $N_\epsilon$ such that $\PP(|X_n|>M_\epsilon r_n)<\epsilon$ for all $n\ge N_\epsilon$.

The remainder of this section is devoted to establishing finite-sample selection consistency for both random and fixed effects.

\subsection{Random-effects selection problem reformulation}
We establish sign selection consistency using the primal-dual witness (PDW) framework of \citep{wainwright2009sharp}, applied to an unconstrained surrogate of the random-effects selection problem \eqref{equ-random-sel}. Specifically, we remove the positive semidefinite (PSD) constraint and treat the fixed-effect component as known by replacing the working estimator $\hat \bbeta$ with the true $\bbeta^\star$. A full analysis of problem \eqref{equ-random-sel} with $\hat{\bbeta}$ plugged in would require additional conditions ensuring that the estimation error bound for $\hat{\bbeta}-\bbeta^\star$ in Step 1 is sufficiently small. For simplicity, we use the oracle $\bbeta^\star$ to isolate the random-effects support recovery mechanism induced by the diagonal penalty. Under these simplifications, the problem reduces to 
\begin{equation}
\label{equ-theory-original}
    \min_{\Ssigma_{B}} \frac{1}{2}\sum_{i=1}^m \sum_{\substack{j,k \\ j\neq k}}^{n_i} \left\| (\I_d \otimes \z_{ij}^\top) \Ssigma_{B} (\I_d \otimes \z_{ik})-(\y_{ij}-\bbeta^{\star\top} \x_{ij})(\y_{ik}-\bbeta^{\star\top} \x_{ik})^\top\right\|_F^2 + \lambda \|\W\cdot \diag(\Ssigma_B)\|_1.
\end{equation}
This formulation can be rewritten as a linear regression problem with an unpenalized nuisance parameter corresponding to the off-diagonal entries of $\Ssigma_B$. Specifically, let $\ttheta = \diag(\Ssigma_B) \in \R^{qd}$ and $\rrho = (\Ssigma_{B_{12}},...,\Ssigma_{B_{qd,qd-1}})^\top \in \R^{qd(qd-1)}$ collect the diagonal and off-diagonal entries of $\Ssigma_B$,  with corresponding true values $\ttheta^\star$ and $\rrho^\star$, respectively. Then \eqref{equ-theory-original} corresponds to the linear model:
\begin{equation} \label{equ-theory-linreg}
\r = \A_{\mathcal{I}} \ttheta^\star + \A_{\mathcal{I}^c} \rrho^\star + \xxi,
\end{equation}
where $\mathcal{I} = \{(j-1)qd+j : j \in [qd]\}$ indexes the diagonal entries of $\Ssigma_B$. For the $i$-th subject, $(j, k)$-th measurement pair, and the $(\ell, \ell')$-th response components, the response vector $\r \in \mathbb{R}^{Md^2}$ has entries $\r_{\pi(i,j,k,\ell,\ell')} = (y_{ij\ell}-{\bbeta^\star_{\ell}}^\top \x_{ij})(y_{ik\ell'}-{\bbeta^\star_{\ell'}}^\top \x_{ik})$, and the design matrix $\A\in\R^{Md^2\times q^2d^2}$ by rows $\A_{\pi(i,j,k,\ell,\ell'),\cdot} = [(\e_{\ell} \otimes \z_{ik}) \otimes (\e_{\ell'} \otimes \z_{ij})]^\top$, where $\pi$ is a bijective index mapping. 

Before proceeding, we highlight several technical challenges in applying the PDW framework.
Although the random error vector $\xxi$ has mean zero, its entries are neither Gaussian nor independent. From \eqref{model_eq}, each entry of $\r$ takes the form
$(\B_{i,\ell}^\top \z_{ij}+\epsilon_{ij\ell})
(\B_{i,\ell'}^\top \z_{ik}+\epsilon_{ik\ell'})$,
for $i\in[m]$, $j\neq k\in[n_i]$, and $\ell,\ell'\in[d]$.
These products of random variables are generally non-sub-Gaussian, with tail behavior determined by, and typically heavier than, that of the underlying random effects and random error.
Moreover, the dependence structure is nonstandard: while observations across subjects are independent, within each subject the entries indexed by $(j,k,\ell,\ell')$ are dependent. 
This blockwise independence structure among $\xxi = (\xxi_1^\top, \ldots, \xxi_m^\top)^\top$ distinguishes our setting from the classical lasso setting.

Since the off-diagonal component $\rrho$ is unpenalized, it can be profiled out via orthogonal projection. We further assume $\W$ is the identity matrix, since the general case can be reduced to a standard lasso formulation through an appropriate rescaling of the covariates \citep{buhlmann2011statistics}. Consequently, \eqref{equ-theory-original} is equivalent to the lasso problem
\begin{equation}
\label{equ-theory-lasso}
    \min_{\ttheta} \frac{1}{2} \|\mathcal{Y} - \mathcal{X}\ttheta\|_2^2 + \lambda\|\ttheta\|_1,
\end{equation}
where $ \mathcal{Y} = \P_{\A_{\mathcal{I}^c}^\perp}\r$, $\mathcal{X} = \P_{\A_{\mathcal{I}^c}^\perp} \A_{\mathcal{I}}$, and $\P_{\A_{\mathcal{I}^c}^\perp} = \I - \A_{\mathcal{I}^c}({\A^\top_{\mathcal{I}^c}} \A_{\mathcal{I}^c})^{+}{\A^\top_{\mathcal{I}^c}} \in \R^{Md^2\times Md^2}$ denotes the projection onto the orthogonal complement of $\A_{\mathcal{I}^c}$.

\subsection{Random-effects selection consistency} \label{sec-theory-ran}
Denote $S=\mathrm{supp}(\ttheta^\star) = \{j:\theta_j^{\star} \neq 0\}$ as the support of $\ttheta^\star$ and $s = |S|$ as the cardinality of set $S$. The following assumptions are used for proving random-effects selection consistency:

\begin{assumption}[Irrepresentability]
\label{a1}
There exists $\nu \in (0,1]$ such that
$$\vertiii{\mathcal{X}^\top_{S^c} \mathcal{X}_S (\mathcal{X}_S^\top \mathcal{X}_S)^{-1}}_{\infty} \leq 1-\nu.$$
\end{assumption}

\begin{assumption}[Minimum eigenvalue]
\label{a2}
There exists $C_{\min}>0$ such that
$$\Lambda_{\min} \left(\frac{1}{Md^2}\mathcal{X}_S^\top \mathcal{X}_S \right) \geq C_{\min}.$$
\end{assumption}

\begin{assumption}[Tail behavior]
\label{a3}
The random effects $\vvec(\B_{i})$ and random error $\eepsilon_{ij}$ are both joint sub-Weibull($\alpha$) random vectors with $\alpha > 0$. 
\end{assumption}

\begin{theorem}[Random-effects selection consistency]
\label{thm-sel-consist}
Assume Assumption \ref{a1} - \ref{a3} hold and that the columns are normalized so that $(Md^2)^{-\frac{1}{2}} \max_{j \in S^c} \|\A_{\mathcal{I}_j}\| \leq 1$. 
For any $\delta>0,$ suppose
$$\lambda \geq \frac{2d\sqrt{M} \max_i K_i}{c \nu}\max\left\{\sqrt{\log\frac{2(qd-s)}{\delta}}, \left(\log\frac{2(qd-s)}{\delta}\right)^{2/\alpha}\right\},$$
where $K_i = \|\xxi_i\|_{J,\psi_{\alpha/2}} = \sup_{\|\eeta\|_2 = 1} \|\eeta^\top \xxi_i \|_{\psi_{\alpha/2}}$ is the Orlicz norm of joint sub-Weibull($\frac{\alpha}{2}$) random vector $\xxi_i,$ $c$ is some constant, $\nu$ and $C_{\min}$ are constants from assumption \ref{a1} and \ref{a2} respectively. Then the following properties hold with probability greater than $1-\delta$:
\begin{enumerate}
    \item [(a)] The problem \eqref{equ-theory-lasso} has a unique solution $\hat{\ttheta} \in \R^{qd}$ with its support contained within the true support (i.e. $\mathrm{supp}(\hat{\ttheta}) \subseteq \mathrm{supp}(\ttheta^\star)$), and 
\begin{equation*}
        \|\hat{\ttheta}_S - \ttheta^\star_S\|_{\infty} \leq \underbrace{\frac{\lambda}{Md^2} \bigvertiii{\left(\frac{1}{Md^2}\mathcal{X}_S^\top \mathcal{X}_S \right)^{-1}}_{\infty} + \frac{\max_i K_i}{c \sqrt{C_{\min} M} d}\max \left\{\sqrt{\log\frac{2s}{\delta}}, \left(\log\frac{2s}{\delta}\right)^{2/\alpha}\right\}}_{f(\lambda)}.
\end{equation*}
    \item [(b)] If $\min_{j \in S} |\theta_j^\star| > f(\lambda)$, then $\hat{\ttheta}$ recovers the correct signed support.
\end{enumerate}
\end{theorem}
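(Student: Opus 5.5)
We follow the primal-dual witness (PDW) construction of \citet{wainwright2009sharp} for the lasso \eqref{equ-theory-lasso}. The only substantive change from the Gaussian analysis is the concentration step for the projected noise $\P_{\A_{\mathcal{I}^c}^\perp}\xxi$. Since $\A_{\mathcal{I}^c}\rrho^\star$ is annihilated by the projection, we have $\mathcal{Y} = \mathcal{X}\ttheta^\star + \tilde\xxi$ with $\tilde\xxi = \P_{\A_{\mathcal{I}^c}^\perp}\xxi$.

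\textbf{Oracle problem and deterministic PDW conditions.} First solve the restricted problem $\min_{\ttheta_{S^c}=0}$. Assumption~\ref{a2} makes $\mathcal{X}_S^\top\mathcal{X}_S$ invertible, so this problem is strictly convex and has a unique solution $\hat\ttheta_S$ with subgradient $\hat{\bm z}_S$. Define the dual variable on $S^c$ from the KKT equations. The standard algebra gives
\[
\hat{\bm z}_{S^c} = \mathcal{X}_{S^c}^\top\mathcal{X}_S(\mathcal{X}_S^\top\mathcal{X}_S)^{-1}\hat{\bm z}_S + \lambda^{-1}\mathcal{X}_{S^c}^\top\big(\I-\P_{\mathcal{X}_S}\big)\tilde\xxi
\]
and
\[
\hat\ttheta_S-\ttheta^\star_S = (\mathcal{X}_S^\top\mathcal{X}_S)^{-1}\big(\mathcal{X}_S^\top\tilde\xxi - \lambda\hat{\bm z}_S\big).
\]
By Assumption~\ref{a1}, the first term of $\hat{\bm z}_{S^c}$ has $\ell_\infty$ norm at most $1-\nu$. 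Strict dual feasibility, $\|\hat{\bm z}_{S^c}\|_\infty<1$, therefore holds whenever $\max_{j\in S^c}|\mathcal{X}_j^\top(\I-\P_{\mathcal{X}_S})\tilde\xxi| < \nu\lambda/2$ (up to constants). Together with the invertibility of $\mathcal{X}_S^\top\mathcal{X}_S$, this yields uniqueness and $\mathrm{supp}(\hat\ttheta)\subseteq S$ by the usual PDW lemma. This proves part (a) except for the $\ell_\infty$ bound.

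\textbf{Concentration of linear forms in $\xxi$.} Every quantity to control has the form $\bm v^\top\xxi=\sum_{i=1}^m \bm v_i^\top\xxi_i$. Here $\bm v = \P_{\A_{\mathcal{I}^c}^\perp}(\I-\P_{\mathcal{X}_S})\mathcal{X}_j$ for the dual-feasibility terms, and $\bm v$ is a row of $(\mathcal{X}_S^\top\mathcal{X}_S)^{-1}\mathcal{X}_S^\top\P_{\A_{\mathcal{I}^c}^\perp}$ for the estimation terms. The blocks $\xxi_i$ are independent across subjects and mean zero. Each entry of $\xxi_i$ is a product of two linear forms in $(\vvec(\B_i),\eepsilon_{i\cdot})$, which are joint sub-Weibull($\alpha$) by Assumption~\ref{a3}. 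The product of two sub-Weibull($\alpha$) variables is sub-Weibull($\alpha/2$), so $\xxi_i$ is joint sub-Weibull($\alpha/2$) with norm $K_i$, and $\|\bm v_i^\top\xxi_i\|_{\psi_{\alpha/2}}\le \|\bm v_i\|_2K_i$. We then apply a Bernstein-type inequality for sums of independent sub-Weibull variables (e.g.\ Kuchibhotla--Chakrabortty, with the $\max\{\sqrt{t},t^{2/\alpha}\}$ form after scaling):
\[
|\bm v^\top\xxi| \le C\|\bm v\|_2\max_iK_i\max\{\sqrt{t},t^{2/\alpha}\}
\]
with probability at least $1-2e^{-t}$.

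\textbf{Bounding $\|\bm v\|_2$ and completing the proof.} Projections are contractions, so for $j\in S^c$ we get $\|\bm v\|_2\le\|\A_{\mathcal{I}_j}\|\le\sqrt{M}d$ by the column normalization. A union bound over $qd-s$ indices with $t=\log(2(qd-s)/\delta)$, combined with the stated lower bound on $\lambda$, gives dual feasibility with probability $1-\delta/2$. For the estimation terms, each row of $(\mathcal{X}_S^\top\mathcal{X}_S)^{-1}\mathcal{X}_S^\top$ has norm at most $\Lambda_{\min}(\mathcal{X}_S^\top\mathcal{X}_S)^{-1/2}\le (C_{\min}Md^2)^{-1/2}$. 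A union bound over $s$ coordinates gives the second term of $f(\lambda)$. The deterministic term $\lambda\|(\mathcal{X}_S^\top\mathcal{X}_S)^{-1}\hat{\bm z}_S\|_\infty$ is at most $\frac{\lambda}{Md^2}\vertiii{(\frac{1}{Md^2}\mathcal{X}_S^\top\mathcal{X}_S)^{-1}}_\infty$. Combining the two events gives part (a) with probability at least $1-\delta$. For part (b), if $\min_{j\in S}|\theta^\star_j|>f(\lambda)$, then $\hat\theta_j$ has the same sign as $\theta_j^\star$ for every $j\in S$.

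The main obstacle is the concentration step. We must verify carefully that $\xxi_i$ is \emph{jointly} sub-Weibull($\alpha/2$), since the product structure couples coordinates within a subject. We must also fix a Bernstein-type sub-Weibull inequality that gives exactly the $\max\{\sqrt{\cdot},(\cdot)^{2/\alpha}\}$ form with the $\max_iK_i$ and $\ell_2$ weighting, tracking how the $\ell_\infty$-type term $\max_i\|\bm v_i\|_2$ is dominated by $\|\bm v\|_2$.
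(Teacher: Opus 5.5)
Your proposal is correct and follows essentially the same route as the paper. The paper also runs the PDW construction, splits the dual variable into an irrepresentability-controlled part bounded by $1-\nu$ plus a noise part, and applies Theorem 3.1 of \citet{kuchibhotla2022moving} to the independent blocks $\xxi_i$ with the weight bounds $\|\g^{(j)}\|_2\le \sqrt{M}d/\lambda$ and $\|\h^{(k)}\|_2^2\le (C_{\min}Md^2)^{-1}$, followed by union bounds over $S^c$ and $S$. Your explicit $\delta/2+\delta/2$ split, with the extra $\log 2$ absorbed into $c$, is slightly more careful than the paper, which combines two events each holding with probability at least $1-\delta$ without adjusting.
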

Assumption \ref{a1} and \ref{a2} are the standard in lasso literature \citep{wainwright2009sharp}. Assumption~\ref{a1} is the irrepresentability condition for the projected design matrix $\mathcal{X} = \P_{\A_{\mathcal{I}^c}^\perp} \A_{\mathcal{I}}$, which guarantees that, after removing the nuisance contribution associated with the off-diagonal entries of $\Ssigma_B$, the inactive diagonal coordinates are not too strongly correlated with the active ones. This condition is exactly what allows us to verify strict dual feasibility in the PDW construction. Assumption~\ref{a2}, in turn, is a restricted positive-definiteness condition on the active set. It ensures that the normalized Gram matrix $(Md^2)^{-1}\mathcal{X}_S^\top \mathcal{X}_S$ is well conditioned and guarantees uniqueness of the PDW construction.

Assumption~\ref{a3} controls the tail behavior of the projected noise in \eqref{equ-theory-linreg}. The sub-Weibull family \citep{kuchibhotla2022moving} includes sub-Gaussian ($\alpha \geq 2$), sub-exponential ($\alpha = 1$), and heavier-tailed distributions ($0 < \alpha < 1$). 
The tail parameter $\alpha$ determines the magnitude of the regularization parameter $\lambda$, which controls the stochastic error in the PDW analysis and therefore influences the convergence rate of the estimator.
For simplicity, we assume the same $\alpha$ for the random effects and the random errors, although the argument extends routinely to different tail parameters.

By focusing on a realistic setting of $0 < \alpha \leq 4$ for the rest of this section, with 
\begin{align}
\lambda \ge 
\frac{2\sqrt{M}d\,\max_i K_i}{c\nu}
\left(\log\frac{2 [s \vee (qd-s)]}{\delta}\right)^{2/\alpha},
\label{eq:lambda}
\end{align}
the $\ell_\infty$-bound in Theorem~\ref{thm-sel-consist}(a) can be simplified as
$$\|\hat{\ttheta}_S-\ttheta_S^\star\|_\infty
\leq 
\frac{\lambda}{Md^2}
\left(
\bigvertiii{\left(\frac{1}{Md^2}\mathcal{X}_S^\top \mathcal{X}_S\right)^{-1}}_\infty
+
\frac{\nu}{\sqrt{C_{\min}}}
\right).$$
We finish this subsection with the following corollary, which makes this bound more explicit by quantifying the Orlicz norms $K_i$ (see Supplementary Material~\ref{appendix-corollary-rate} for details). 
\begin{corollary}
\label{corollary-rate}
    Assume in addition that there exists constants $C_z, \sigma_{\epsilon} > 0$ such that $\max_{ij \ell}\|\epsilon_{ij\ell}\|_{\psi_{\alpha}} \leq \sigma_{\epsilon}$ and $\max_{i,j} \|\z_{ij}\|_2 \leq C_z.$ Let $\kappa = \max_{i,\ell}\|\pphi_{i,\ell}\|_{J,\psi_{\alpha}}$, where $\B_{i,\ell} = \Ssigma_B^{\star(\ell,\ell)^{1/2}} \pphi_{i,\ell}$ and $\Ssigma_B^{\star (\ell,\ell)}$ denotes the true covariance matrix of $\B_{i,\ell}$ as in \eqref{eq:covB}. With $\lambda$  in \eqref{eq:lambda}, it holds with probability at least $1 - \delta$ that
    \begin{align}
      \|\hat{\ttheta}_S-\ttheta_S^\star\|_\infty \leq
      \begin{cases}\displaystyle
        \frac{\tilde{C}\kappa^2 \max_{\ell} \vertiii{\Ssigma_B^{\star (\ell,\ell)}}_2 \left(\log [s \vee (qd-s)]\right)^{2/\alpha}}{\sqrt{M / \max_i N_i}}, \quad &  2 \leq \alpha \leq 4, \\[1.2em]\displaystyle
       \frac{\tilde{C}\kappa^2 \max_{\ell} \vertiii{\Ssigma_B^{\star (\ell,\ell)}}_2 d^{4/\alpha - 2} \left(\log [s \vee (qd-s)]\right)^{2/\alpha}}{\sqrt{M/(\max_i N_i)^{4/\alpha - 1}}}, \quad & 0 < \alpha < 2 
      \end{cases}
      \label{equ-corollary-rate}
    \end{align}
    for some universal constant $\tilde{C} > 0$.
\end{corollary}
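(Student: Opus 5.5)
The plan is to start from the simplified $\ell_\infty$ bound stated just before the corollary. Under \eqref{eq:lambda}, with $0<\alpha\le 4$, it gives
$$\|\hat{\ttheta}_S-\ttheta_S^\star\|_\infty\le \frac{\lambda}{Md^2}\Big(\bigvertiii{(\tfrac{1}{Md^2}\mathcal{X}_S^\top\mathcal{X}_S)^{-1}}_\infty+\tfrac{\nu}{\sqrt{C_{\min}}}\Big).$$
I take $\lambda$ equal to the right side of \eqref{eq:lambda}. The matrix factor in parentheses, together with $c,\nu,\delta$ and the factor $2$, is treated as a constant absorbed into $\tilde C$. This leaves
$$\|\hat{\ttheta}_S-\ttheta_S^\star\|_\infty\lesssim \frac{\max_i K_i}{\sqrt{M}\,d}\,\big(\log[s\vee(qd-s)]\big)^{2/\alpha}.$$
All remaining work is an explicit bound on $K_i=\|\xxi_i\|_{J,\psi_{\alpha/2}}$ in terms of $\kappa$, $\max_\ell\vertiii{\Ssigma_B^{\star(\ell,\ell)}}_2$, $C_z$, $\sigma_\epsilon$, $N_i$ and $d$.

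\textbf{Bounding the linear functional.} Fix a unit vector $\eeta\in\R^{N_id^2}$. Then
$$\eeta^\top\xxi_i=\sum_{j\ne k}\sum_{\ell,\ell'}\eta_{jk\ell\ell'}\big(u_{ij\ell}u_{ik\ell'}-\EE[u_{ij\ell}u_{ik\ell'}]\big),\qquad u_{ij\ell}=\B_{i,\ell}^\top\z_{ij}+\epsilon_{ij\ell}.$$
The projection $\P_{\A_{\mathcal I^c}^\perp}$ is a contraction, so it only changes $\eeta$ to another vector of norm at most one. For each $u_{ij\ell}$ I use $\B_{i,\ell}^\top\z_{ij}=\pphi_{i,\ell}^\top\Ssigma_B^{\star(\ell,\ell)1/2}\z_{ij}$. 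This gives $\|u_{ij\ell}\|_{\psi_\alpha}\lesssim \kappa\, C_z\max_\ell\vertiii{\Ssigma_B^{\star(\ell,\ell)}}_2^{1/2}+\sigma_\epsilon$. The noise contribution is treated as lower order, or absorbed by normalizing $\sigma_\epsilon$ relative to the signal. The product rule for Orlicz norms, $\|XY\|_{\psi_{\alpha/2}}\le\|X\|_{\psi_\alpha}\|Y\|_{\psi_\alpha}$, then bounds each summand by $\kappa^2\max_\ell\vertiii{\Ssigma_B^{\star(\ell,\ell)}}_2$, up to constants; centering costs only a constant.

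\textbf{Summing the terms: the main obstacle.} I split by tail regime.

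For $\alpha/2\ge 1$, i.e.\ $2\le\alpha\le4$, $\|\cdot\|_{\psi_{\alpha/2}}$ is a genuine norm. The triangle inequality and Cauchy--Schwarz give
$$\|\eeta^\top\xxi_i\|_{\psi_{\alpha/2}}\le\|\eeta\|_1\max\|\cdot\|\le\sqrt{N_id^2}\,\kappa^2\max_\ell\vertiii{\Ssigma_B^{\star(\ell,\ell)}}_2.$$
Hence $K_i\lesssim d\sqrt{N_i}\,\kappa^2\max_\ell\vertiii{\cdot}_2$. Substituting into the bound above cancels $d$ and yields $\sqrt{\max_iN_i/M}$, which is the first case of \eqref{equ-corollary-rate}.

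For $\alpha<2$, $\psi_{\alpha/2}$ is only a quasi-norm. I would use the known sub-Weibull triangle inequality of \citet{kuchibhotla2022moving}: for $\beta=\alpha/2<1$, $\|\sum_t a_tX_t\|_{\psi_\beta}\lesssim \big(\sum_t|a_t|^\beta\big)^{1/\beta}\max_t\|X_t\|_{\psi_\beta}$, possibly with a constant depending on $\beta$. Hölder's inequality gives $\big(\sum|a_t|^{\beta}\big)^{1/\beta}\le n^{1/\beta-1/2}\|a\|_2$ with $n=N_id^2$. This produces $K_i\lesssim (N_id^2)^{2/\alpha-1/2}\kappa^2\max_\ell\vertiii{\cdot}_2$. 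Dividing by $\sqrt M d$ gives $d^{4/\alpha-2}\,N_i^{2/\alpha-1/2}/\sqrt M=d^{4/\alpha-2}/\sqrt{M/N_i^{4/\alpha-1}}$, matching the second case.

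The dependence within a subject (shared $\B_i$ across $j,k$) is harmless, because only crude triangle-type bounds are used and no independence is needed. Checking that the quasi-triangle constant is free of $N_i$ is the delicate point. If a direct citation is unavailable, I would pass through moment characterizations, $\|X\|_{\psi_\beta}\asymp\sup_p p^{-1/\beta}\|X\|_p$, and apply Minkowski's inequality in $L_p$ to get the stated power.
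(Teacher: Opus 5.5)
Your proposal is correct and follows the paper's proof essentially step for step. You plug $\lambda$ from \eqref{eq:lambda} into the bound of Theorem~\ref{thm-sel-consist}, control each entry of $\xxi_i$ via $\|XY\|_{\psi_{\alpha/2}}\le\|X\|_{\psi_\alpha}\|Y\|_{\psi_\alpha}$ plus centering, and then sum using the triangle inequality with Cauchy--Schwarz for $\alpha\ge 2$ and a $\zeta$-power quasi-triangle inequality with H\"older for $\alpha<2$. The ``delicate point'' you flag is settled in the paper by a self-contained lemma, $\|\sum_h a_hX_h\|_{\psi_\zeta}\le(\sum_h|a_h|^\zeta\|X_h\|_{\psi_\zeta}^\zeta)^{1/\zeta}$ for $0<\zeta<1$ with constant one (via subadditivity of $x\mapsto x^\zeta$ and Jensen for $\exp$), so no citation is needed; your moment-characterization/Minkowski fallback also works, and it in fact gives the smaller factor $\|\eeta\|_1\le\sqrt{N_i}\,d$ (up to an $\alpha$-dependent constant), which implies the stated bound a fortiori.
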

The rates depend on the conditioning of the true covariance matrix $\Ssigma_B^\star$ through $\max_{\ell} \vertiii{\Ssigma_B^{\star (\ell,\ell)}}_2$, the largest spectral norm among the diagonal blocks of $\Ssigma_B^\star$. This quantity plays a similar role to the noise level in standard lasso results. With bounded entries of $\Ssigma_B^\star$, this quantity scales linearly with the maximum number of relevant random effects across responses.
Although $\kappa$ may in principle grow with the ambient dimension $q$, it remains constant in important special cases. In particular, when $\B_{i,\ell}$ is Gaussian, $\pphi_{i,\ell}$ is standard Gaussian, so $\kappa$ is a universal constant.

The rates in \eqref{equ-corollary-rate} exhibit a phase transition at $\alpha = 2$. The reason is technical yet important: when $\alpha < 2$, the functional $\|\cdot\|_{\psi_{\alpha/2}}$ is only a quasi-norm, so the usual triangle inequality is no longer available to control $K_i$. As a result, the light-tailed and heavy-tailed regimes yield qualitatively different rates. For $\alpha \geq 2$, the bound resembles the familiar sub-Gaussian with a logarithmic dependence on the dimension $qd$, while for $0 < \alpha < 2$, the rate deteriorates and acquires an additional polynomial dependence on $d$.

This distinction is also reflected in the effective sample size. When the random effects and random errors are sub-Gaussian or have lighter tails (i.e., $\alpha \geq 2$), the effective sample size is of order $M / \max_i N_i$. In the balanced setting, where all $n_i$'s are equal, this is of the same order of $m$, the number of subjects, leading to the standard rate of $m^{-1/2}$, which is consistent with existing results for covariance estimation in repeated measurements \citep{duan2024estimating}. 
In contrast, for heavier tails (with $0 < \alpha < 2$), the effective sample size deteriorates to $M / (\max_i N_i)^{4/\alpha - 1}$. Thus, heavier tails reduce the amount of usable information contributed by each subject-specific block. 

Finally, taking $\alpha \geq 2$ as an example, consider the lasso problem \eqref{equ-theory-lasso} under the homoskedastic model $\mathcal{Y} = \mathcal{X} \ttheta + \bm{\varepsilon}$, where entries of $\bm{\varepsilon} \in \mathbb{R}^{Md^2}$ are i.i.d. sub-Weibull ($\alpha$) random errors. A standard $\ell_\infty$-bound in this setting is 
$ \sigma M^{-1/2} d^{-1} (\log (s \vee (qd - s))^{1/\alpha}$, where $\sigma$ is the Orlicz norm of entries of $\bm{\varepsilon}$.
In comparison, the rate established in \eqref{equ-corollary-rate} is strictly slower. This deterioration stems from the two challenges discussed earlier. First, the moment equations involve products of sub-Weibull($\alpha$) random variables, which exhibit heavier tails and thus lead to a larger numerator in \eqref{equ-corollary-rate}. Second, accounting for both the dependence induced by repeated measurements within each subject and the dependence across $d$ responses further worsens the denominator.
We do not claim optimality of the rate in \eqref{equ-corollary-rate}. Whether this rate can be improved, and more generally, the minimax rate for this problem, remain important directions for future research.

\subsection{Fixed-effects selection consistency}
We next establish fixed-effects selection consistency for the FGLS estimator in \eqref{equ-fgls} with the entrywise penalty
$\mathcal{P}_\beta(\bbeta)=\tau \|\mathrm{vec}(\bbeta^\top)\|_1$ with tuning parameter $\tau > 0$. The heterogeneous covariance estimator $\hat{\Ssigma}_i$ in \eqref{equ-sigmai} is constructed from the refitted covariance estimators $\hat{\Ssigma}_B$ and $\hat{\Ssigma}_{\epsilon}$ obtained in Steps~3--4 of Algorithm~\ref{alg-5step}. 
Recall that the corresponding linear model is
$$\vvec(\Y_i^\top) = (\X_i \otimes \I_d) \vvec(\bbeta^{\star \top}) + \bm \varepsilon_i,$$
where $\EE(\bm \varepsilon_i) = \bm 0$ and $\text{Cov}(\bm \varepsilon_i) = \Ssigma_i^\star.$ Let $\Stilde = \mathrm{supp}[\vvec(\bbeta^{\star\top})] = \{j: \vvec(\bbeta^{\star\top})_j \neq 0\}$ denote the support of $\vvec(\bbeta^{\star \top})$, and $\stilde = |\Stilde|$. Define the population Gram matrix $\Ggamma = \frac{1}{N}\sum_{i=1}^m (\X_i \otimes \I_d)^\top \Ssigma_i^{\star^{-1}} (\X_i \otimes \I_d) \in \R^{pd \times pd}.$ Finally, set $\o_i = \Ssigma_i^{\star^{-1/2}} \bm \varepsilon_i \in \R^{n_i d}$, so that $\o_i$'s are independent decorrelated noise vectors with zero mean and identity covariance. We impose the following assumptions.
\begin{assumption}
    \label{a-fix2}
    There exists a constant $\nu_x \in (0,1]$ such that
    $$\bigvertiii{\Ggamma_{\Stilde^c \Stilde} \Ggamma_{\Stilde \Stilde}^{-1}}_{\infty} \leq 1 - \nu_x.$$
\end{assumption}

\begin{assumption}
    \label{a-fix3}
    There exists a constant $C_e> 0$ such that 
    $$\bigvertiii{\Ggamma_{\Stilde \Stilde}^{-1}}_\infty \leq C_e^{-1}$$
\end{assumption}

\begin{assumption}
    \label{a-fix4}
    The random vectors $\o_i$'s are joint sub-Weibull($\alpha$) with $\alpha > 0.$
\end{assumption}

\begin{theorem}[Fixed-effects selection consistency]
    \label{thm-fixed-sel-consistency}
    Suppose that the columns are standardized such that $\max_{j \in [pd]} \frac{1}{N} \sum_{i=1}^m \|\hat \Ssigma_i^{-1/2} (\X_i \otimes \I_d)_j\|_2^2 \leq 1$, where $(\X_i \otimes \I_d)_j \in \R^{n_i d}$ is the $j$-th column of $\X_i \otimes \I_d$. Furthermore, suppose
    $\max_i \vertiii{\hat \Ssigma_i^{1/2} (\hat{\Ssigma}_i^{-1} - \Ssigma_i^{\star^{-1}}) \hat \Ssigma_i^{1/2}}_2 \leq a_m$ holds with probability at least $1 - \delta_{\Sigma}$,
    where $a_m \leq \frac{\sqrt{5} - 1}{2}$ and $\stilde a_m C_e^{-1} \leq \frac{\nu_x}{8}$. Then for any $\delta_{\beta} > 0$, suppose
    \begin{equation*}
        \tau \geq \frac{8 (1+a_m) \max_i \tilde{K}_i}{c \nu_x \sqrt{N}}\max \left\{\sqrt{\log \frac{2pd}{\delta_{\beta}}}, \left(\log \frac{2pd}{\delta_{\beta}}\right)^{1/\alpha}\right\},
    \end{equation*}
    where $\tilde{K}_i = \|\o_i\|_{J,\psi_{\alpha}} = \sup_{\|\eeta\|_2 = 1} \|\eeta^\top \o_i\|_{\psi_{\alpha}}$ is the Orlicz norm of the joint sub-Weibull($\alpha$) vector $\o_i.$ Then with probability greater than $1-\delta_{\Sigma} - \delta_{\beta}$, the following properties hold:
    \begin{enumerate}
    \item [(a)] The problem \eqref{equ-fgls} with $\mathcal{P}_\beta (\bbeta) =\tau  \|\mathrm{vec}(\bbeta^\top)\|_1$ has a unique solution $\hat{\bbeta} \in \R^{p\times d}$ whose support contained in the true support, i.e. $\mathrm{supp}(\hat{\bbeta}) \subseteq \mathrm{supp}(\bbeta^\star)$, and 
    \begin{equation*}
        \|\mathrm{vec}(\hat{\bbeta}^\top)_{\Stilde} - \mathrm{vec}(\bbeta^{\star \top})_{\Stilde}\|_{\infty} \leq \underbrace{2 (1 + \frac{\nu_x}{8}) \tau / C_e}_{\tilde f(\tau)}.
    \end{equation*}
    \item [(b)] If $\min_{j \in \Stilde} |\mathrm{vec}(\bbeta^{\star \top})_j| > \tilde f(\tau)$, then $\hat{\bbeta}$ recovers the correct signed support.
\end{enumerate}
\end{theorem}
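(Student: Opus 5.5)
We would prove Theorem~\ref{thm-fixed-sel-consistency} by the primal--dual witness construction of \citet{wainwright2009sharp}, applied to the weighted lasso \eqref{equ-fgls}. The difficulty is that the design is weighted by the estimated $\hat\Ssigma_i^{-1}$, whereas Assumptions~\ref{a-fix2}--\ref{a-fix3} are stated for the population Gram matrix $\Ggamma$ built from $\Ssigma_i^{\star-1}$.

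\textbf{Notation.} Write $\tilde\X_i=\X_i\otimes\I_d$, $\bm b=\vvec(\bbeta^\top)$, and
\[
\hat\Ggamma=\frac1N\sum_i\tilde\X_i^\top\hat\Ssigma_i^{-1}\tilde\X_i .
\]
The stationarity condition is $\hat\Ggamma(\hat{\bm b}-\bm b^\star)-\bm w+\tau\hat{\bm g}=0$, where
\[
\bm w=\frac1N\sum_i\tilde\X_i^\top\hat\Ssigma_i^{-1}\bm\varepsilon_i=\frac1N\sum_i\tilde\X_i^\top\hat\Ssigma_i^{-1}\Ssigma_i^{\star1/2}\o_i
\]
and $\hat{\bm g}\in\partial\|\hat{\bm b}\|_1$.

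\textbf{Witness construction.} We solve the restricted problem on $\Stilde$ to obtain $\hat{\bm b}_{\Stilde}$, set $\hat{\bm b}_{\Stilde^c}=0$, and solve for the dual vector
\[
\hat{\bm g}_{\Stilde^c}=\tau^{-1}\Bigl[\bm w_{\Stilde^c}-\hat\Ggamma_{\Stilde^c\Stilde}\hat\Ggamma_{\Stilde\Stilde}^{-1}\bigl(\bm w_{\Stilde}-\tau\hat{\bm g}_{\Stilde}\bigr)\Bigr].
\]
The proof then reduces to three claims, all on the event of probability $\ge1-\delta_\Sigma$ where the covariance bound $a_m$ holds.
\begin{enumerate}
\item[(i)] $\hat\Ggamma_{\Stilde\Stilde}$ is invertible and $\vertiii{\hat\Ggamma_{\Stilde\Stilde}^{-1}}_\infty$ is controlled by $C_e^{-1}$ up to a constant.
\item[(ii)] The perturbed irrepresentability bound $\vertiii{\hat\Ggamma_{\Stilde^c\Stilde}\hat\Ggamma_{\Stilde\Stilde}^{-1}}_\infty\le1-\nu_x/2$ holds.
\item[(iii)] $\|\bm w\|_\infty\le\nu_x\tau/8$ with probability at least $1-\delta_\beta$.
\end{enumerate}
Given (i)--(iii), strict dual feasibility $\|\hat{\bm g}_{\Stilde^c}\|_\infty<1$ follows, which gives uniqueness and $\mathrm{supp}(\hat{\bbeta})\subseteq\mathrm{supp}(\bbeta^\star)$. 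The $\ell_\infty$ error
\[
\|\hat{\bm b}_{\Stilde}-\bm b^\star_{\Stilde}\|_\infty\le\vertiii{\hat\Ggamma_{\Stilde\Stilde}^{-1}}_\infty\bigl(\|\bm w_{\Stilde}\|_\infty+\tau\bigr)
\]
then yields $2(1+\nu_x/8)\tau/C_e$ once (i) supplies the factor $2$. Part (b) is the usual sign argument.

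\textbf{Main obstacle: claims (i)--(ii).} We control $\hat\Ggamma-\Ggamma$ through
\[
\hat\Ssigma_i^{-1}-\Ssigma_i^{\star-1}=\hat\Ssigma_i^{-1/2}\Bigl[\hat\Ssigma_i^{1/2}\bigl(\hat\Ssigma_i^{-1}-\Ssigma_i^{\star-1}\bigr)\hat\Ssigma_i^{1/2}\Bigr]\hat\Ssigma_i^{-1/2}.
\]
Each entry $(\hat\Ggamma-\Ggamma)_{jk}$ is an average of bilinear forms in the normalized columns $\hat\Ssigma_i^{-1/2}(\tilde\X_i)_j$. By Cauchy--Schwarz and the column standardization, every entry is bounded by $a_m$, so on $\Stilde$ we get $\vertiii{(\hat\Ggamma-\Ggamma)_{\cdot\Stilde}}_\infty\le\stilde a_m$.

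Write $\Delta=\hat\Ggamma_{\Stilde\Stilde}-\Ggamma_{\Stilde\Stilde}$. A Neumann series gives
\[
\vertiii{\hat\Ggamma_{\Stilde\Stilde}^{-1}}_\infty\le\frac{C_e^{-1}}{1-\stilde a_mC_e^{-1}}\le\frac{C_e^{-1}}{1-\nu_x/8}.
\]
For (ii) we decompose
\[
\hat\Ggamma_{\Stilde^c\Stilde}\hat\Ggamma_{\Stilde\Stilde}^{-1}=\Ggamma_{\Stilde^c\Stilde}\Ggamma_{\Stilde\Stilde}^{-1}+(\hat\Ggamma-\Ggamma)_{\Stilde^c\Stilde}\hat\Ggamma_{\Stilde\Stilde}^{-1}-\Ggamma_{\Stilde^c\Stilde}\Ggamma_{\Stilde\Stilde}^{-1}\Delta\,\hat\Ggamma_{\Stilde\Stilde}^{-1}.
\]
Each perturbation term is $O(\stilde a_mC_e^{-1})\le O(\nu_x/8)$, which gives (ii). The condition $a_m\le(\sqrt5-1)/2$ should enter where the identity-scale perturbation must stay below $1$, e.g.\ via $a_m(1+a_m)\le1$, in relating $\hat\Ssigma_i^{-1}$ to $\Ssigma_i^{\star-1}$.

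\textbf{Claim (iii).} Each coordinate $w_j=N^{-1}\sum_i\bm u_{ij}^\top\o_i$ is a sum of independent mean-zero sub-Weibull($\alpha$) variables, with $\bm u_{ij}=\Ssigma_i^{\star1/2}\hat\Ssigma_i^{-1}(\tilde\X_i)_j$. Treating $\hat\Ssigma_i$ as fixed (it is computed from earlier steps; otherwise one conditions or uses sample splitting), we have
\[
\|\bm u_{ij}\|_2\le\vertiii{\Ssigma_i^{\star1/2}\hat\Ssigma_i^{-1/2}}_2\,\|\hat\Ssigma_i^{-1/2}(\tilde\X_i)_j\|_2\le\sqrt{1+a_m}\,\|\hat\Ssigma_i^{-1/2}(\tilde\X_i)_j\|_2,
\]
with a further factor producing $(1+a_m)$. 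The sub-Weibull concentration inequality for weighted sums \citep{kuchibhotla2022moving} gives
\[
|w_j|\lesssim\frac{(1+a_m)\max_i\tilde K_i}{\sqrt N}\max\Bigl\{\sqrt{\log(2/\delta)},(\log(2/\delta))^{1/\alpha}\Bigr\}.
\]
A union bound over the $pd$ coordinates and the stated lower bound on $\tau$ then give $\|\bm w\|_\infty\le\nu_x\tau/8$. A union bound with the covariance event completes the argument with probability $1-\delta_\Sigma-\delta_\beta$.
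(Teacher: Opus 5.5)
Your outline follows the paper's proof almost step for step. The shared steps are the entrywise bound $|(\hat\Ggamma-\Ggamma)_{jk}|\le a_m$ from Cauchy--Schwarz and the column standardization, the Neumann series for $\hat\Ggamma_{\Stilde\Stilde}^{-1}$, the same three-term decomposition for the perturbed irrepresentability condition, and a union bound over $pd$ weighted sub-Weibull sums. Your Neumann constant $C_e^{-1}/(1-\nu_x/8)\le \tfrac{8}{7}C_e^{-1}$ is slightly sharper than the paper's $2/C_e$ and still yields $\tilde f(\tau)$.

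There is, however, a wrong inequality at exactly the step where the hypothesis $a_m\le(\sqrt5-1)/2$ is needed. Write $\bm R_i=\hat\Ssigma_i^{1/2}\Ssigma_i^{\star-1}\hat\Ssigma_i^{1/2}$. The covariance event says $\vertiii{\I-\bm R_i}_2\le a_m$, so the spectrum of $\bm R_i$ lies in $[1-a_m,1+a_m]$. Since $\hat\Ssigma_i^{-1/2}\Ssigma_i^{\star}\hat\Ssigma_i^{-1/2}=\bm R_i^{-1}$, you get $\vertiii{\Ssigma_i^{\star1/2}\hat\Ssigma_i^{-1/2}}_2=\Lambda_{\min}(\bm R_i)^{-1/2}\le(1-a_m)^{-1/2}$. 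Your bound $\sqrt{1+a_m}$ instead controls $\Lambda_{\max}(\bm R_i)^{1/2}$, and it is false in general (take $\bm R_i=(1-a_m)\I$). The ``further factor producing $(1+a_m)$'' is precisely the inequality $(1-a_m)^{-1/2}\le 1+a_m$. This is equivalent to $(1+a_m)^2(1-a_m)\ge 1$, i.e.\ $a_m^2+a_m\le 1$, i.e.\ $a_m\le(\sqrt5-1)/2$. So you guessed the right algebraic condition but placed it in the wrong step: the Gram-matrix perturbation in (i)--(ii) uses only $\stilde a_m C_e^{-1}\le\nu_x/8$ and never needs the golden-ratio bound.

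Second, ``treating $\hat\Ssigma_i$ as fixed,'' or simply conditioning on it, does not justify (iii). The matrix $\hat\Ssigma_i$ is built from the same residuals that contain $\bm\varepsilon_i$, so conditionally on it the $\o_i$ need not be independent, mean zero, or sub-Weibull. The paper handles this by sample splitting:
\begin{itemize}
\item $\hat\Ssigma_B,\hat\Ssigma_\epsilon$, and hence $\hat\Ssigma_i$, the standardization, and the event $\Eepsilon$, are computed from one set of subjects and are $\mathcal F_\Sigma$-measurable;
\item the score and $\hat\Ggamma$ are computed from an independent set of subjects;
\item one then bounds $\PP(\|\bm w\|_\infty\ge\nu_x\tau/8)\le\PP(\Eepsilon^c)+\EE[\mathbf 1_{\Eepsilon}\,\PP(\|\bm w\|_\infty\ge\nu_x\tau/8\mid\mathcal F_\Sigma)]$.
\end{itemize}
This conditional form matters because your weight bound on $\bm u_{ij}$ holds only on $\Eepsilon$, so the concentration inequality must be applied given $\mathcal F_\Sigma$ with $\Eepsilon\in\mathcal F_\Sigma$. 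With these two repairs your argument coincides with the paper's.
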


Under sub-Gaussian settings for random errors, where $\alpha = 2,$ the $\ell_{\infty}$ error bound in Theorem~\ref{thm-fixed-sel-consistency} has rate
$\mathcal{O}_p \left\{\max_i \tilde K_i \sqrt{N^{-1}\log (pd)} (1 + a_m)\right\}$,
which differs from the standard lasso rate for homoskedastic linear models \citep{wainwright2009sharp} in two ways.
First, a stronger upper bound assumption on $\vertiii{\Ggamma_{\Stilde \Stilde}^{-1}}_{\infty}$ is required instead of the standard minimum eigenvalue condition on $\Ggamma_{\Stilde \Stilde}$.
Second, fixed-effects selection consistency additionally requires the control of the subject-level covariance estimation error $a_m$, reflecting the stage-wise nature of the proposed method. The rate of $a_m$ follows from the estimation error bounds for the refitted $\hat{\Ssigma}_B$ in Step 3 and $\hat{\Ssigma}_{\epsilon}$ in Step 4, as detailed in Supplementary Material~\ref{appendix-am}. For simplicity, we state the resulting rate in the balanced case where $n_i = n$ for all $i\in[m]$:
$$a_m = \mathcal{O}_p \left\{\sqrt{\frac{d^3}{N}} + \left(\max_{\ell}\vertiii{\Ssigma_B^{\star^{(\ell,\ell)}}}_2 +  \sqrt{\max_{\ell}\vertiii{\Ssigma_B^{\star^{(\ell,\ell)}}}_2} \right) \sqrt{\frac{d^3}{m}} + \max_{\ell} \bigvertiii{\Ssigma_B^{\star^{(\ell,\ell)}}}_2 \sqrt{\frac{s^2}{m}} \right\}.$$
The first term depends on the total sample size $N$ and corresponds to the error rate of an unpenalized estimator of $\Ssigma_\epsilon$ when the random effects $\B_i$'s 
are observed. 
Since $\Ssigma_B$ is unknown and must be estimated in practice, the additional error terms arise, all of which depend on the number of subjects $m$, as discussed in Section~\ref{sec-theory-ran}. 
Finally, the dependence of the first two terms on $d$ can potentially be improved by incorporating structural assumptions on $\Ssigma_\epsilon$ and using a penalized version of \eqref{equ-opt-sigmae}; see \cite{duan2023sparse}. Combining this rate with Theorem~\ref{thm-fixed-sel-consistency} gives the following rate:
{
$$\mathcal{O}_p \left\{\max_i \tilde K_i \sqrt{\frac{\log (pd)}{N}} \left[1 + \left(1 + \sqrt{\max_{\ell}\vertiii{\Ssigma_B^{\star^{(\ell,\ell)}}}_2} \right)^2 \sqrt{\frac{d^3}{m}} + \max_{\ell} \bigvertiii{\Ssigma_B^{\star^{(\ell,\ell)}}}_2 \sqrt{\frac{s^2}{m}} \right]\right\}.$$
}
This rate quantifies the price of using estimated covariance matrices in the fixed-effects selection step. In particular, the procedure still allows $p$ to grow exponentially with $N$, provided that $s$, $d$, and $\max_{\ell}\vertiii{\Ssigma_B^{\star(\ell,\ell)}}_2$ grow sufficiently slowly. Thus, estimating $\Ssigma_i^\star$ inflates the fixed-effects error bound but does not alter the basic selection-consistency mechanism.

\section{Simulation studies}
\label{sec-sim}
This section compares the empirical performance of \texttt{MOMENT}\if1\anon\footnote{An \texttt{R} implementation of our proposed methods, together with the code for reproducible numerical studies, is available at \href{https://github.com/YifanChen3/MOMENT}{https://github.com/YifanChen3/MOMENT}.}\fi{} with several existing methods. Synthetic data are simulated following model \eqref{model_eq}, where the covariates associated with the fixed and random effects are generated as $\x_{ij} \sim \mathcal{N}(\mathbf{0},\I_p)$ and $\z_{ij} \sim \mathcal{N}(\mathbf{0},\I_q)$ respectively. 

Throughout the simulation studies, we fix $d=5$, $p=20$, $m=100$, and $n_i=10$, and consider two values of $q$, representing settings where $q < n_i$ (Section \ref{subsec:low}) and $q > n_i$ (Section \ref{subsec:high}). Each setting is further characterized by a sparsity level $s$, which will be specified in the corresponding subsections. The true coefficient matrix $\bbeta \in \mathbb{R}^{p \times d}$ associated with the fixed effects has the first $5$ rows as $\bbeta_{1,.} = [-2,2,-1,-2,1], \bbeta_{2,.} = [-1,1,2,2,2], \bbeta_{3,.} = [0,1,1,1,2], \bbeta_{4,.} = [1,-2,2,1,-2], \bbeta_{5,.} = [2,-1,-2,2,-1]$, and
all remaining rows set to zero. The random errors are simulated as $\eepsilon_{ij}\stackrel{iid}{\sim} \mathcal{N}(\mathbf{0},\mathbf{\Ssigma_{\epsilon}})$, and the random effects $\B_i$ are simulated as $\text{vec}(\B_i) \stackrel{iid}{\sim} \mathcal{N}(\mathbf{0},\mathbf{\Ssigma_B})$.
To assess the robustness beyond Gaussian random effects, we consider the setting where $\text{vec}(\B_i)$ are simulated from a multivariate Laplace distribution, specifically as $\text{vec}(\B_i) = \bm L \u \sqrt{v}$, where $\Ssigma_B = \bm L\bm L^\top$ denotes its Cholesky decomposition, $\u\sim\mathcal{N}(\bm 0, \I_{dq})$, and $v\sim Exp(1)$, with $v$ and $\u$ being independent.
We focus on two different designs for $\Ssigma_B$, as in \eqref{eq:covB}, and $\Ssigma_{\epsilon}$:

\textbf{Design 1 (\textit{Independent responses})}: We let $\Ssigma_B$ be a block-diagonal matrix, where its $\ell$-th diagonal block matrix $\Ssigma_B^{(\ell,\ell)} \in \mathbb{R}^{q \times q}$ is specified as $(\Ssigma_B^{(\ell,\ell)})_{jk}=\rho_{\ell} ^{|j-k|}$ for $j, k \in [s_{\ell}]$ and $\rho_{\ell}=0.4+0.1\times \ell$ for $\ell \in [d]$. The error covariance is also a diagonal matrix as $\Ssigma_{\epsilon}=\text{diag}(1,0.9,0.8,0.7,0.6) \in \mathbb{R}^{d \times d}$.

\textbf{Design 2 (\textit{Correlated responses})}: For each $r,u \in [d]$, the $(r, u)$-th block submatrix of $\Ssigma_B$ is specified as $(\Ssigma_B^{(r,u)})_{j,k}=0.5^{|j-k|}$ for $j, k \in [s_{\ell}]$. The error covariance matrix is specified as $(\Ssigma_{\epsilon})_{j,k}=0.75^{|j-k|}$ for $j,k \in [d]$.

Under Design~1, the block-diagonal structure of $\Ssigma_B$ and $\Ssigma_{\epsilon}$ implies that different responses are independent. In this case, methods that fit each response separately correctly specify the model and are expected to achieve advantageous performance. In contrast, under Design~2, the presence of correlations between responses means that methods that ignore this structure may fail to capture important dependencies, leading to suboptimal performance. The proposed \texttt{MOMENT} method, which is designed to leverage such correlations, is expected to outperform competing methods in this setting.
To further demonstrate the importance of accounting for cross-response correlations, and for fair comparison with the all the competing univariate methods, we also consider a variant of our method, \texttt{MOMENT (marginal)}, which fits and refits each response separately as a univariate mixed model and assembles the resulting response-specific covariance estimates into block diagonal covariance matrices, thereby setting all cross-response covariance components to zero. This is the same response-by-response fitting strategy used by all competing methods in our simulations, since they are univariate methods.

\subsection{$q < n_i$ setting}
\label{subsec:low}
We begin with a regime where $q < n_i$, under which several existing univariate methods can be applied in a response-by-response manner. Specifically, we set $q=8$ and $s_{\ell}=4$ for each response $\ell \in [d]$. We compare \texttt{MOMENT} with the methods of \cite{ahn2012moment}, \cite{peng2012model}, and \cite{hui2017joint}. 

\begin{table}[!htbp]
  \centering
  \spacingset{1}
  \footnotesize
  \setlength{\tabcolsep}{3pt}
  \renewcommand{\arraystretch}{0.88}
  \caption{Performance comparison in scenario where $q < n_i$, evaluated in terms of Frobenius norm (F-norm) of the difference between the estimates and the true parameters, F1 score for the selection of relevant fixed effects and random effects, and computational time (Time). Reported values are averaged over 50 replications, with standard deviations shown in parentheses. In each configuration, the best performance is highlighted in bold.}
  \label{table-sim-low}
  \begin{tabular}{@{}lllrrrr@{}}
    \toprule
           &   &     & \multicolumn{2}{c}{Independent Responses} & \multicolumn{2}{c}{Correlated Responses} \\
    \cmidrule(lr){4-5} \cmidrule(lr){6-7}
                 &         &           & \multicolumn{1}{c}{Gaussian} & \multicolumn{1}{c}{Laplace}  & \multicolumn{1}{c}{Gaussian} & \multicolumn{1}{c}{Laplace} \\
    \midrule
    \multirow{5}{*}{\makecell[l]{\texttt{MOMENT}}}
      & \multirow{2}{*}{$\hat{\Ssigma}_B$}
        & F-norm  & 3.29(0.35) & 4.68(1.16)  & \textbf{3.66(0.91)} & \textbf{5.52(2.09)}   \\   
      & 
        & F1 score  & 0.995(0.02) & 0.95(0.10)  & \textbf{1(0)} & 0.98(0.05)     \\
      & \multirow{2}{*}{$\hat{\bbeta}$}
        & F-norm    & 0.20(0.03) & 0.22(0.05)  & \textbf{0.19(0.05)} & 0.21(0.05)    \\
      & 
        & F1 score  & \textbf{1(0)} & \textbf{1(0)}  & \textbf{1(0)} & \textbf{1(0)}    \\
    & \multirow{1}{*}{Time}
    & Seconds & 101(14) & 104(19)  & 107(20) & 112(21)  \\
    \midrule
    \multirow{5}{*}{\makecell[l]{\texttt{MOMENT}\\\texttt{(marginal)}}}
      & \multirow{2}{*}{$\hat{\Ssigma}_B$}
        & F-norm  & 1.75(0.26) & 2.55(0.77)  & 10.89(0.07) & 11.07(0.25)   \\   
      & 
        & F1 score  & \textbf{0.997(0.009)} & \textbf{0.99(0.02)}  & 0.999(0.007) & 0.98(0.05)     \\
      & \multirow{2}{*}{$\hat{\bbeta}$}
        & F-norm    & 0.19(0.03) & 0.20(0.03)  & 0.21(0.05) & 0.23(0.05)    \\
      & 
        & F1 score  & 0.999(0.003) & \textbf{1(0)}  & \textbf{1(0)} & \textbf{1(0)}    \\
    & \multirow{1}{*}{Time}
    & Seconds & \textbf{72(8)} & \textbf{70(5)}  & \textbf{65(9)} & \textbf{67(8)}  \\
    \midrule
    \multirow{5}{*}{\citet{ahn2012moment}}
      & \multirow{2}{*}{$\hat{\Ssigma}_B$}
        & F-norm    & 1.96(0.35) & 2.62(0.55)  & 10.91(0.08) & 11.21(0.71)     \\
      & 
        & F1 score  & 0.97(0.03) & 0.96(0.03)  & 0.97(0.05) & 0.96(0.06)    \\
      & \multirow{2}{*}{$\hat{\bbeta}$}
        & F-norm    & \textbf{0.18(0.03)} & \textbf{0.19(0.03)}  & 0.20(0.05) & \textbf{0.20(0.07)}    \\
      & 
        & F1 score  & 0.98(0.03) & 0.98(0.03)  & 0.99(0.01) & 0.97(0.10)    \\
        & \multirow{1}{*}{Time}
    & Seconds & 7088(492) & 6712(837)  & 6199(1217) & 6078(1780)   \\
    \midrule
    \multirow{5}{*}{\citet{peng2012model}}
      & \multirow{2}{*}{$\hat{\Ssigma}_B$}
        & F-norm    & 1.56(0.72) & 1.89(0.62)  & 10.85(0.12) & 10.94(0.13)    \\
      & 
        & F1 score  & 0.99(0.03) & 0.99(0.04)  & 0.99(0.05) & \textbf{0.98(0.04)}  \\
      & \multirow{2}{*}{$\hat{\bbeta}$}
        & F-norm    & 0.22(0.07) & 0.22(0.05)  & 0.25(0.07) & 0.25(0.08)    \\
      & 
        & F1 score  & 0.89(0.10) & 0.87(0.09)  & 0.88(0.11) & 0.87(0.12)    \\
        & \multirow{1}{*}{Time}
    & Seconds & 842(109) & 850(136)  & 831(134) & 839(139)   \\
        \midrule
    \multirow{5}{*}{\citet{hui2017joint}}
      & \multirow{2}{*}{$\hat{\Ssigma}_B$}
        & F-norm    & \textbf{1.33(0.26)} & \textbf{1.76(0.38)}  & 10.82(0.03) & 10.91(0.11)    \\
      & 
        & F1 score  & 0.96(0.04) & 0.95(0.05)  & 0.95(0.06) & 0.94(0.08)   \\
      & \multirow{2}{*}{$\hat{\bbeta}$}
        & F-norm    & \textbf{0.18(0.03)} & \textbf{0.19(0.03)}  & 0.20(0.05) & 0.21(0.05)    \\
      & 
        & F1 score  & 0.95(0.06) & 0.94(0.06)  & 0.94(0.07) & 0.93(0.09)    \\
        & \multirow{1}{*}{Time}
    & Seconds & 481(46) & 492(30)  & 501(53) & 531(53)   \\
    \bottomrule
  \end{tabular}
\end{table}

Table \ref{table-sim-low} shows a clear contrast between the two covariance designs. Under independent responses, competing methods attain better estimation accuracy for $\Ssigma_B$, which is unsurprising since joint modeling offers no additional information in this setting. When \texttt{MOMENT} is fitted marginally, its performance improves substantially: it outperforms \citet{ahn2012moment} and remains only slightly behind \citet{peng2012model} and \citet{hui2017joint} in covariance estimation. By contrast, under correlated responses, \texttt{MOMENT} outperforms all competing methods, highlighting the benefit of explicitly modeling cross-response dependence. 
Across all configurations, \texttt{MOMENT} achieves nearly perfect support recovery for both fixed and random effects, and all methods yield similar accuracy for estimating $\bbeta$. Computationally, \texttt{MOMENT} is also highly efficient. 
Overall, the results indicate that \texttt{MOMENT} offers a strong balance of estimation accuracy, selection performance, and computational efficiency in this setting.

\subsection{$q > n_i$ setting}
\label{subsec:high}
We next examine the setting $q > n_i$, where the competing methods in the previous subsection fail in the sense that the covariance estimates are unidentifiable. Specifically, we set $q = 20$ and $s_{\ell} = 8$ for each response $\ell \in [d]$. Recent methods for this regime often require restrictive structural assumptions, such as $\x_{ij} = \z_{ij}$ \citep{thompson2025scalable} or $\z_{ij}$ being a subset of $\x_{ij}$ \citep{heiling2024glmmpen}. To enable comparison, we modify the simulation setting by setting $\x_{ij} = \z_{ij}$. In Table~\ref{table-sim-high}, we report the comparison between our method and \citet{thompson2025scalable}. We do not include \citet{heiling2024glmmpen} due to its intractable computational cost for estimating the full large-dimensional covariance matrix $\Ssigma_B$. 

\begin{table}[!htbp]
  \centering
  \spacingset{1}
  \footnotesize
  \setlength{\tabcolsep}{3pt}
  \renewcommand{\arraystretch}{0.9}
  \caption{Performance comparison in scenario where $q > n_i$, evaluated in terms of Frobenius norm (F-norm) of the difference between the estimates and the true parameters, F1 score for the selection of relevant fixed effects and random effects, and computational time (Time). Reported values are averaged over 50 replications, with standard deviations shown in parentheses. In each configuration, the best performance is highlighted in bold.}
  \label{table-sim-high}
  \begin{tabular}{@{}lllrrrr@{}}
    \toprule
           &   &     & \multicolumn{2}{c}{Independent Responses} & \multicolumn{2}{c}{Correlated Responses} \\
    \cmidrule(lr){4-5} \cmidrule(lr){6-7}
                 &         &           & \multicolumn{1}{c}{Gaussian} & \multicolumn{1}{c}{Laplace}  & \multicolumn{1}{c}{Gaussian} & \multicolumn{1}{c}{Laplace} \\
    \midrule
    \multirow{5}{*}{\makecell[l]{\texttt{MOMENT}}}
      & \multirow{2}{*}{$\hat{\Ssigma}_B$}
        & F-norm  & 8.11(0.63) & 11.83(3.40)  & \textbf{10.27(2.51)} & \textbf{14.20(3.46)}   \\
      & 
        & F1 score  & 0.93(0.07) & 0.82(0.11)  & 0.89(0.14) & 0.80(0.18)     \\
      & \multirow{2}{*}{$\hat{\bbeta}$}
        & F-norm    & 0.80(0.38) & 0.85(0.54)  & 0.68(0.46) & 0.94(0.61)    \\
      & 
        & F1 score  & \textbf{0.994(0.01)} & 0.99(0.04)  & \textbf{0.997(0.02)} & \textbf{0.993(0.02)} \\
        & \multirow{1}{*}{Time}
    & Seconds & 270(18) & 238(58)  & 217(47) & 165(34)   \\
    \midrule
    \multirow{5}{*}{\makecell[l]{\texttt{MOMENT}\\\texttt{(marginal)}}}
      & \multirow{2}{*}{$\hat{\Ssigma}_B$}
        & F-norm  & \textbf{4.26(0.47)} & \textbf{6.74(2.12)}  & 16.28(0.13) & 16.82(0.59)   \\   
      & 
        & F1 score  & 0.96(0.02) & 0.90(0.04)  & 0.95(0.04) & 0.89(0.07)     \\
      & \multirow{2}{*}{$\hat{\bbeta}$}
        & F-norm    & \textbf{0.69(0.32)} & 0.85(0.60)  & \textbf{0.57(0.16)} & \textbf{0.66(0.31)}    \\
      & 
        & F1 score  & 0.992(0.02) & \textbf{0.99(0.03)}  & 0.996(0.02) & \textbf{0.993(0.02)}    \\
    & \multirow{1}{*}{Time}
    & Seconds & \textbf{117(4)} & \textbf{121(6)}  & 113(12) & 107(11)  \\
    \midrule
    \multirow{5}{*}{\makecell[l]{\citeauthor{thompson2025scalable}\\(\citeyear{thompson2025scalable})}}
      & \multirow{2}{*}{$\hat{\Ssigma}_B$}
        & F-norm    & 9.10(0.23) & 9.20(0.47)  & 16.52(0.03) & 16.55(0.04)   \\
      & 
        & F1 score  & \textbf{0.997(0.005)} & \textbf{0.996(0.007)} & \textbf{0.996(0.007)} & \textbf{0.996(0.007)}\\
      & \multirow{2}{*}{$\hat{\bbeta}$}
        & F-norm    & 0.73(0.11) & \textbf{0.73(0.12)}  & 0.71(0.20) & 0.71(0.17)    \\
      & 
        & F1 score  & 0.74(0.01) & 0.74(0.01)  & 0.74(0.02) & 0.74(0.01)    \\
        & \multirow{1}{*}{Time}
    & Seconds & 119(14) & 121(17)  & \textbf{100(18)} & \textbf{97(19)}   \\
    \bottomrule
  \end{tabular}
\end{table}

The results in Table~\ref{table-sim-high} highlight the trade-off between scalability and covariance modeling. The method of \citet{thompson2025scalable} estimates only the diagonal elements of $\Ssigma_B$, which is sufficient for random-effects selection but does not recover the full covariance structure. As a result, it performs well in support recovery but is limited in covariance estimation accuracy. By contrast, \texttt{MOMENT} estimates the full matrix $\Ssigma_B$ and therefore captures its structure more accurately. Although this richer modeling increases computational cost, the runtime of \texttt{MOMENT} remains comparable to that of \citet{thompson2025scalable}. These results suggest that \texttt{MOMENT} offers a useful balance between statistical accuracy and computational feasibility in moderately large-dimensional settings where full covariance estimation is still of interest.

\subsection{Sensitivity analysis}
The previous experiment, which sets $\x_{ij}=\z_{ij}$, is the ``homecourt'' of \citet{thompson2025scalable}. To assess a more general and practically relevant scenario, we now consider the case in which each entry of $\x_{ij}$ and $\z_{ij}$ is generated independently from a standard Gaussian distribution, and therefore differ substantially. \texttt{MOMENT} can accommodate this setting directly, whereas the method of \citet{thompson2025scalable} must instead practically construct an augmented design matrix, which requires that a concatenated covariate vector $\bm u_{ij} = (\x_{ij}^\top,\z_{ij}^\top)^\top \in \mathbb{R}^{(p+q)}$ must be formed for each observation.

We focus on the Gaussian random-effects setting, and examine two dimensional regimes: $(q,s_{\ell})=(8,4)$ and $(q,s_{\ell})=(15,6)$ for all $\ell \in [d]$. We consider both the independent-response and correlated-response covariance designs, and keep other parameters the same as in previous simulation settings.

Table~\ref{table-sim-sensitivity} shows that \texttt{MOMENT} is particularly advantageous when the covariates for the fixed effects and random effects differ. In the independent-response setting, \texttt{MOMENT (marginal)} achieves the best estimation accuracy for both $\hat{\Ssigma}_B$ and $\hat{\bbeta}$ while maintaining nearly perfect support recovery. In the correlated-response setting, the advantage of the  \texttt{MOMENT} estimator is even more pronounced for estimating $\Ssigma_B$. Although all methods attain near-perfect support recovery for $\hat{\Ssigma}_B$, \citet{thompson2025scalable} is slightly better in F1 score because it targets only the diagonal support, effectively solving a much smaller-dimensional problem. In terms of runtime, \texttt{MOMENT (marginal)} is consistently the fastest method across all configurations. This likely reflects the fact that the augmented design matrix used by \citet{thompson2025scalable} increases the effective problem dimension from $p$ or $q$ to $p+q$, thereby increasing computational cost. Taken together, these results demonstrate that \texttt{MOMENT} retains strong estimation and selection performance in settings that fall outside the structural assumptions required by existing competitors.

\begin{table}[!htbp]
  \centering
  \spacingset{1}
  \footnotesize
  \setlength{\tabcolsep}{3pt}
  \renewcommand{\arraystretch}{0.9}
  \caption{Performance comparison in the setting where $\x_{ij}$ and $\z_{ij}$ substantially differ, evaluated in terms of Frobenius norm (F-norm) of the difference between the estimates and the true parameters, F1 score for the selection of relevant fixed effects and random effects, and computational time (Time). Reported values are averaged over 50 replications, with standard deviations shown in parentheses. In each configuration, the best performance is highlighted in bold.}
  \label{table-sim-sensitivity}
  \begin{tabular}{@{}lllrrrr@{}}
    \toprule
           &   &     & \multicolumn{2}{c}{Independent Responses} & \multicolumn{2}{c}{Correlated Responses} \\
    \cmidrule(lr){4-5} \cmidrule(lr){6-7}
                 &         &           & \multicolumn{1}{c}{$q=8,s=4$} & \multicolumn{1}{c}{$q=15,s=6$}  & \multicolumn{1}{c}{$q=8,s=4$} & \multicolumn{1}{c}{$q=15,s=6$} \\
    \midrule
    \multirow{5}{*}{\makecell[l]{\texttt{MOMENT}}}
      & \multirow{2}{*}{$\hat{\Ssigma}_B$}
        & F-norm  & 3.20(0.32) & 5.48(0.62)  & \textbf{3.43(0.80)} & \textbf{6.36(1.42)}   \\
      & 
        & F1 score  & 0.99(0.02) & 0.96(0.08)  & 0.999(0.008) & 0.98(0.04)     \\
      & \multirow{2}{*}{$\hat{\bbeta}$}
        & F-norm    & 0.26(0.04) & 0.31(0.06)  & \textbf{0.25(0.08)} & 0.30(0.13)    \\
      & 
        & F1 score  & \textbf{1(0)} & \textbf{1(0)}  & \textbf{1(0)} & \textbf{1(0)} \\
        & \multirow{1}{*}{Time}
    & Seconds & 116(17) & 148(10)  & 105(16) & 128(24)   \\
    \midrule
    \multirow{5}{*}{\makecell[l]{\texttt{MOMENT}\\\texttt{(marginal)}}}
      & \multirow{2}{*}{$\hat{\Ssigma}_B$}
        & F-norm  & \textbf{1.73(0.24)} & \textbf{2.89(0.40)}  & 10.87(0.05) & 13.79(0.10)   \\   
      & 
        & F1 score  & \textbf{0.997(0.009)} & 0.98(0.02)  & 0.995(0.02) & 0.98(0.03)     \\
      & \multirow{2}{*}{$\hat{\bbeta}$}
        & F-norm    & \textbf{0.25(0.04)} & \textbf{0.28(0.05)}  & 0.26(0.08) & 0.31(0.12)    \\
      & 
        & F1 score  & \textbf{1(0)} & \textbf{1(0)}  & \textbf{1(0)} & 0.999(0.003)    \\
    & \multirow{1}{*}{Time}
    & Seconds & \textbf{69(3)} & \textbf{85(3)}  & \textbf{65(8)} & \textbf{80(10)}  \\
    \midrule
    \multirow{5}{*}{\makecell[l]{\citeauthor{thompson2025scalable}\\(\citeyear{thompson2025scalable})}}
      & \multirow{2}{*}{$\hat{\Ssigma}_B$}
        & F-norm    & 5.24(0.17) & 7.31(0.19)  & 11.19(0.02) & 14.10(0.02)   \\
      & 
        & F1 score  & 0.996(0.009) & \textbf{0.999(0.004)} & \textbf{0.999(0.003)} & \textbf{0.997(0.008)}\\
      & \multirow{2}{*}{$\hat{\bbeta}$}
        & F-norm    & 0.28(0.06) & 0.29(0.05)  & 0.25(0.10) & \textbf{0.28(0.11)}    \\
      & 
        & F1 score  & 0.98(0.03) & 0.99(0.02)  & 0.993(0.01) & 0.992(0.02)    \\
        & \multirow{1}{*}{Time}
    & Seconds & 89(10) & 185(26)  & 78(17) & 161(28)   \\
    \bottomrule
  \end{tabular}
\end{table}

\section{Hemodialysis clinical and laboratory data analysis}
\label{sec-apply}
We apply \texttt{MOMENT} to a multivariate longitudinal dataset from patients receiving maintenance hemodialysis. Patients undergoing hemodialysis are repeatedly monitored using clinical measurements, dialysis prescription variables, and laboratory markers. These variables are scientifically linked: fluid accumulation between dialysis sessions is related to ultrafiltration during treatment, anemia management involves both erythropoietin and iron supplementation, and inflammatory and nutritional markers often evolve jointly over time. We apply the proposed \texttt{MOMENT} method to explore correlation structures between clinical and laboratory variables.

The original data consist of monthly patient-level summaries over the first five years (months 0 through 60) after initiation of hemodialysis. To illustrate \texttt{MOMENT}’s ability to handle settings with a large number of responses or random effects but a small number of repeated measurements per subject, we selected patients with between 6 and 15 complete monthly observations and retained all available months for each selected patient. The resulting dataset contains $m=2159$ patients and $N=21220$ patient-month observations, with an average of $9.83$ observations per patient. We consider $d=20$ clinical and laboratory variables, using their monthly averages as multivariate responses. Beyond examining the correlation structure among these variables, we study their population-level trajectories and associations with covariates, including age, height, race, ethnicity, gender, diabetes mellitus, and vascular access type. Definitions and summary statistics of all variables are reported in Table~\ref{tab:summary_m2159} of the Supplementary Material~\ref{appendix-data}.


We fit model \eqref{model_eq} with $\z_{ij}=\bm 1$ and
$\x_{ij}
    = (1,
s_1(t_{ij}),\ldots,s_7(t_{ij}),
\text{age}_i,\text{height}_i,
\bm 1_{(\text{race}_i=1)},$ \newline
$\bm 1_{(\text{male}_i=1)},
\bm 1_{(\text{diabetic}_i=1)},
\bm 1_{(\text{hispanic}_i=1)},
\text{AVF}_{ij},
\text{AVG}_{ij}
)^{\top},
$
where $t_{ij}$ is the month of the $j$th observation from patient $i$, $s_1,\ldots,s_7$ are natural cubic spline basis functions of month, constructed using six internal knots placed at the empirical quantiles $\{1/7,\ldots,6/7\}$ of the observed months. The access variables $\text{AVF}_{ij}$ and $\text{AVG}_{ij}$ are the monthly proportions of treatments using arteriovenous fistula and arteriovenous graft access, respectively; catheter access is set as the reference category. Thus, in the notation of model \eqref{model_eq}, $p=16$, $q=1$, and $d=20$, and $\mathbf B_i\in\mathbb R^{1\times 20}$ contains one response-specific random intercept for each patient. The covariance matrix $\boldsymbol\Sigma_B\in\mathbb R^{20\times 20}$ describes patient-level heterogeneity among the response-specific random intercepts, whereas $\boldsymbol\Sigma_\epsilon\in\mathbb R^{20\times 20}$ describes monthly within-patient residual dependence. We imposed an $\ell_1$ penalty on all covariates, while leaving the intercept and spline basis functions unpenalized. The tuning parameters $\lambda$ and $\tau$ were selected via 5-fold cross-validation.

\begin{figure}[h]
    \centering
    \spacingset{1.2}
    \includegraphics[width=\linewidth]{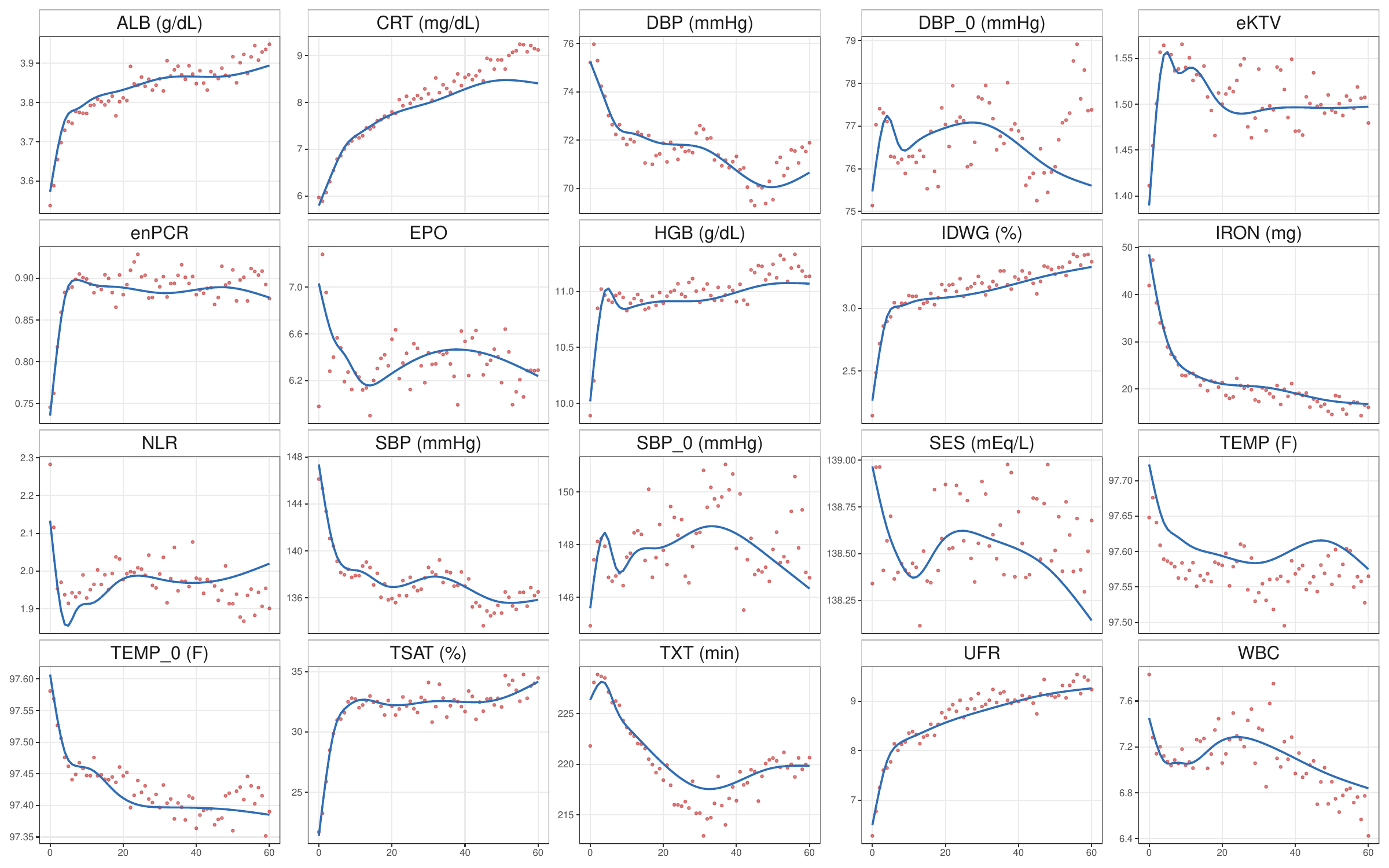}
    \caption{Estimated population trajectories for the $20$ response variables. Red points represent empirical monthly averages of the observed responses, and blue curves represent the estimated trajectories.}
    \label{fig:hemodialysis-fix-trajectories}
\end{figure}

Figure~\ref{fig:hemodialysis-fix-trajectories} shows clear nonlinear population-level changes over the first 60 months of hemodialysis. Several variables change rapidly soon after treatment initiation and then stabilize, including \texttt{HGB}, \texttt{ALB}, \texttt{IDWG}, \texttt{UFR}, \texttt{enPCR}, \texttt{eKTV}, \texttt{TSAT}, and \texttt{CRT}, while post-treatment \texttt{SBP} and \texttt{DBP}, \texttt{EPO}, \texttt{IRON}, \texttt{TEMP}, and \texttt{WBC} generally decrease over time. These patterns suggest that early dialysis management is accompanied by substantial adjustment in anemia-related markers, volume control, nutritional status, dialysis adequacy, and blood pressure, after which the average trajectories become more stable.

\begin{figure}[H]
    \centering
    \spacingset{1.2}
    \includegraphics[width=\linewidth]{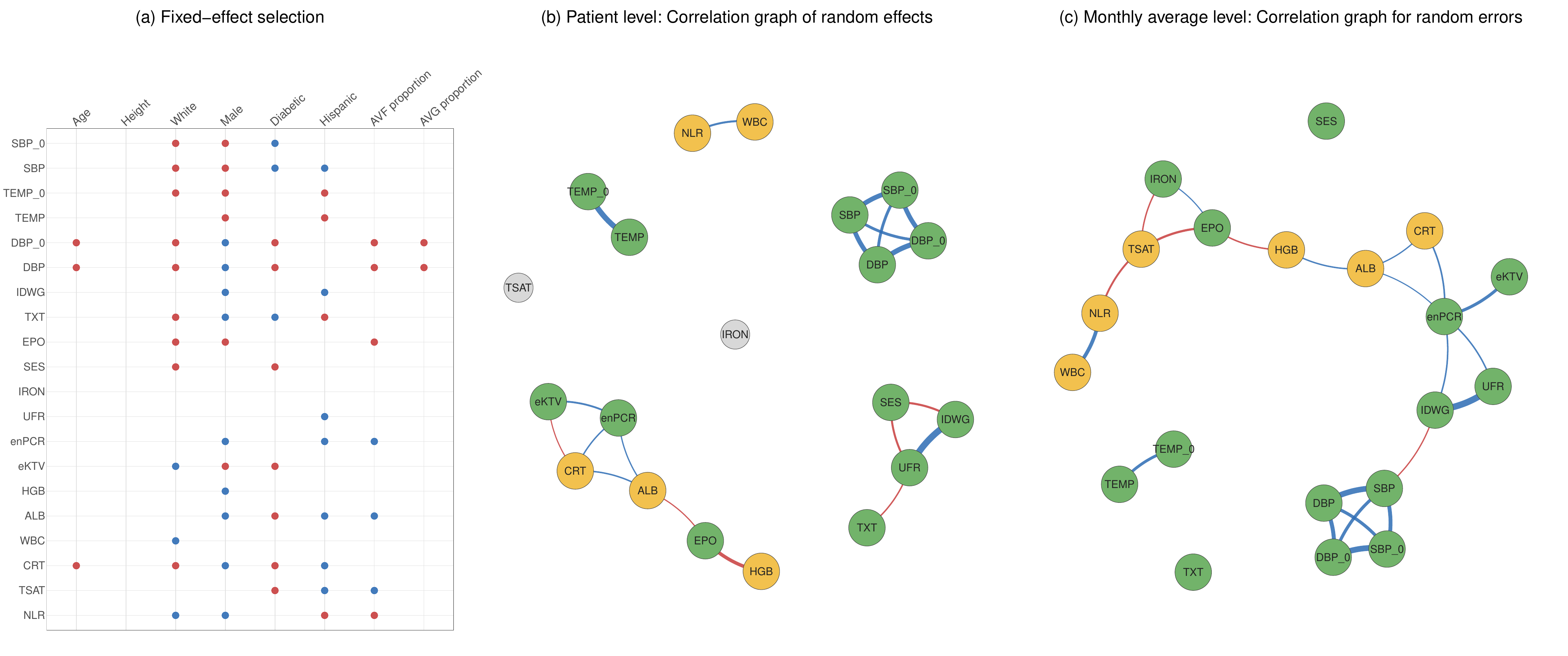}
    \caption{Summary of fixed-effect selection and dependence structure in the hemodialysis multiresponse linear mixed model. Panel (a) shows selected demographic and vascular-access fixed effects for each response; blue and red points indicate positive and negative selected coefficients, respectively. Panel (b) shows the patient-level correlation graph estimated from $\hat \Ssigma_B$, and Panel (c) shows the monthly within-patient residual correlation graph estimated from $\hat \Ssigma_\epsilon$. Blue edges indicate positive correlations and red edges indicate negative correlations, with thicker edges corresponding to stronger absolute correlations. Edges with absolute correlation below 0.2 are omitted. Node colors indicate variable type, with green nodes denoting clinical variables and yellow nodes denoting laboratory variables. In Panel (b), this color scheme is overridden for grey nodes, which correspond to responses whose random-intercept variance is estimated as zero.}
    \label{fig:hemodialysis-selected-fix-random}
\end{figure}

Panel (a) in Figure~\ref{fig:hemodialysis-selected-fix-random} shows that demographic and access covariates contribute heterogeneously across the 20 hemodialysis responses. The covariates \texttt{race}, \texttt{sex}, \texttt{diabetes}, and \texttt{Hispanic} are selected for multiple outcomes, suggesting that part of the between-patient variation in clinical and laboratory measurements is explained by baseline patient characteristics rather than random effects alone. In contrast, \texttt{height} is not selected for any response, and \texttt{age} is selected only for \texttt{DBP\_0}, \texttt{DBP}, and \texttt{CRT}. Access variables are selected for a smaller subset of responses, most notably for blood pressure, anemia-related, nutritional, and inflammatory markers, indicating that vascular access type may capture clinically relevant differences in treatment delivery or patient condition.

Panels (b) and (c) in Figure~\ref{fig:hemodialysis-selected-fix-random} summarize the estimated patient-level and monthly-level dependence structures. The associations identified by \texttt{MOMENT} are clinically meaningful and readily interpretable. At both the patient-level and monthly residual-level, \texttt{IDWG} and \texttt{UFR} are strongly positively correlated, reflecting the relationship between interdialytic fluid gain and fluid removal during dialysis. This agrees with clinical knowledge that \texttt{IDWG} is closely tied to volume management and blood pressure burden in hemodialysis patients~\citep{lopez2005interdialytic,ipema2016causes}. The pre- and post-treatment measurements of the same vital signs are also strongly positively correlated, including \texttt{SBP\_0}--\texttt{SBP}, \texttt{DBP\_0}--\texttt{DBP}, and \texttt{TEMP\_0}--\texttt{TEMP}. In addition, \texttt{WBC} and \texttt{NLR} are positively correlated at both the patient-level and the monthly-average level, consistent with their shared association with inflammatory status. Finally, \texttt{EPO} and \texttt{HGB} are negatively correlated, particularly at the patient level, consistent with the clinical interpretation that patients with lower hemoglobin often require higher erythropoietin doses for anemia management.

Another distinctive finding is that \texttt{IRON} and \texttt{TSAT} are not selected as random-effect-bearing responses in the patient-level covariance component. This should not be interpreted as evidence that iron management is clinically unimportant. Rather, in the fitted multivariate model, \texttt{MOMENT} does not identify additional patient-level random-effect components for these two variables after accounting for the other responses. This is compatible with their clinical roles: \texttt{IRON} is a treatment-management variable whose variation may reflect time-varying treatment decisions, while \texttt{TSAT} is commonly used for monitoring iron status but can exhibit substantial biological and analytical variability~\citep{bailie2013variation,robinson2017evaluating,van2010analytical}. Thus, their relevance may appear primarily through contemporaneous monthly associations with other anemia-related variables rather than through persistent between-patient heterogeneity. This illustrates that the proposed method separates patient-level heterogeneity from within-patient monthly co-movement, a distinction that is difficult to obtain from separate univariate mixed models or from an unpenalized dense covariance estimate.

An additional analysis of the SANAD data is provided in Supplementary Material~\ref{appendix-SANAD}.

\section{Discussion}
\label{sec-discuss}
In this paper, we proposed a unified framework for simultaneous selection and estimation of both fixed and random effects in multivariate mixed-effects models. The proposed procedure is based on moment conditions rather than likelihood evaluation, which makes it computationally attractive and flexible with respect to distributional assumptions. Our theoretical analysis establishes sign-selection consistency for both fixed- and random-effects under joint sub-Weibull tail conditions and explicitly accommodates dependence induced by repeated measurements. The numerical studies further clarify the operating characteristics of the method: when the responses are independent, response-wise methods remain competitive because they are correctly specified, whereas under cross-response dependence, the proposed joint procedure yields clear gains in estimating $\Ssigma_B$ while maintaining accurate recovery of the fixed-effects support.

Several directions for future research are worth pursuing. One is to incorporate more robust loss functions, such as Huber-type or absolute-deviation criteria, into the estimation procedure while preserving its favorable computational properties. Another is to develop tuning-free selection strategies that are fully aligned with the moment-based nature of the framework. A further natural extension is to generalize the methodology to generalized linear mixed-effects models (GLMMs), thereby broadening its applicability to a wider range of longitudinal and clustered data settings.


\section{Disclosure statement}\label{disclosure-statement}

The authors report that there are no conflicts of interest.

\section{Data Availability Statement}\label{data-availability-statement}

The hemodialysis data are available on request from the authors.

\bibliography{bibliography}

\newpage

\phantomsection\label{supplementary-material}
\bigskip

\begin{center}

{\large\bf SUPPLEMENTARY MATERIAL}

\end{center}

The supplementary material contains additional hemodialysis data summaries, the SANAD data analysis, algorithmic details, and theoretical proofs.

\appendix
\section{Comparison table for variable selection methods in LMMs}
\label{appendix-comparison-table}
\begin{table}[htbp]
\centering
\spacingset{1.2}
\scriptsize
\setlength{\tabcolsep}{2pt}
\renewcommand{\arraystretch}{0.95}
\caption{Comparison of variable selection methods in mixed-effects models.}
\label{table-literature}
\begin{tabular}{@{}
    l 
    >{\centering\arraybackslash}m{1.4cm} 
    *{2}{>{\centering\arraybackslash}m{2.2cm}} 
    >{\centering\arraybackslash}m{6.68cm} 
@{}}
\toprule
 & Type & Multi-responses & Large number of random effects & Additional structure assumptions \\
\midrule
\cite{chen2003random} & Likelihood & \xmark & \xmark &   \\
\cite{kinney2007fixed} & Likelihood & \xmark & \xmark &  \\
\cite{bondell2010joint} & Likelihood & \xmark & \xmark &   \\
\cite{ibrahim2011fixed} & Likelihood & \xmark & \xmark &  \\
\cite{peng2012model} & Moment & \xmark & \xmark &   \\
\cite{ahn2012moment} & Moment & \xmark & \xmark &   \\
\cite{hui2017hierarchical} & Likelihood & \xmark & \xmark &   \\
\cite{hui2017joint} & Likelihood & \xmark & \xmark &   \\
\cite{fan2012variable} & Likelihood & \xmark & \cmark & On the proxy matrix  \\
\cite{li2018doubly} & Likelihood & \xmark & \cmark & Hierarchy \\
\cite{heiling2024glmmpen} & Likelihood & \xmark & \cmark & Hierarchy, and diagonal covariance of random effects for scalability\\
\cite{sholokhov2024relaxation} & Likelihood & \xmark & \cmark & Diagonal covariance of random effects and known covariance of random errors  \\
\cite{thompson2025scalable} & Likelihood & \xmark & \cmark & Diagonal covariance of random effects and identical design matrix for fixed and random effects  \\
\textbf{Our Method} & Moment & \cmark & \cmark &   \\
\bottomrule
\end{tabular}
\end{table}

\section{Summary statistics table for hemodialysis data}
\label{appendix-data}

The hemodialysis analysis dataset consists of monthly records from 2,159 patients receiving maintenance hemodialysis during the first five years after treatment initiation, restricted to patients with 6 to 15 complete monthly observations. The dataset contains 21,220 patient-month observations. Our analysis includes monthly averages of the following 20 clinical and laboratory variables as multivariate responses: pre- and post-dialysis systolic blood pressure (\texttt{SBP\_0} and \texttt{SBP}), pre- and post-dialysis diastolic blood pressure (\texttt{DBP\_0} and \texttt{DBP}), pre- and post-dialysis temperature (\texttt{TEMP\_0} and \texttt{TEMP}), interdialytic weight gain (\texttt{IDWG}, expressed as percentage weight gain between dialysis sessions), treatment time (\texttt{TXT}, in minutes), erythropoietin dose (\texttt{EPO}, analyzed on the fourth-root scale following common practice), serum sodium (\texttt{SES}), iron dose (\texttt{IRON}), ultrafiltration rate (\texttt{UFR}, the rate of fluid removal during dialysis), equilibrated normalized protein catabolic rate (\texttt{enPCR}, a marker of dietary protein intake and nutritional status), equilibrated Kt/V (\texttt{eKTV}, a dialysis adequacy measure), hemoglobin (\texttt{HGB}), albumin (\texttt{ALB}), white blood cell count (\texttt{WBC}), creatinine (\texttt{CRT}), transferrin saturation (\texttt{TSAT}, a marker of iron availability), and neutrophil-to-lymphocyte ratio (\texttt{NLR}, an inflammatory marker). These repeated, clinically relevant measurements provide a multivariate longitudinal setting for studying both persistent between-patient heterogeneity and within-patient monthly variation in hemodialysis care. Covariates include demographic and treatment characteristics, namely age at dialysis initiation (\texttt{age}), height (\texttt{height}), sex (\texttt{male}), race (\texttt{race}), ethnicity (\texttt{hispanic}), diabetes status (\texttt{diabetic}), month since dialysis initiation (\texttt{month}), monthly treatment count (\texttt{event\_txtcount}), and vascular access type. The access covariates are monthly average proportions of treatments using arteriovenous fistula (\texttt{ACCESS\_AVF\_avg}), arteriovenous graft (\texttt{ACCESS\_AVG\_avg}), or catheter (\texttt{ACCESS\_CATH\_avg}) access. Summary statistics for all variables are presented in Table~\ref{tab:summary_m2159}.

\begingroup
\spacingset{1.2}
\scriptsize
\setlength{\tabcolsep}{2pt}
\renewcommand{\arraystretch}{0.92}
\setlength{\LTleft}{0pt}
\setlength{\LTright}{0pt}
\setlength{\LTcapwidth}{5.5in}
\begin{longtable}{@{}>{\raggedright\arraybackslash}p{1.70in}>{\raggedright\arraybackslash}p{0.95in}>{\raggedleft\arraybackslash}p{1.50in}>{\raggedleft\arraybackslash}p{1.75in}@{}}

\multicolumn{4}{@{}p{6in}@{}}{\scriptsize \textbf{Table~\thetable}. Summary statistics for the hemodialysis analysis dataset. Patient-level variables are summarized once per patient; patient-month variables are summarized across all observed monthly records. An asterisk (*) indicates variables used as multivariate responses in the hemodialysis analysis. \label{tab:summary_m2159}}\\
\addlinespace[2pt]
\toprule
Variable description & Variable name & Mean (SD) & Median (Q1, Q3) \\
\midrule
\endfirsthead
\multicolumn{4}{@{}l}{\textbf{Table~\thetable} (continued). Summary statistics for the hemodialysis analysis dataset.}\\

\midrule
\endhead
\midrule
\multicolumn{4}{r}{\footnotesize Continued on next page}\\
\endfoot
\bottomrule
\endlastfoot
\multicolumn{4}{l}{\textit{Panel A: Patient-level baseline and design variables ($m=2{,}159$ patients)}}\\
Monthly observations per patient & \texttt{n\_months} & 9.829 (2.792) & 10.00 (7.000, 12.00) \\
Age, years & \texttt{age} & 62.79 (14.88) & 64.00 (54.00, 74.00) \\
Height, cm & \texttt{height} & 167.4 (12.85) & 168.0 (160.0, 175.0) \\
\addlinespace
\multicolumn{4}{l}{\textit{Panel B: Patient-month variables ($N=21{,}220$ patient-month observations)}}\\
Month from treatment start & \texttt{month} & 21.14 (18.08) & 14.00 (6.000, 35.00) \\
Treatment count in month & \texttt{event\_txtcount} & 12.06 (1.869) & 13.00 (12.00, 13.00) \\
AVF access proportion & \texttt{ACCESS\_AVF\_avg} & 0.632 (0.477) & 1.000 (0.000, 1.000) \\
AVG access proportion & \texttt{ACCESS\_AVG\_avg} & 0.167 (0.371) & 0.000 (0.000, 0.000) \\
Catheter access proportion & \texttt{ACCESS\_CATH\_avg} & 0.201 (0.393) & 0.000 (0.000, 0.000) \\
Pre-SBP, mmHg & \texttt{*SBP\_0} & 147.8 (20.42) & 147.4 (133.8, 161.7) \\
Post-SBP, mmHg & \texttt{*SBP} & 138.2 (18.64) & 136.6 (124.8, 150.5) \\
Pre-DBP, mmHg & \texttt{*DBP\_0} & 76.67 (12.63) & 75.62 (67.46, 84.85) \\
Post-DBP, mmHg & \texttt{*DBP} & 72.20 (10.81) & 71.22 (64.43, 79.00) \\
Pre-temperature, F & \texttt{*TEMP\_0} & 97.45 (0.509) & 97.46 (97.12, 97.79) \\
Post-temperature, F & \texttt{*TEMP} & 97.58 (0.480) & 97.60 (97.28, 97.91) \\
Interdialytic weight gain, \% & \texttt{*IDWG} & 3.075 (1.202) & 2.992 (2.225, 3.845) \\
Treatment time, min & \texttt{*TXT} & 222.2 (26.21) & 226.2 (207.9, 241.2) \\
EPO dose, fourth-root scale & \texttt{*EPO} & 6.369 (3.388) & 6.950 (5.035, 8.645) \\
Serum sodium, mEq/L & \texttt{*SES} & 138.6 (3.432) & 139.0 (137.0, 141.0) \\
Iron dose & \texttt{*IRON} & 24.05 (25.50) & 15.38 (0.000, 38.46) \\
Ultrafiltration rate & \texttt{*UFR} & 8.405 (3.353) & 8.050 (6.058, 10.36) \\
enPCR & \texttt{*enPCR} & 0.883 (0.262) & 0.850 (0.700, 1.030) \\
eKTV & \texttt{*eKTV} & 1.520 (0.367) & 1.470 (1.310, 1.660) \\
Hemoglobin, g/dL & \texttt{*HGB} & 10.95 (1.116) & 10.96 (10.30, 11.60) \\
Albumin, g/dL & \texttt{*ALB} & 3.799 (0.432) & 3.800 (3.600, 4.100) \\
White blood cells & \texttt{*WBC} & 7.102 (2.833) & 6.710 (5.400, 8.330) \\
Creatinine, mg/dL & \texttt{*CRT} & 7.585 (3.046) & 7.100 (5.300, 9.400) \\
TSAT, \% & \texttt{*TSAT} & 31.44 (13.43) & 29.00 (22.00, 38.00) \\
Neutrophil-to-lymphocyte ratio & \texttt{*NLR} & 1.969 (0.654) & 1.855 (1.539, 2.259) \\
\addlinespace
\multicolumn{4}{l}{\textit{Panel C: Patient-level categorical variables ($m=2{,}159$ patients)}}\\
 &  & Level 1, No. (\%) & Level 2, No. (\%) \\
Race & \texttt{race} & Non-White: 758 (35.1\%) & White: 1,401 (64.9\%) \\
Sex & \texttt{male} & Female: 924 (42.8\%) & Male: 1,235 (57.2\%) \\
Diabetes status & \texttt{diabetic} & Non-diabetic: 838 (38.8\%) & Diabetic: 1,321 (61.2\%) \\
Ethnicity & \texttt{hispanic} & Non-Hispanic: 1,848 (85.6\%) & Hispanic: 311 (14.4\%) \\
\end{longtable}
\endgroup

\section{Supplementary SANAD data analysis}
\label{appendix-SANAD}
We illustrate the proposed method using the \texttt{epileptic.qol} dataset from the \texttt{joineRML} package \citep{hickey2018joinerml}, which is derived from the SANAD (Standard And New Antiepileptic Drugs) trial conducted in the United Kingdom between 1998 and 2006 \citep{marson2007sanad}. SANAD was a large multicenter randomized trial designed to compare established and newer antiepileptic drugs in routine clinical practice. As part of the trial, a quality-of-life substudy collected repeated questionnaire responses at baseline, 3 months, 1 year, and 2 years after treatment initiation. These repeated multivariate outcomes make the dataset well-suited for a multivariate longitudinal mixed-effects analysis.

The dataset contains 544 subjects and 1852 longitudinal observations, with an average of 3.40 repeated measurements per subject. We focus on three continuous quality-of-life outcomes: \texttt{anxiety}, \texttt{depress}, and \texttt{aep}. Higher values of these variables indicate a worse quality of life. The treatment indicator \texttt{trt} distinguishes Carbamazepine (\texttt{CBZ}), the standard comparator, from Lamotrigine (\texttt{LTG}), a newer antiepileptic agent. To capture the trend over time, we also include \texttt{time}, measured in days since baseline, as a continuous covariate because the actual return times of mailed questionnaires vary around the nominal follow-up schedule.

\begin{figure}[H]
    \centering
    \includegraphics[width=\textwidth]{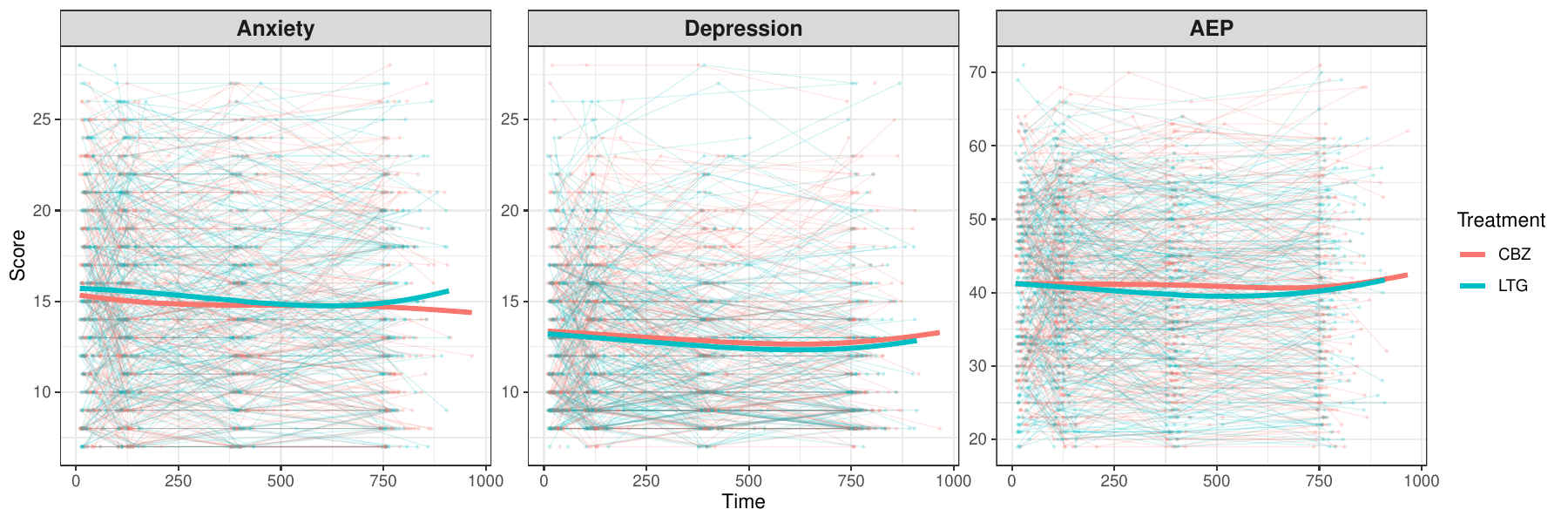}
    \caption{Spaghetti plots of \texttt{anxiety}, \texttt{depress}, and \texttt{aep} score for each subject.}
    \label{fig:spaghetti}
\end{figure}

Figure~\ref{fig:spaghetti} provides an initial exploratory view of the longitudinal quality-of-life trajectories. The treatment groups appear highly overlapping across the three outcomes, and the fitted smooth curves suggest only modest population-level changes over time. At the same time, the individual trajectories exhibit considerable subject-specific variation. This variability is particularly visible for AEP, where patients show more heterogeneous longitudinal patterns than for anxiety and depression. These empirical patterns motivate a multivariate mixed-effects analysis that can separate average treatment and time effects from subject-specific longitudinal heterogeneity.

Our goal is to assess whether treatment and time affect the quality-of-life trajectories of adults with epilepsy. We therefore treat \texttt{anxiety}, \texttt{depress}, and \texttt{aep} as the multivariate response and include \texttt{trt} together with orthogonal polynomial terms of \texttt{time} in the fixed effects. For the random effects, we include only the orthogonal polynomial terms of \texttt{time}. The fitted model is
\begin{equation*}
    \y_{ij} = \bbeta^\top \x_{ij} + \B_i^\top \z_{ij} + \eepsilon_{ij},
\end{equation*}
where
$\y_{ij} = [\texttt{anxiety}_{ij}, \texttt{depress}_{ij}, \texttt{aep}_{ij}]^\top$,
$\x_{ij} = [1,\texttt{trt}_{ij},\texttt{time}_{ij},\texttt{time}_{ij}^2,\texttt{time}_{ij}^3,\texttt{time}_{ij}^4]^\top$,
and
$\z_{ij} = [1,\texttt{time}_{ij},\texttt{time}_{ij}^2,\texttt{time}_{ij}^3,\texttt{time}_{ij}^4]^\top$.
The estimated fixed-effects coefficient matrix is
$$\hat{\bbeta} = \left[ {\begin{array}{ccc}
    15.23 & 13.06 & 40.95 \\
    0 & 0 & 0  \\
    0 & 0 & 0  \\
    0 & 0 & 0  \\
    0 & 0 & 0  \\
    0 & 0 & 0  \\
  \end{array} } \right].$$
Thus, only the intercept terms are retained, while the treatment effect and all time-related fixed effects are shrunk to zero. This suggests that, at the population level, the two treatments do not differ substantially in their effects on these quality-of-life outcomes and that the average trajectories remain essentially flat over the two-year follow-up period. This finding is consistent with the conclusion reported by \citet{jacoby2015quality}.

To examine subject-specific heterogeneity, we also inspect the estimated random-effects covariance matrix $\hat{\Ssigma}_B \in \mathbb{R}^{15 \times 15}$, displayed in Figure~\ref{fig:heatmap}. The heatmap reveals clear differences across the three outcomes. For \texttt{anxiety} and \texttt{depress}, only the random intercept variance is selected, while the random coefficients associated with time are shrunk to zero. This suggests that between-subject variation is concentrated in baseline emotional status, with little evidence of heterogeneous temporal evolution in these two responses. In contrast, the block corresponding to \texttt{aep} exhibits a substantially richer covariance structure, including nonzero higher-order time effects and non-negligible covariances among them. This pattern indicates considerable subject-specific heterogeneity in the longitudinal behavior of adverse-effect burden: some patients improve over time, some worsen, and others follow more complex nonlinear trajectories. Overall, these findings demonstrate that \texttt{MOMENT} can distinguish response-specific heterogeneity in longitudinal dynamics while maintaining an interpretable population-level model.

\begin{figure}[H]
    \centering
    \includegraphics[width=0.4\textwidth]{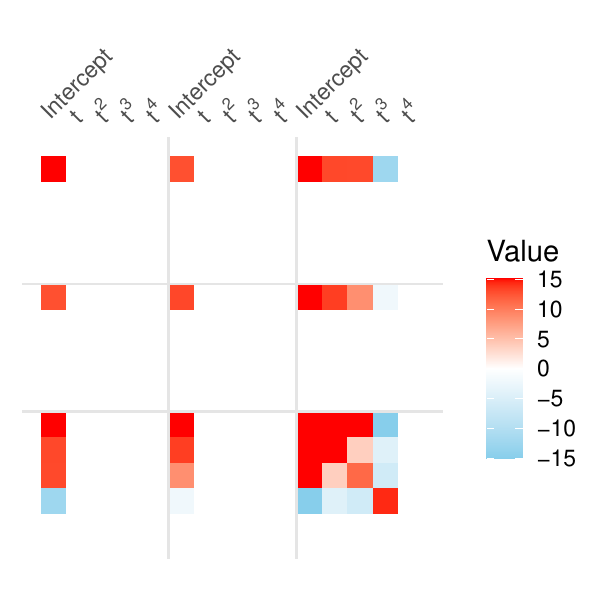}
    \caption{The heatmap of the covariance matrix $\hat{\Ssigma}_B.$ The first block diagonal represents the covariance matrix for response \texttt{anxiety}. The second block diagonal represents the covariance matrix for response \texttt{depress}. The third block diagonal represents the covariance matrix for response \texttt{aep}. The off-diagonal blocks represent the covariance structure among responses.}
    \label{fig:heatmap}
\end{figure}

\section{Proof of Proposition \ref{prop-equivalence}}
Let $\Ssigma \in \R^{qd \times qd}$ be the decision variable and $\Ssigma \succeq 0$ be the constraint. Define the common loss
$$L(\Ssigma)=\frac{1}{2}\sum_{i=1}^m \sum_{\substack{j,k \\ j\neq k}}^{n_i} \left\| (\y_{ij}- \hat{\bbeta}^\top\x_{ij})(\y_{ik}-\hat{\bbeta}^\top\x_{ik})^\top - (\bm{I}_d \otimes \bm{z}_{ij}^\top)\Ssigma (\bm{I}_d \otimes \bm{z}_{ik}) \right\|_F^2.$$
Now, with $\mathcal{P}_B = \lambda \|\W\cdot \mathrm{diag}(\Ssigma_B)\|_1$, define two objective functions
$$F_1(\Ssigma) = L(\Ssigma) + \lambda\  \left\|\W\cdot \diag\left(\Ssigma\right)\right\|_1, \quad F_2(\Ssigma) = L(\Ssigma) + \lambda\ \tr(\W \cdot\Ssigma).$$
For every feasible $\Ssigma \succeq 0,$ we have
$$\lambda \left\|\W\cdot \diag\left(\Ssigma\right)\right\|_1 = \sum_{j=1}^{qd} \W_{jj} \left| \Ssigma_{jj}\right|=\sum_{j=1}^{qd} \W_{jj} \Ssigma_{jj}=\tr(\W \cdot\Ssigma).$$
Hence
$$F_1(\Ssigma) = F_2(\Ssigma) \quad \forall \Ssigma \succeq 0.$$
Define the solution sets to problem \eqref{equ-random-sel} and \eqref{equ-equivalent} as
$$\Set_1 = \argmin_{\Ssigma \succeq 0} F_1(\Ssigma), \quad \Set_2 = \argmin_{\Ssigma \succeq 0} F_2(\Ssigma).$$
Fix any $\Ssigma^\star \in \Set_1,$ then for all feasible $\Ssigma \succeq 0,$ we have
$$F_2(\Ssigma^\star) = F_1(\Ssigma^\star) \leq F_1(\Ssigma) = F_2(\Ssigma).$$
So, $\Ssigma^\star \in \Set_2,$ which means $\Set_1 \subseteq \Set_2.$ The reverse inclusion $\Set_2 \subseteq \Set_1$ also holds by swapping indices 1 and 2. Therefore, we have proved $\Set_1 = \Set_2,$ and thus problem \eqref{equ-random-sel} with $\mathcal{P}_B = \lambda \|\W\cdot \mathrm{diag}(\Ssigma_B)\|_1$ is equivalent to problem \eqref{equ-equivalent}.

\section{Proximal gradient descent algorithm for the selection of $\Sigma_B$}
\label{appendix-algo}
The gradient of the loss function in \eqref{equ-equivalent} without the positive semidefinite constraint is
\begin{equation*}
        \nabla F(\Ssigma_{B})= \sum_{i=1}^m \sum_{\substack{j,k \\ j\neq k}}^{n_i} (\I_d \otimes \z_{ij})\left[ (\I_d \otimes \z_{ij}^\top) \Ssigma_{B} (\I_d \otimes \z_{ik})-(\y_{ij}-\hat{\bbeta}^\top \x_{ij})(\y_{ik}-\hat{\bbeta}^\top \x_{ik})^\top\right] (\I_d \otimes \z_{ik}^\top) + \lambda\ \W.
\end{equation*}
We define the projection operator 
$$\mathrm{proj}_{\SS^{qd}_+}\left(\Ssigma\right):=\sum_{h=1}^{qd} \max(v_h,0)\u_h \u_h^\top,$$
where $\Ssigma=\sum_{h=1}^{qd} v_h \u_h \u_h^\top$ is the eigen-decomposition of a square matrix $\Ssigma.$ Set the step size $\eta = 1/L,$ where L is the largest eigenvalue of the Hessian matrix of the loss function in \eqref{equ-equivalent} without the PSD constraint. We adopt adaptive lasso-type weights constructed from a pilot estimate of the random-effects covariance matrix. Specifically, after obtaining an initial estimate $\tilde{\Ssigma}_B$, we set
$\text{w}_t = (\tilde{\Ssigma}_{B_{tt}} + a)^{-\gamma}$ where $a > 0$ is a small constant to prevent division by zero, and we take $\gamma = 2$ in our implementation.
This choice penalizes small preliminary variance estimates more heavily and reduces shrinkage on larger ones. Then, we give the pseudocode for the proximal gradient descent algorithm for the variable selection of $\Ssigma_B.$ Note that after the convergence of this proximal gradient descent, we can find a reasonable small threshold to hard-threshold the diagonal and select for the active diagonal elements.
\begin{algorithm}
	\renewcommand{\algorithmicrequire}{\textbf{Input:}}
	\renewcommand{\algorithmicensure}{\textbf{Output:}}
	\caption{Proximal Gradient Descent for Selecting $\Ssigma_B$}
	\label{alg-PGD-acc}
	\begin{algorithmic}[1]
		\State Initialization: $\Ssigma_{B}^{[0]} \in \R^{qd \times qd},t=0$
		\Repeat
		\State $t = t + 1$
        \State $\Ssigma_{B}^{[t]}=\mathrm{proj}_{\SS^{qd}_+}\left( \Ssigma_{B}^{[t-1]} - \eta \nabla F (\Ssigma_{B}^{[t-1]}) \right)$
		\Until Convergence
		\Ensure  $\Ssigma_B^{[t]}$
	\end{algorithmic}  
\end{algorithm}

\section{Tuning parameters selection}
We select the tuning parameters using a likelihood-based cross-validation criterion. Specifically, for the purpose of evaluating the validation likelihood, we assume that $B_i$ and $\epsilon_i$ are normally distributed. This assumption is used solely to define the likelihood function and does not affect the validity of the proposed estimation and selection procedure. 

Two tuning parameters are involved in the model: $\lambda$, which controls random effects selection, and $\tau$, which governs the refined estimation of $\bbeta$. We employ a cross-validation procedure to select these parameters sequentially. In particular, we adopt the one-standard-error (1-se) rule when selecting $\lambda$, which favors larger values of $\lambda$ and leads to a more parsimonious random effects structure. The detailed cross-validation procedure is given in Algorithm~\ref{alg-cv}.

\begin{algorithm}[!htbp]
	\renewcommand{\algorithmicrequire}{\textbf{Input:}}
	\renewcommand{\algorithmicensure}{\textbf{Output:}}
	\caption{Procedure of 5-fold cross-validation}
	\label{alg-cv}
    \begin{algorithmic}[1]
    \State Estimate the full $\Ssigma_B$ with $\lambda=0$ to get the weight matrix $\W$.
        \For{$k$ = 1 : 5}
        \State Separate the data into training set $\mathcal{T}_t^{(k)}$ and validation set $\mathcal{T}_v^{(k)}$.
            \ForAll{candidate $\lambda \in \Lambda$}
                \State Compute the $\hat{\Ssigma}_{B}$ for selection and $\hat{\Ssigma}_{\epsilon}$ with the training set $\mathcal{T}_t^{(k)}$.
                \State Compute the log-likelihood $\LL_{\lambda}^{(k)}(\hat{\Ssigma}_{B},\hat{\Ssigma}_{\epsilon})$ on the validation set $\mathcal{T}_v^{(k)}$.
            \EndFor
        \EndFor
        \State Calculate the average of five log-likelihoods: $\LL_{\lambda}=\sum_{k=1}^5 \LL_{\lambda}^{(k)}(\hat{\Ssigma}_{B},\hat{\Ssigma}_{\epsilon}),$ and get the maximal average likelihood $\LL_{\max}.$ For this specific $\lambda_{\max}$ that generates the maximal likelihood, compute the standard error of the likelihood $\text{se}(\LL_{\max})$ over the 5 folds. \newline
         $\lambda_\text{optimal} \leftarrow \max\limits_{\LL_{\tilde \lambda} > \LL_{\max} - \text{se}(\LL_{\max})} \tilde \lambda$
        \State Update $\Ssigma_B$ with $\lambda_\text{optimal}$, then select the relevant random effects.
    \State Refit the model to get $\hat{\Ssigma}_{B}$ and $\hat{\Ssigma}_{\epsilon}.$
    \State Do regular cross-validation to find $\tau_{optimal}$ and $\hat{\bbeta}.$
	\end{algorithmic}
\end{algorithm}

\section{Derivation of problem reformulation as lasso problem}
\label{appendix-reform}
We first vectorize the population's second moment
$$\begin{aligned}
    \mathrm{vec}[(\I_d \otimes \z_{ij}^\top) \Ssigma_B (\I_d \otimes \z_{ik})] &= [(\I_d \otimes \z_{ik}) \otimes (\I_d \otimes \z_{ij})]^\top \vvec(\Ssigma_B)\\
    & = [(\I_d \otimes \z_{ik}) \otimes (\I_d \otimes \z_{ij})]_\mathcal{I}^\top\cdot \ttheta + [(\I_d \otimes \z_{ik}) \otimes (\I_d \otimes \z_{ij})]_{\mathcal{I}^c}^\top\cdot \rrho.
\end{aligned}
$$
So, the original problem \eqref{equ-theory-original} becomes:
\begin{equation*}
\begin{split}
        \min_{\ttheta, \rrho} &\frac{1}{2}\sum_{i=1}^m \sum_{\substack{j,k \\ j\neq k}}^{n_i} \left\| [(\I_d \otimes \z_{ik}) \otimes (\I_d \otimes \z_{ij})]_\mathcal{I}^\top\cdot \ttheta + [(\I_d \otimes \z_{ik}) \otimes (\I_d \otimes \z_{ij})]_{\mathcal{I}^c}^\top\cdot \rrho \right. \\ 
        &\left. \quad \quad \quad \quad \quad -\mathrm{vec}[(\y_{ij}-\bbeta^\top \x_{ij})(\y_{ik}-\bbeta^\top \x_{ik})^\top]\right\|_2^2 + \lambda \|\W \ttheta\|_1,
\end{split}
\end{equation*}
which is equivalent to 
\begin{equation*}
\begin{split}
        \min_{\ttheta, \rrho} &\frac{1}{2}\sum_{i=1}^m \sum_{\substack{j,k \\ j\neq k}}^{n_i} \sum_{\substack{\ell=1 \\ \ell'=1}}^d \left( [(\e_{\ell} \otimes \z_{ik}) \otimes (\e_{\ell'} \otimes \z_{ij})]_\mathcal{I}^\top\cdot \ttheta + [(\e_{\ell} \otimes \z_{ik}) \otimes (\e_{\ell'} \otimes \z_{ij})]_{\mathcal{I}^c}^\top\cdot \rrho \right. \\ 
        &\left. \quad \quad \quad \quad \quad - (y_{ij\ell}-\bbeta_{\ell}^\top \x_{ij})(y_{ik\ell'}-\bbeta_{\ell'}^\top \x_{ik})\right)^2 + \lambda \|\W \ttheta\|_1.
\end{split}
\end{equation*}
This can be further simplified to
$$\min_{\ttheta, \rrho} \frac{1}{2}\|\A_{\mathcal{I}}\ttheta + \A_{\mathcal{I}^c}\rrho - \r\|_2^2 + \lambda \|\W\ttheta\|_1.$$
Recall that $\P_{\A_{\mathcal{I}^c}^\perp} = \I - \A_{\mathcal{I}^c}(\A_{\mathcal{I}^c}^\top \A_{\mathcal{I}^c})^{-1}\A_{\mathcal{I}^c}^\top \in \R^{Md^2\times Md^2}$ is the projection matrix onto the orthogonal complement of the column space of $\A_{\mathcal{I}^c}.$ Hence, the problem is equivalent to an adaptive lasso form of the problem
$$\min_{\ttheta} \frac{1}{2} \|\P_{\A_{\mathcal{I}^c}^\perp}(\A_{\mathcal{I}}\ttheta - \r)\|_2^2 + \lambda\|\W\ttheta\|_1,$$
by appropriate rescaling of the covariates \citep{buhlmann2011statistics}, this adaptive lasso reduces to the standard lasso problem as in 
\eqref{equ-theory-lasso}.

\section{Primal-Dual Witness construction}
With Assumption~\ref{a2} holds, we use PDW to construct a pair $(\check{\ttheta},\check{\v}) \in \R^{qd} \times \R^{qd} $ according to the following steps.
\begin{enumerate}[label=(\arabic*)]
    \item We first solve the restricted problem
    \begin{equation*}
        \check{\ttheta}_S = \argmin_{\ttheta_S \in \R^s} \frac{1}{2} \|\mathcal{Y} - \mathcal{X}_S\ttheta_S\|_2^2 + \lambda\|\ttheta_S\|_1
    \end{equation*}
    to obtain $\check{\ttheta}_S \in \R^s.$ Note that this solution is always unique under the invertibility assumption on $\mathcal{X}_S^\top \mathcal{X}_S$ implied by Assumption~\ref{a2}. We set $\check{\ttheta}_{S^c} = 0.$
    
    \item Then, we take $\check{\v}_S \in \R^s$ to be an arbitrary element of the subdifferential of the $\ell_1$ norm evaluated at $\check{\ttheta}_S.$

    \item We solve for a vector $\check{\v}_{S^c} \in \R^{qd-s}$ satisfying the zero subgradient condition
    \begin{equation*}
        \mathcal{X}^\top \mathcal{X} (\hat{\ttheta} - \ttheta^\star) - \mathcal{X}^\top \oomega + \lambda \hat{\v} = 0,
    \end{equation*}
    where $\hat{\v} \in \partial \|\hat{\ttheta}\|_1$ is a subgradient vector. And we need to check whether the strict dual feasibility $|\check{\v}_j| < 1$ for all $j\in S^c$ is satisfied.

    \item Finally, we verify whether the sign consistency condition $\check{\v}_S = \text{sign}(\ttheta_S^\star)$ is satisfied.
\end{enumerate}

Later in the proof, we will verify the strict dual feasibility in Step 3 and the sign consistency condition in Step 4. Each element $j \in S^c$ in the vector $\check{\v}_{S^c}$ can be written as a scalar random variable
\begin{equation}
\label{rv-vj}
    V_j = \mathcal{X}_j^\top \left[\mathcal{X}_S (\mathcal{X}_S^\top \mathcal{X}_S)^{-1} \check{\v}_S + \P_{\mathcal{X}^\perp_S} \cdot \frac{\oomega}{\lambda}\right],
\end{equation}
where $\P_{\mathcal{X}^\perp_S} = \I - \mathcal{X}_S (\mathcal{X}_S^\top \mathcal{X}_S)^{-1} \mathcal{X}_S^\top \in \R^{Md^2 \times Md^2}$ is an orthogonal projection matrix, and $\check{\v}_S \in \R^s$ is the subgradient vector chosen in Step 2 of PDW construction. On the other hand, for each $k\in S,$ we can define the scalar random variable
\begin{equation}
\label{rv-deltak}
    \Delta_k = \e_k^\top \left(\mathcal{X}_S^\top \mathcal{X}_S \right)^{-1} \left[ \mathcal{X}_S^\top \oomega - \lambda \cdot \text{sign}(\ttheta_S^\star)\right].
\end{equation}
Note that if the problem is sign consistent, then $\Delta_k$ is equal to $\hat{\ttheta}_k - \ttheta^\star_k,$ which is the difference between the lasso solution $\hat{\ttheta}$ and the truth $\ttheta^\star$ at position $k.$ The following proposition reviews the lemmas in \cite{wainwright2009sharp} that are essential for proving the sign-selection consistency.

\begin{proposition}[Lemma 2 and 3 in \cite{wainwright2009sharp}]
\label{prop-wainwright}
    Assume Assumption~\ref{a2} holds, then
    \begin{enumerate}
        \item [(a)] The strict dual feasibility check in Step 3 of the PDW succeeds if and only if 
        $$|V_j| < 1, \quad \forall j \in S^c,$$
        which is a sufficient condition for the problem \eqref{equ-theory-lasso} to have a unique solution $\hat \ttheta$ with $\mathrm{supp}(\hat \ttheta) \subseteq \mathrm{supp}(\ttheta^\star).$

        \item [(b)] The sign-selection consistency in Step 4 of the PDW can be satisfied if and only if
        $$\mathrm{sign}(\ttheta^\star_k + \Delta_k) = \mathrm{sign}(\ttheta^\star_k), \quad \forall k \in S,$$
        which, combined with strict dual feasibility, are sufficient conditions for the problem \eqref{equ-theory-lasso} to have a unique solution $\hat \ttheta$ with correct signed support recovery $\mathrm{supp}_{\pm}(\hat{\ttheta}) = \mathrm{supp}_{\pm}(\ttheta^\star).$
    \end{enumerate}
\end{proposition}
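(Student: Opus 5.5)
The plan is to reproduce the primal-dual witness argument of \cite{wainwright2009sharp} for the lasso problem \eqref{equ-theory-lasso}. The ingredients are the model $\mathcal{Y} = \mathcal{X}\ttheta^\star + \oomega$, where $\oomega = \P_{\A_{\mathcal{I}^c}^\perp}\xxi$ is the projected noise, and the invertibility of $\mathcal{X}_S^\top\mathcal{X}_S$ from Assumption~\ref{a2}. The problem is convex, so a vector $\hat\ttheta$ is optimal if and only if there is some $\hat\v \in \partial\|\hat\ttheta\|_1$ satisfying the zero-subgradient condition
\[
\mathcal{X}^\top\mathcal{X}(\hat\ttheta-\ttheta^\star) - \mathcal{X}^\top\oomega + \lambda\hat\v = 0 .
\]
Everything below consists of reading off this condition in the $S$ and $S^c$ blocks for the PDW pair $(\check\ttheta,\check\v)$, which has $\check\ttheta_{S^c}=0$.

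For part (a), I would first write out the $S$-block of the condition. Invertibility of $\mathcal{X}_S^\top\mathcal{X}_S$ gives
\[
\check\ttheta_S-\ttheta^\star_S = (\mathcal{X}_S^\top\mathcal{X}_S)^{-1}\bigl(\mathcal{X}_S^\top\oomega - \lambda\check\v_S\bigr).
\]
Substituting this into the $S^c$-block and solving for $\check\v_{S^c}$ produces exactly $\check\v_j = V_j$ in \eqref{rv-vj}. The term $\mathcal{X}_S(\mathcal{X}_S^\top\mathcal{X}_S)^{-1}\check\v_S$ comes from the $\lambda\check\v_S$ piece, and the projection $\P_{\mathcal{X}_S^\perp}\oomega/\lambda$ collects the noise terms. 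Hence strict dual feasibility holds if and only if $|V_j|<1$ for all $j\in S^c$.

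For sufficiency in (a), I would proceed in three steps.
\begin{enumerate}[label=(\roman*)]
\item The PDW pair satisfies the full optimality condition, so $\check\ttheta$ is optimal.
\item Every optimal $\tilde\ttheta$ has the same fitted value $\mathcal{X}\tilde\ttheta = \mathcal{X}\check\ttheta$, by strict convexity of the quadratic in the fitted value. Therefore the dual vector $\check\v$, which is determined by the fit, is common to all solutions.
\item By optimality, $\|\tilde\ttheta\|_1$ equals the common optimal $\ell_1$ value, and $\check\v$ attains the subgradient equality, so $\check\v^\top\tilde\ttheta = \|\tilde\ttheta\|_1$. Since $|\check\v_j|<1$ off $S$, this forces $\tilde\ttheta_{S^c}=0$.
\end{enumerate}
Any solution therefore solves the restricted problem on $S$, whose solution is unique because $\mathcal{X}_S^\top\mathcal{X}_S$ is invertible. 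This gives uniqueness and $\mathrm{supp}(\hat\ttheta)\subseteq S$.

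For part (b), suppose we choose $\check\v_S=\mathrm{sign}(\ttheta^\star_S)$. Then the display above reads $\check\ttheta_S = \ttheta^\star_S + \Delta$, with $\Delta_k$ as in \eqref{rv-deltak}. This choice is a legitimate subgradient at $\check\ttheta_S$ exactly when $\mathrm{sign}(\check\theta_k)=\mathrm{sign}(\theta^\star_k)$ for all $k\in S$, which is the stated condition.

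For the converse, suppose the restricted solution has $\mathrm{sign}(\check\ttheta_S)=\mathrm{sign}(\ttheta^\star_S)$. Then the subgradient $\check\v_S$ must equal $\mathrm{sign}(\ttheta^\star_S)$, and $\check\ttheta_S-\ttheta^\star_S=\Delta$. Combining this with part (a) gives the unique solution and $\mathrm{supp}_\pm(\hat\ttheta)=\mathrm{supp}_\pm(\ttheta^\star)$.

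The main obstacle is step (ii) of part (a): showing that all optima share the same dual vector, and hence that strict dual feasibility rules out any other solution with off-support entries. The remaining steps are linear algebra.
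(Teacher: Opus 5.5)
Your argument is correct and is the primal--dual witness argument of \cite{wainwright2009sharp}, which the paper cites rather than reproves: the optimality conditions, equality of fitted values across all lasso solutions, and block elimination using the invertibility of $\mathcal{X}_S^\top\mathcal{X}_S$. Two small points need tightening. First, $\mathrm{sign}(\ttheta^\star_S)$ is also a valid subgradient at any coordinate where $\theta^\star_k+\Delta_k=0$, so your ``exactly when'' in (b) is only an ``if'' for the literal condition $\check{\v}_S=\mathrm{sign}(\ttheta^\star_S)$; the equivalence becomes exact when sign consistency is read as $\mathrm{sign}(\check{\ttheta}_S)=\mathrm{sign}(\ttheta^\star_S)$, as in your converse. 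Second, identifying the candidate $\ttheta^\star_S+\Delta$ with $\check{\ttheta}_S$ should explicitly invoke uniqueness of the restricted solution.
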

Proposition~\ref{prop-wainwright} reduces signed support recovery of \eqref{equ-theory-lasso} to verification of two conditions arising from the PDW construction: the strict dual feasibility condition on the inactive set and the sign consistency condition on the active set. Therefore, to establish selection consistency, it suffices to control the random quantities $V_j$ for $j \in S^c$ and $\Delta_k$ for $k \in S$. The main theorem below shows that, under the stated assumptions and an appropriate choice of $\lambda$, these two events hold simultaneously with high probability, which in turn yields unique recovery of the signed support $\mathrm{supp}_{\pm}(\hat{\ttheta}) = \mathrm{supp}_{\pm}(\ttheta^\star)$.

\section{Proof of Theorem \ref{thm-sel-consist}}
By the Primal-Dual Witness construction and the lemmas from \cite{wainwright2009sharp}, we need to first establish the strict dual feasibility for random variable $\{V_j, j\in S^c\}$ defined in (\ref{rv-vj}) with high probability to ensure that Step 3 of the PDW construction succeeds. Then, establish the $\ell_{\infty}$ bound for the random variable $\{\Delta_k, k \in S\}$ defined in (\ref{rv-deltak}) so that the sign consistency condition in Step 4 of PDW construction holds.

\paragraph*{Part 1: Establish strict dual feasibility with high probability.}
In this part, we will prove that Step 3 of the PDW construction succeeds with high probability. We can decompose $V_j$ as 
$$V_j = \mu_j + \tilde{V}_j,$$
where $\mu_j = \mathcal{X}_j^\top \mathcal{X}_S (\mathcal{X}_S^\top \mathcal{X}_S)^{-1} \check{\v}_S,$ and $\tilde{V}_j = \mathcal{X}_j^\top \P_{\mathcal{X}^\perp_S} \cdot \frac{\oomega}{\lambda} = \mathcal{X}_j^\top \P_{\mathcal{X}^\perp_S} \P_{\A_{\mathcal{I}^c}^\perp} \cdot \frac{\xxi}{\lambda}.$ Since $\check{\v}_S$ is a subgradient vector for the $\ell_1$ norm, we have $\|\check{\v}_S\|_{\infty} \leq 1.$ Furthermore, by the irrepresentability assumption \ref{a1}, we have $|\mu_j| \leq 1-\nu$ for all $j \in S^c$. Therefore, we have
$$\max_{j \in S^c} |V_j| \leq 1 - \nu + \max_{j \in S^c} |\tilde{V}_j|.$$

Let $\g^{(j)} = \P_{\A_{\mathcal{I}^c}^\perp}\P_{\mathcal{X}^\perp_S} \mathcal{X}_j /\lambda \in \R^{Md^2}.$ We can separate $\g^{(j)}$ in the same way as we separate $\xxi$ such that $\g^{(j)} = [\g_1^{(j)^ \top}, \g_2^{(j)^ \top}, ..., \g_m^{(j)^ \top}]^\top,$ where $\g^{(j)}_i \in \R^{N_i d^2}.$ So, $\tilde{V}_j = \sum_{i=1}^m \g_i^{(j)^ \top} \xxi_i$ is the sum of independent zero-mean sub-Weibull($\frac{\alpha}{2}$) scalar random variables. Let $\g^{(j)^\top}_i = \mathcal{X}_j^\top \P_{\mathcal{X}^\perp_S} \P_{\A_{\mathcal{I}^c}^\perp} \Ppi_i/\lambda,$ where 
\begin{equation}
\label{equ-theory-Pi}
    \Ppi_i = \left[
\begin{array}{@{}c@{}}
    \mathbf{0}\\
    \I\\
    \tilde{\mathbf{0}}
\end{array}
\right]
\in\R^{Md^2\times N_i d^2}
\end{equation}
is a column selector matrix, $\mathbf{0}\in \R^{(\sum_{\ell=0}^{i-1} N_{\ell})d^2 \times N_i d^2}$ and $\tilde{\mathbf{0}}\in \R^{(M - \sum_{\ell=1}^i N_{\ell})d^2 \times N_i d^2}$ are matrices full of zeros, $\I \in \R^{N_i d^2 \times N_i d^2}$ is an identity matrix. Note that $\I$ will be on the top of $\Ppi_i$ if $i=1,$ and will be on the bottom of $\Ppi_i$ if $i=m$. The location of $\I$ in $\Ppi_i$ will change corresponding to $i.$ Then, 
\begin{equation*}
    \begin{aligned}
        \|\g_i^{(j)}\|_2 &= \frac{1}{\lambda} \|\mathcal{X}_j^\top \P_{\mathcal{X}^\perp_S} \P_{\A_{\mathcal{I}^c}^\perp} \Ppi_i\|_2 \\
        &\leq \frac{1}{\lambda} \|\mathcal{X}_j \|_2 \cdot \vertiii{\P_{\mathcal{X}^\perp_S} \P_{\A_{\mathcal{I}^c}^\perp} \Ppi_i}_2 \\
        &\leq \frac{1}{\lambda} \vertiii{\P_{\A_{\mathcal{I}^c}^\perp}}_2 \|\A_{\mathcal{I}_j}\|_2 \vertiii{\P_{\mathcal{X}^\perp_S}}_2 \cdot \vertiii{\P_{\A_{\mathcal{I}^c}^\perp}}_2 \cdot \vertiii{\Ppi_i}_2 \\
        & \leq \frac{\sqrt{M}d}{\lambda }.
    \end{aligned}
\end{equation*}
Here we use the condition $(Md^2)^{-\frac{1}{2}} \max_{j \in S^c} \|\A_{\mathcal{I}_j}\| \leq 1$, and the fact that the projection matrices $\P_{\mathcal{X}^\perp_S}$ and $\P_{\A_{\mathcal{I}^c}^\perp}$ have spectral norm one, and the largest singular value of $\Ppi$ is one. We also repeatedly use the sub-multiplicative property of operator norms. Using the same argument, we can prove that $\|\g^{(j)}\| \leq \frac{\sqrt{M}d}{\lambda}$ as well. Recall that $K_i = \|\xxi_i\|_{J,\psi_{\alpha/2}},$ then $\sum_{i=1}^m \|\g_i^{(j)^\top} \xxi_i\|_{\psi_{\alpha/2}}^2 \leq \sum_{i=1}^m \|\g_i^{(j)}\|_2^2 K_i^2 \leq \|\g^{(j)}\|_2^2 (\max_i K_i)^2 \leq \frac{Md^2}{\lambda^2} (\max_i K_i)^2.$ So, by Theorem 3.1 from \cite{kuchibhotla2022moving}, we have
\begin{equation*}
    \begin{aligned}
        \PP\left(|\tilde{V}_j| \geq t\right) &\leq 2 \exp\left\{-c\cdot \min \left[\frac{t^2}{\sum_{i=1}^m \|\g_i^{(j)^\top} \xxi_i\|_{\psi_{\alpha/2}}^2}, \left(\frac{t}{\max_i \|\g_i^{(j)^\top} \xxi_i\|_{\psi_{\alpha/2}}}\right)^{\alpha/2}\right]\right\} \\
        &\leq 2 \exp\left\{-c\cdot \min \left[\frac{t^2}{\sum_{i=1}^m \left\|\g_i^{(j)}\right\|_2^2 K_i^2}, \left(\frac{t}{\max_i \left\|\g_i^{(j)}\right\|_2 K_i}\right)^{\alpha/2}\right]\right\} \\
        &\leq 2 \exp\left\{-c\cdot \min \left[\frac{t^2 \lambda^2}{Md^2 \cdot (\max_i K_i)^2}, \left(\frac{t \lambda }{\sqrt{M}d \cdot \max_i K_i}\right)^{\alpha/2}\right]\right\}.
    \end{aligned}
\end{equation*}
By union bound, we have
\begin{equation*}
    \begin{aligned}
        \PP\left(\max_{j\in S^c}|\tilde{V}_j| \geq t\right) &\leq \sum_{j \in S^c}\PP\left(|\tilde{V}_j| \geq t\right) \\
        &\leq 2(qd-s) \exp\left\{-c\cdot \min \left[\frac{t^2 \lambda^2}{Md^2 \cdot (\max_i K_i)^2}, \left(\frac{t \lambda }{\sqrt{M}d \cdot \max_i K_i}\right)^{\alpha/2}\right]\right\}.
    \end{aligned}
\end{equation*}
By setting $t = \frac{\nu}{2}$ we will have
$$\PP\left(\max_{j\in S^c}|\tilde{V}_j| \geq \frac{\nu}{2}\right) \leq 2(qd-s) \exp\left\{-c\cdot \min \left[\frac{\nu^2 \lambda^2}{4Md^2(\max_i K_i)^2}, \left(\frac{\nu \lambda }{2\sqrt{M}d\max_i K_i}\right)^{\alpha/2}\right]\right\}.$$
With a slightly different constant $\tilde{c}$, let 
\begin{align*}
    \lambda & \geq \frac{2d\sqrt{M} \max_i K_i}{\tilde{c} \nu}\max\left\{\sqrt{\log\frac{2(qd-s)}{\delta}}, \left(\log\frac{2(qd-s)}{\delta}\right)^{2/\alpha}\right\} \\
    & \geq \max\left\{\sqrt{\frac{4Md^2 (\max_i K_i)^2}{c \nu^2}\log\frac{2(qd-s)}{\delta}}, \frac{2\sqrt{M}d \max_i K_i}{c^{2/\alpha} \nu }\left(\log\frac{2(qd-s)}{\delta}\right)^{2/\alpha}\right\} 
\end{align*}
we will have 
$$\PP\left(\max_{j\in S^c}|\tilde{V}_j| \geq \frac{\nu}{2}\right) \leq \delta.$$
Denote event $\mathscr{B} = \left\{\max_{j\in S^c}|\tilde{V}_j| \geq \frac{\nu}{2}\right\}$ and event $\mathscr{C} = \left\{\max_{j\in S^c}|V_j| \geq 1 - \frac{\nu}{2}\right\}.$ It's trivial to show that $\mathscr{B} ^c \subseteq \mathscr{C}^c,$ so $\mathscr{C} \subseteq \mathscr{B}.$ Therefore,
$$\PP\left(\max_{j\in S^c}|V_j| \geq 1 - \frac{\nu}{2}\right) \leq \PP\left(\max_{j\in S^c}|\tilde{V}_j| \geq \frac{\nu}{2}\right) \leq \delta.$$
By choosing $\nu$ to be small enough, we have proved that the strict dual feasibility in Step 3 of PDW construction succeeds with probability greater than $1-\delta.$

\paragraph*{Part 2: Establish $\ell_\infty$ bound.}
In this part, we will establish an $\ell_{\infty}$ bound on the variables $\{\Delta_k, \ k \in S\}$ to ensure that the sign selection consistency in Step 4 of PDW construction holds with high probability.

By the triangular inequality and the sub-multiplicative property for the matrix infinity norm, we have
\begin{equation*}
    \begin{aligned}
        \max_{k\in S} |\Delta_k| &\leq \left\|\left(\mathcal{X}_S^\top \mathcal{X}_S \right)^{-1}  \mathcal{X}_S^\top \oomega\right\|_{\infty} + \left\|\left(\mathcal{X}_S^\top \mathcal{X}_S \right)^{-1}\text{sign}(\ttheta^\star_S)\right\|_{\infty}\cdot \lambda\\
        &\leq \left\|\left(\mathcal{X}_S^\top \mathcal{X}_S \right)^{-1}  \mathcal{X}_S^\top \oomega\right\|_{\infty} + \bigvertiii{\left(\mathcal{X}_S^\top \mathcal{X}_S \right)^{-1}}_{\infty}\cdot \lambda\\
        &=\left\|\left(\mathcal{X}_S^\top \mathcal{X}_S \right)^{-1}  \mathcal{X}_S^\top \oomega\right\|_{\infty} + \bigvertiii{\left(\frac{1}{Md^2}\mathcal{X}_S^\top \mathcal{X}_S \right)^{-1}}_{\infty}\cdot \frac{\lambda}{Md^2}
    \end{aligned}
\end{equation*}
Since there is no randomness in the second term, we only need to bound the first term. For each $k = 1, 2, ..., s,$ consider the random variable
$$u_k = \e_k^\top \left(\mathcal{X}_S^\top \mathcal{X}_S \right)^{-1} \mathcal{X}_S^\top \oomega = \e_k^\top \left(\mathcal{X}_S^\top \mathcal{X}_S \right)^{-1} \mathcal{X}_S^\top \P_{\A_{\mathcal{I}^c}^\perp} \xxi.$$
Let $\h^{(k)^\top} = \e_k^\top \left(\mathcal{X}_S^\top \mathcal{X}_S \right)^{-1} \mathcal{X}_S^\top \P_{\A_{\mathcal{I}^c}^\perp} \in \R^{Md^2}.$ Again, we can separate it as $\h^{(k)} = [\h_1^{(k)^\top}, \h_2^{(k)^\top}, ..., \h_m^{(k)^\top}]^\top$ where $\h_i^{(k)} \in \R^{N_i d^2}.$ So, $u_k = \sum_{i=1}^m \h_i^{(k)^\top} \xxi_i$ is the sum of independent zero-mean sub-Weibull($\frac{\alpha}{2}$) random variables. Let $\h_i^{(k)^\top} = \e_k^\top \left(\mathcal{X}_S^\top \mathcal{X}_S\right)^{-1}  \mathcal{X}_S^\top \P_{\A_{\mathcal{I}^c}^\perp} \Ppi_i,$ where $\Ppi_i$ is the column selector matrix \eqref{equ-theory-Pi} defined in the previous part of the proof. Then, 
\begin{equation*}
    \begin{aligned}
        \|\h_i^{(k)}\|_2^2 &= \bigvertiii{\Ppi_i^\top \P_{\A_{\mathcal{I}^c}^\perp} \mathcal{X}_S \left(\mathcal{X}_S^\top \mathcal{X}_S\right)^{-1} \e_k \e_k^\top \left(\mathcal{X}_S^\top \mathcal{X}_S\right)^{-1} \mathcal{X}_S^\top \P_{\A_{\mathcal{I}^c}^\perp} \Ppi_i}_2 \\
        &\leq \bigvertiii{\left(\mathcal{X}_S^\top \mathcal{X}_S \right)^{-1} \mathcal{X}_S^\top}_2^2 \bigvertiii{\Ppi_i \Ppi_i^\top}_2 \bigvertiii{\P_{\A_{\mathcal{I}^c}^\perp}}_2^2 \bigvertiii{\e_k \e_k^\top}_2 \\
        &\leq \bigvertiii{\left(\mathcal{X}_S^\top \mathcal{X}_S \right)^{-1}}_2 \\
        &= \frac{1}{Md^2} \bigvertiii{\left(\frac{1}{Md^2}\mathcal{X}_S^\top \mathcal{X}_S \right)^{-1}}_2 \\
        &\leq \frac{1}{Md^2 \cdot C_{\min}}.
    \end{aligned}
\end{equation*}
Here, we repeatedly use the sub-multiplicative property of the operator norm to obtain the first inequality, and we use the assumption \ref{a2} to obtain the last inequality. Again, $\|\h^{(k)} \|_2^2 \leq \frac{1}{Md^2 \cdot C_{\min}}$ as well. Thus, by Theorem 3.1 from \cite{kuchibhotla2022moving}, we have 
\begin{equation*}
    \begin{aligned}
        \PP\left(|u_k| \geq t\right) &\leq 2 \exp\left\{-c\cdot \min \left[\frac{t^2}{\sum_{i=1}^m \|\h_i^{(j)^\top} \xxi_i\|_{\psi_{\alpha/2}}^2}, \left(\frac{t}{\max_i \|\h_i^{(j)^\top} \xxi_i\|_{\psi_{\alpha/2}}}\right)^{\alpha/2}\right]\right\} \\
        &\leq 2 \exp\left\{-c\cdot \min \left[\frac{t^2}{\sum_{i=1}^m \left\|\h_i^{(k)}\right\|_2^2 K_i^2}, \left(\frac{t}{\max_i \left\|\h_i^{(k)}\right\|_2 K_i}\right)^{\alpha/2}\right]\right\} \\
        &\leq 2 \exp\left\{-c\cdot \min \left[\frac{t^2 C_{\min} Md^2}{(\max_i K_i)^2}, \left(\frac{t \sqrt{C_{\min}M}d}{\max_i K_i}\right)^{\alpha/2}\right]\right\}.
    \end{aligned}
\end{equation*}
By union bound, we have
$$\PP\left(\max_{k\in S}|u_k| \geq t\right) \leq 2\cdot s \cdot \exp\left\{-c\cdot \min \left[\frac{t^2 C_{\min} Md^2}{(\max_i K_i)^2}, \left(\frac{t \sqrt{C_{\min}M}d}{\max_i K_i}\right)^{\alpha/2}\right]\right\}.$$
By setting
$$t = \max\left\{\sqrt{\frac{(\max_i K_i)^2}{c C_{\min} Md^2}\log\frac{2s}{\delta}}, \frac{\max_i K_i}{c^{2/\alpha} \sqrt{C_{\min}M}d}\left(\log\frac{2s}{\delta}\right)^{2/\alpha}\right\},$$
with a slightly different constant $\tilde{c}$, we have
\begin{align*}
    \|\hat{\ttheta}_S - \ttheta^\star_S\|_{\infty} &\leq \frac{\lambda}{Md^2} \bigvertiii{\left(\frac{1}{Md^2}\mathcal{X}_S^\top \mathcal{X}_S \right)^{-1}}_{\infty} + \max \left\{\sqrt{\frac{(\max_i K_i)^2}{c C_{\min} Md^2}\log\frac{2s}{\delta}}, \frac{\max_i K_i}{c^{2/\alpha} \sqrt{C_{\min}M}d}\left(\log\frac{2s}{\delta}\right)^{2/\alpha}\right\} \\
    & \leq \underbrace{\frac{\lambda}{Md^2} \bigvertiii{\left(\frac{1}{Md^2}\mathcal{X}_S^\top \mathcal{X}_S \right)^{-1}}_{\infty} + \frac{\max_i K_i}{\tilde{c} \sqrt{C_{\min} M} d}\max \left\{\sqrt{\log\frac{2s}{\delta}}, \left(\log\frac{2s}{\delta}\right)^{2/\alpha}\right\}}_{f(\lambda)}
\end{align*}
holds with probability greater than $1-\delta.$

Combining strict dual feasibility and the $\ell_\infty$ bound, if the minimal signal satisfies $\min_{j \in S} |\theta_j^\star| > f(\lambda)$, support and signed-support recovery follow.

\section{Proof of Corollary \ref{corollary-rate}}
\label{appendix-corollary-rate}
We start with quantifying $K_i= \|\xxi_i\|_{J,\psi_{\alpha/2}}.$ First, we focus on each element in the random vector $\xxi_i.$ By the triangular inequality, we have
\begin{equation*}
    \left\|\xi_{\pi(i,j,k,\ell,\ell')}\right\|_{\psi_{\alpha/2}} \leq \left\|\xi^{(BB)}_{\pi(i,j,k,\ell,\ell')}\right\|_{\psi_{\alpha/2}} + \left\|\xi^{(B\epsilon)}_{\pi(i,j,k,\ell,\ell')}\right\|_{\psi_{\alpha/2}} + \left\|\xi^{(\epsilon B)}_{\pi(i,j,k,\ell,\ell')}\right\|_{\psi_{\alpha/2}} + \left\|\xi^{(\epsilon \epsilon)}_{\pi(i,j,k,\ell,\ell')}\right\|_{\psi_{\alpha/2}},
\end{equation*}
where 
$$
\begin{aligned}
    \xi^{(BB)}_{\pi(i,j,k,\ell,\ell')} &= \B_{i,\ell}^\top \z_{ij} \z_{ik}^\top \B_{i,\ell'} - \A_{\pi(i,j,k,\ell,\ell'),\mathcal{I}} \ttheta^\star - \A_{\pi(i,j,k,\ell,\ell'),\mathcal{I}^C} \rrho^\star = \B_{i,\ell}^\top \z_{ij} \z_{ik}^\top \B_{i,\ell'} - \EE[\B_{i,\ell}^\top \z_{ij} \z_{ik}^\top \B_{i,\ell'}]\\
    \xi^{(B\epsilon)}_{\pi(i,j,k,\ell,\ell')} &= \B_{i,\ell}^\top \z_{ij} \epsilon_{ik\ell'}\\
    \xi^{(\epsilon B)}_{\pi(i,j,k,\ell,\ell')} &= \epsilon_{ij\ell} \z_{ik}^\top \B_{i,\ell'}\\
    \xi^{(\epsilon \epsilon)}_{\pi(i,j,k,\ell,\ell')} &= \epsilon_{ij\ell} \epsilon_{ik\ell'}.
\end{aligned}$$
We bound the Orlicz norm of these four random variables.

\paragraph*{Bound for $\|\xi^{(BB)}_{\pi(i,j,k,\ell,\ell')}\|_{\psi_{\alpha/2}}.$}
    First, we obtain a bound on the Orlicz norm of $\B_{i,\ell}$ by
$$\|\B_{i,\ell}\|_{J,\psi_{\alpha}} = \sup_{\|\eeta\|_2 = 1} \|\eeta^\top \Ssigma_B^{\star(\ell,\ell)^{1/2}} \pphi_{i,\ell}\|_{\psi_{\alpha}} \leq \sup_{\|\eeta\|_2 = 1} \|\Ssigma_B^{\star(\ell,\ell)^{1/2}} \eeta\|_2 \|\pphi_{i,\ell}\|_{J,\psi_{\alpha}} \leq \kappa \sqrt{\vertiii{\Ssigma_B^{\star(\ell,\ell)}}}_2.$$ 
Then, we bound the Orlicz norm of $\B_{i,\ell}^\top \z_{ij} \z_{ik}^\top \B_{i,\ell'}$ by
$$\begin{aligned}
    \left\|\B_{i,\ell}^\top \z_{ij} \z_{ik}^\top \B_{i,\ell'}\right\|_{\psi_{\alpha/2}} &\leq \left\|\z_{ij}^\top \B_{i,\ell}\right\|_{\psi_{\alpha}} \left\|\z_{ik}^\top \B_{i,\ell'}\right\|_{\psi_{\alpha}} \\
    &\leq \|\z_{ij}\|_2  \|\B_{i,\ell}\|_{J,\psi_{\alpha}}  \|\z_{ik}\|_2  \|\B_{i,\ell'}\|_{J,\psi_{\alpha}} \\
    &\leq \|\z_{ij}\|_2 \cdot \kappa \sqrt{\vertiii{\Ssigma_B^{\star(\ell,\ell)}}}_2 \cdot \|\z_{ik}\|_2 \cdot \kappa \sqrt{\vertiii{\Ssigma_B^{\star(\ell',\ell')}}}_2 \\
    &\leq C_z^2 \cdot \kappa^2 \cdot \max_{\ell}\vertiii{\Ssigma_B^{\star(\ell,\ell)}}_2.
\end{aligned}$$
By centering property (Lemma 2.4 \citep{sheu2023matrix}), 
$$\begin{aligned}
    \left\|\xi^{(BB)}_{\pi(i,j,k,\ell,\ell')}\right\|_{\psi_{\alpha/2}} & = \left\|\B_{i,\ell}^\top \z_{ij} \z_{ik}^\top \B_{i,\ell'} - \EE[\B_{i,\ell}^\top \z_{ij} \z_{ik}^\top \B_{i,\ell'}]\right\|_{\psi_{\alpha/2}} \\
    & \leq C \left\|\B_{i,\ell}^\top \z_{ij} \z_{ik}^\top \B_{i,\ell'}\right\|_{\psi_{\alpha/2}} \\
    & \leq C \cdot C_z^2 \cdot \kappa^2 \cdot \max_{\ell}\vertiii{\Ssigma_B^{\star(\ell,\ell)}}_2,
\end{aligned}$$
where C is a constant.

\paragraph*{Bound for $\|\xi^{(B\epsilon)}_{\pi(i,j,k,\ell,\ell')}\|_{\psi_{\alpha/2}}$ and $\|\xi^{(\epsilon B)}_{\pi(i,j,k,\ell,\ell')}\|_{\psi_{\alpha/2}}.$}
    Similarly, we have
\begin{equation*}
    \begin{aligned}
    \left\|\xi^{(B\epsilon)}_{\pi(i,j,k,\ell,\ell')}\right\|_{\psi_{\alpha/2}} & = \left\|\z_{ij}^\top \B_{i,\ell} \epsilon_{ik\ell'}\right\|_{\psi_{\alpha/2}} \\
    & \leq \left\| \z_{ij}^\top \B_{i,\ell} \right\|_{\psi_{\alpha}} \left\|\epsilon_{ik\ell'}\right\|_{\psi_{\alpha}} \\
    & \leq \|\z_{ij}\|_2 \|\B_{i,\ell}\|_{J,\psi_{\alpha}} \|\epsilon_{ik\ell'}\|_{\psi_{\alpha}}  \\
    & \leq C_z \cdot \kappa \sqrt{\vertiii{\Ssigma_B^{\star(\ell,\ell)}}}_2 \cdot \sigma_{\epsilon}.
\end{aligned}
\end{equation*}
And we have the same bound for $\left\|\xi^{(\epsilon B)}_{\pi(i,j,k,\ell,\ell')}\right\|_{\psi_{\alpha/2}}.$

\paragraph*{Bound for $\|\xi^{(\epsilon \epsilon)}_{\pi(i,j,k,\ell,\ell')}\|_{\psi_{\alpha/2}}.$}
    By a similar argument to the previous, we have
$$\begin{aligned}
    \left\|\xi^{(\epsilon \epsilon)}_{\pi(i,j,k,\ell,\ell')}\right\|_{\psi_{\alpha/2}} & = \left\|\epsilon_{ij\ell} \epsilon_{ik\ell'}\right\|_{\psi_{\alpha/2}} \\
    & \leq \left\|\epsilon_{ij\ell}\right\|_{\psi_{\alpha}} \left\|\epsilon_{ik\ell'}\right\|_{\psi_{\alpha}} \\
    & \leq \sigma_{\epsilon}^2.
\end{aligned}$$
Therefore, 
\begin{equation*}
\begin{aligned}
        \left\|\xi_{\pi(i,j,k,\ell,\ell')}\right\|_{\psi_{\alpha/2}} &\leq \left\|\xi^{(BB)}_{\pi(i,j,k,\ell,\ell')}\right\|_{\psi_{\alpha/2}} + \left\|\xi^{(B\epsilon)}_{\pi(i,j,k,\ell,\ell')}\right\|_{\psi_{\alpha/2}} + \left\|\xi^{(\epsilon B)}_{\pi(i,j,k,\ell,\ell')}\right\|_{\psi_{\alpha/2}} + \left\|\xi^{(\epsilon \epsilon)}_{\pi(i,j,k,\ell,\ell')}\right\|_{\psi_{\alpha/2}} \\
        &\leq C\left(C_z \kappa \sqrt{\max_{\ell}\vertiii{\Ssigma_B^{\star(\ell,\ell)}}_2} + \sigma_{\epsilon}\right)^2.
\end{aligned}
\end{equation*}
If $\alpha \geq 2,$ then $\| \cdot \|_{\psi_{\alpha/2}}$ is a genuine norm. So, the triangular inequality holds. Then,
$$\begin{aligned}
    K_i &= \|\xxi_i\|_{J,\psi_{\alpha/2}} \\
    &= \sup_{\|\eeta\|_2 = 1} \|\sum_{h=1}^{N_i d^2} \eta_h \xi_{ih}\|_{\psi_{\alpha/2}} \\
    &\leq \sup_{\|\eeta\|_2 = 1} \sum_{h=1}^{N_i d^2} |\eta_h| \|\xi_{ih}\|_{\psi_{\alpha/2}} \\
    &\leq \sqrt{N_i}d \max_{h}\|\xi_{ih}\|_{\psi_{\alpha/2}} \\
    &\leq C \sqrt{N_i} d \left(C_z \kappa \sqrt{\max_{\ell}\vertiii{\Ssigma_B^{\star(\ell,\ell)}}_2} + \sigma_{\epsilon}\right)^2.
\end{aligned}$$
Otherwise, if $0 < \alpha < 2,$ then $\| \cdot \|_{\psi_{\alpha/2}}$ is only a quasi-norm, which means the triangular inequality does not hold. We need an additional technical lemma to bound $K_i.$

\begin{lemma}
\label{lemma-technical}
    For any $0 < \zeta < 1,$ any scalar $a_1,...,a_m,$ and any random variable $X_1,...,X_m,$
    $$\left\|\sum_{h=1}^m a_h X_h \right\|_{\psi_{\zeta}} \leq \left(\sum_{h=1}^m |a_h|^\zeta \|X_h\|_{\psi_{\zeta}}^{\zeta}\right)^{1/\zeta}.$$
\end{lemma}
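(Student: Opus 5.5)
The plan is a direct argument from the definition of the Orlicz quasi-norm, using two elementary facts: the map $x \mapsto |x|^\zeta$ is subadditive for $0<\zeta<1$, and $\exp$ is convex. No independence or distributional assumption on $X_1,\dots,X_m$ is needed, which matches the statement of Lemma~\ref{lemma-technical}.

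\emph{Step 1 (degenerate terms and attainment).} Terms with $a_h\|X_h\|_{\psi_\zeta}=0$ can be discarded. If $a_h=0$ this is trivial. If $\|X_h\|_{\psi_\zeta}=0$, then $X_h=0$ almost surely, because $\EE\exp((|X_h|/\varphi)^\zeta)\le 2$ for all $\varphi>0$ forces this. If every term is discarded, both sides are zero. For the remaining terms, write $\sigma_h=\|X_h\|_{\psi_\zeta}>0$. I will check that $\EE\exp((|X_h|/\sigma_h)^\zeta)\le 2$, i.e. that the infimum in the definition is attained. Take $\varphi_n\downarrow\sigma_h$; then $(|X_h|/\varphi_n)^\zeta$ increases to $(|X_h|/\sigma_h)^\zeta$, and monotone convergence gives the claim.

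\emph{Step 2 (subadditivity).} Let $\phi_h=|a_h|\sigma_h$ and $\Phi=\bigl(\sum_h\phi_h^\zeta\bigr)^{1/\zeta}$. For $0<\zeta<1$ we have $(x+y)^\zeta\le x^\zeta+y^\zeta$ for $x,y\ge 0$, since $t\mapsto t^\zeta$ is concave with value $0$ at $0$. Combining this with the triangle inequality for $|\cdot|$ gives
$$\Bigl(\frac{|\sum_h a_hX_h|}{\Phi}\Bigr)^{\zeta}\le \sum_h \frac{|a_h|^\zeta|X_h|^\zeta}{\Phi^\zeta}=\sum_h w_h\Bigl(\frac{|X_h|}{\sigma_h}\Bigr)^{\zeta},\qquad w_h=\frac{\phi_h^\zeta}{\Phi^\zeta}.$$
Here $w_h\ge 0$ and $\sum_h w_h=1$.

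\emph{Step 3 (Jensen and conclusion).} Since $\exp$ is monotone and convex, the display in Step 2 gives, pointwise,
$$\exp\Bigl(\Bigl(\tfrac{|\sum_h a_hX_h|}{\Phi}\Bigr)^{\zeta}\Bigr)\le\sum_h w_h\exp\bigl((|X_h|/\sigma_h)^\zeta\bigr).$$
Taking expectations and applying Step 1 bounds the right-hand side by $\sum_h w_h\cdot 2=2$. Therefore $\Phi$ is feasible in the infimum defining $\|\sum_h a_hX_h\|_{\psi_\zeta}$, which yields the claimed bound.

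The only delicate point I expect is the attainment/degenerate-case bookkeeping in Step 1. The core inequality is just subadditivity of $|x|^\zeta$ followed by Jensen's inequality. Once the lemma is proved, I would apply it with $\zeta=\alpha/2$ to the coordinates of $\xxi_i$, using the per-entry bound on $\|\xi_{\pi(i,j,k,\ell,\ell')}\|_{\psi_{\alpha/2}}$ derived above. Together with the elementary estimate $\sum_h|\eta_h|^{\zeta}\le (N_id^2)^{1-\zeta/2}$ for $\|\eeta\|_2=1$, this should produce the extra polynomial factor in $K_i$ for the regime $0<\alpha<2$.
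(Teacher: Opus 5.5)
Your proof is correct and follows essentially the same route as the paper: subadditivity of $x\mapsto x^\zeta$ writes the normalized quantity as a convex combination, and Jensen's inequality for $\exp$ then shows that $\Phi$ is feasible in the infimum. The only difference is bookkeeping. The paper works with arbitrary $L_h>\|X_h\|_{\psi_\zeta}$, so every $L_h>0$ and feasibility follows by monotonicity, and then lets $L_h$ decrease to $\|X_h\|_{\psi_\zeta}$ at the end. This sidesteps your Step 1 on attainment of the infimum and on zero-norm terms.
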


\begin{proof}
    Let $L_h$ be any number such that $L_h > \|X_h\|_{\psi_{\zeta}}.$ By the definition of Orlicz norm, we have $\EE \exp \left(\frac{|X_h|^{\zeta}}{L_h^{\zeta}}\right) \leq 2$ for $h\in[m].$ Set $L:=\left(\sum_{h=1}^m |a_h|^{\zeta} L_h^{\zeta} \right)^{1/\zeta}.$ If $L=0,$ the claim is trivial. Otherwise, define $\mu_h := \frac{|a_h|^{\zeta} L_h^{\zeta}}{L^{\zeta}}.$ Note that $\mu_h \geq 0$ and $\sum_{h=1}^m \mu_h = 1.$ Since $0 < \zeta < 1,$ the mapping $x \mapsto x^{\zeta}$ is subadditive. So, $|\sum_{h=1}^m a_h X_h |^{\zeta} \leq \sum_{h=1}^m |a_h|^{\zeta} |X_h|^{\zeta}.$ Therefore, 
    $$\frac{|\sum_{h=1}^m a_h X_h |^{\zeta}}{L^{\zeta}} \leq \sum_{h=1}^m \frac{|a_h|^{\zeta} |X_h|^{\zeta}}{L^{\zeta}} = \sum_{h=1}^m \mu_h \frac{|X_h|^{\zeta}}{L_h^{\zeta}}.$$ 
    Since $x\mapsto e^x$ is convex, by Jensen's inequality and monotonicity of the exponential function, 
    $$\exp\left(\frac{|\sum_{h=1}^m a_h X_h |^{\zeta}}{L^{\zeta}}\right) \leq \exp\left(\sum_{h=1}^m \mu_h \frac{|X_h|^{\zeta}}{L_h^{\zeta}}\right) \leq \sum_{h=1}^m \mu_h \exp\left(\frac{|X_h|^{\zeta}}{L_h^{\zeta}}\right).$$
    Taking the expectations of both sides,
    $$\EE \exp\left(\frac{|\sum_{h=1}^m a_h X_h |^{\zeta}}{L^{\zeta}}\right) \leq \sum_{h=1}^m \mu_h \EE \exp\left(\frac{|X_h|^{\zeta}}{L_h^{\zeta}}\right) \leq 2.$$
    So, by the definition of the Orlicz norm, $\|\sum_{h=1}^m a_h X_h\|_{\psi_{\zeta}} \leq L.$ Let $L_h$ approach $\|X_h\|_{\psi_{\zeta}}$ from above, we have $\left\|\sum_{h=1}^m a_h X_h \right\|_{\psi_{\zeta}} \leq \left(\sum_{h=1}^m |a_h|^\zeta \|X_h\|_{\psi_{\zeta}}^{\zeta}\right)^{1/\zeta}.$
\end{proof}

Using Lemma \ref{lemma-technical} and Jensen's inequality, we have 
$$\begin{aligned}
    K_i &= \|\xxi_i\|_{J,\psi_{\alpha/2}} \\
    &= \sup_{\|\eeta\|_2 = 1} \|\sum_{h=1}^{N_i d^2} \eta_h \xi_{ih}\|_{\psi_{\alpha/2}} \\
    &\leq \sup_{\|\eeta\|_2 = 1} \left(\sum_{h=1}^{N_i d^2} |\eta_h|^{\alpha/2} \|\xi_{ih}\|_{\psi_{\alpha/2}}^{\alpha/2}\right)^{2/\alpha} \\
    &\leq \sup_{\|\eeta\|_2 = 1} \left(\sum_{h=1}^{N_i d^2} |\eta_h|^{\alpha/2}\right)^{2/\alpha} \max_{h}\|\xi_{ih}\|_{\psi_{\alpha/2}} \\
    &\leq \sup_{\|\eeta\|_2 = 1} \left(N_i d^2 \left(\frac{1}{N_i d^2} \sum_{h=1}^{N_i d^2}\eta_h^2\right)^{\alpha/4}\right)^{2/\alpha} \max_{h}\|\xi_{ih}\|_{\psi_{\alpha/2}}\\
    & = \sup_{\|\eeta\|_2 = 1} \left( N_i d^2 \right)^{\frac{2}{\alpha} - \frac{1}{2}} \|\eeta\|_2 \max_{h}\|\xi_{ih}\|_{\psi_{\alpha/2}} \\
    & \leq C \cdot \left( N_i d^2 \right)^{\frac{2}{\alpha} - \frac{1}{2}} \left(C_z \kappa \sqrt{\max_{\ell}\vertiii{\Ssigma_B^{\star(\ell,\ell)}}_2} + \sigma_{\epsilon}\right)^2.
\end{aligned}$$
Now, we have concluded that 
\begin{equation}
\label{appendix-equ-ki}
    K_i \leq \begin{cases}
        C\sqrt{N_i}d\left(C_z \kappa \sqrt{\max_{\ell}\vertiii{\Ssigma_B^{\star(\ell,\ell)}}_2} + \sigma_{\epsilon}\right)^2, & \alpha \geq 2, \\
        C(N_i d^2)^{\frac{2}{\alpha} - \frac{1}{2}}\left(C_z \kappa \sqrt{\max_{\ell}\vertiii{\Ssigma_B^{\star(\ell,\ell)}}_2} + \sigma_{\epsilon}\right)^2, & 0< \alpha < 2,
    \end{cases}
\end{equation}
for constant $C.$ Plugging \eqref{appendix-equ-ki} into the result in Theorem~\ref{thm-sel-consist} yields the results in Corollary~\ref{corollary-rate}.

\section{Proof of Theorem~\ref{thm-fixed-sel-consistency}}
\label{appendix-fix}

\begin{proof}
We divide the proof into two parts. In part 1, we derive the estimated version of Assumption~\ref{a-fix2} and \ref{a-fix3}. In part 2, we construct the Primal-Dual Witness and use the estimated versions of Assumption~\ref{a-fix2} and \ref{a-fix3} to prove the selection consistency.

\paragraph*{Part 1. }
\paragraph*{Estimated version of Assumption~\ref{a-fix3}}
Recall that we have defined the true Gram matrix $\Ggamma = \frac{1}{N}\sum_{i=1}^m (\X_i \otimes \I_d)^\top \Ssigma_i^{\star^{-1}} (\X_i \otimes \I_d) \in \R^{pd \times pd}.$ We now define the estimated Gram matrix $\hat \Ggamma = \frac{1}{N}\sum_{i=1}^m (\X_i \otimes \I_d)^\top \hat \Ssigma_i^{-1} (\X_i \otimes \I_d) \in \R^{pd \times pd}.$ Let $\Xxi = \hat \Ggamma - \Ggamma$ be the difference between the estimated and true Gram matrix, and let event $\Eepsilon = \{\max_i \vertiii{\hat \Ssigma_i^{1/2} (\hat{\Ssigma}_i^{-1} - \Ssigma_i^{\star^{-1}}) \hat \Ssigma_i^{1/2
}}_2 \leq a_m\}$, which holds with probability at least $1 - \delta_{\Sigma}$. Then on event $\Eepsilon,$ for each entry $(j,k),$
\begin{align*}
    |\Xxi_{j,k}| & = \frac{1}{N} \left|\sum_{i=1}^m (\X_i \otimes \I_d)^\top_j (\hat \Ssigma_i^{-1} - \Ssigma_i^{\star^{-1}}) (\X_i \otimes \I_d)^\top_k\right| \\
    & = \frac{1}{N} \left|\sum_{i=1}^m \left[\hat \Ssigma_i^{-1/2}(\X_i \otimes \I_d)_j\right]^\top \hat \Ssigma_i^{1/2} (\hat \Ssigma_i^{-1} - \Ssigma_i^{\star^{-1}}) \hat \Ssigma_i^{1/2} \left[\hat \Ssigma_i^{-1/2} (\X_i \otimes \I_d)_k\right]^\top\right| \\
    & \leq \frac{1}{N} \sum_{i=1}^m \left\|\hat \Ssigma_i^{-1/2}(\X_i \otimes \I_d)_j\right\|_2 \bigvertiii{\hat \Ssigma_i^{1/2} (\hat \Ssigma_i^{-1} - \Ssigma_i^{\star^{-1}}) \hat \Ssigma_i^{1/2}}_2 \left\|\hat \Ssigma_i^{-1/2}(\X_i \otimes \I_d)_k\right\|_2 \\
    & \leq a_m \frac{1}{N} \sum_{i=1}^m \left\|\hat \Ssigma_i^{-1/2}(\X_i \otimes \I_d)_j\right\|_2 \left\|\hat \Ssigma_i^{-1/2}(\X_i \otimes \I_d)_k\right\|_2 \\
    & \leq a_m,
\end{align*}
where the last inequality follows the standardized column design. So $\vertiii{\Xxi}_{\max} := \max_j \max_k |\Xxi_{j,k}| \leq a_m,$ then $\vertiii{\Xxi_{\Stilde \Stilde}}_{\infty} \leq \stilde a_m, \vertiii{\Xxi_{\Stilde^c \Stilde}}_{\infty} \leq \stilde a_m,$ and $\vertiii{\Xxi_{\Stilde\Stilde}}_2 \leq \|\Xxi_{\Stilde\Stilde}\|_F \leq \stilde a_m.$ 


Note that $\hat \Ggamma_{\Stilde \Stilde} = \Ggamma_{\Stilde \Stilde} + \Xxi_{\Stilde \Stilde} = \Ggamma_{\Stilde \Stilde}(\I + \Ggamma_{\Stilde \Stilde}^{-1}\Xxi_{\Stilde \Stilde}),$ then $\hat \Ggamma_{\Stilde \Stilde}^{-1} = (\I + \Ggamma_{\Stilde \Stilde}^{-1}\Xxi_{\Stilde \Stilde})^{-1} \Ggamma_{\Stilde \Stilde}^{-1}$ and $\vertiii{\Ggamma_{\Stilde \Stilde}^{-1}\Xxi_{\Stilde \Stilde}}_{\infty} \leq \vertiii{\Ggamma_{\Stilde \Stilde}^{-1}}_{\infty} \vertiii{\Xxi_{\Stilde \Stilde}}_{\infty} \leq  \frac{\stilde a_m}{C_e} \leq \frac{\nu_x}{8} < \frac{1}{2},$ which follows from the condition stated in Theorem~\ref{thm-fixed-sel-consistency}. By Neumann series approximation,
$$\left(\I + \Ggamma_{\Stilde \Stilde}^{-1}\Xxi_{\Stilde \Stilde}\right)^{-1} = \I - \Ggamma_{\Stilde \Stilde}^{-1}\Xxi_{\Stilde \Stilde} + \left(\Ggamma_{\Stilde \Stilde}^{-1}\Xxi_{\Stilde \Stilde}\right)^2 - \cdots .$$
Then
$$\bigvertiii{\left(\I + \Ggamma_{\Stilde \Stilde}^{-1}\Xxi_{\Stilde \Stilde}\right)^{-1}}_{\infty} \leq \sum_{k=0}^{\infty} \bigvertiii{\Ggamma_{\Stilde \Stilde}^{-1}\Xxi_{\Stilde \Stilde}}_{\infty}^k = \frac{1}{1-\bigvertiii{\Ggamma_{\Stilde \Stilde}^{-1}\Xxi_{\Stilde \Stilde}}_{\infty}}.$$
Recall that $\vertiii{\Ggamma_{\Stilde \Stilde}^{-1}\Xxi_{\Stilde \Stilde}}_{\infty} < \frac{1}{2},$ so
$$\bigvertiii{\hat \Ggamma_{\Stilde \Stilde}^{-1}}_{\infty} \leq \frac{\bigvertiii{\Ggamma_{\Stilde\Stilde}^{-1}}_{\infty}}{1 - \bigvertiii{\Ggamma_{\Stilde \Stilde}^{-1}\Xxi_{\Stilde \Stilde}}_{\infty}} \leq 2 \bigvertiii{\Ggamma_{\Stilde\Stilde}^{-1}}_{\infty} \leq \frac{2}{C_e}.$$

\paragraph*{Estimated version of Assumption~\ref{a-fix2}.}

On the other hand, we have
\begin{align*}
    \hat \Ggamma_{\Stilde^c \Stilde} \hat \Ggamma_{\Stilde \Stilde}^{-1} & = \left(\Ggamma_{\Stilde^c \Stilde} + \Xxi_{\Stilde^c \Stilde}\right)\hat \Ggamma_{\Stilde \Stilde}^{-1} \\
    & = \Ggamma_{\Stilde^c \Stilde} \Ggamma_{\Stilde \Stilde}^{-1} \Ggamma_{\Stilde \Stilde}\hat \Ggamma_{\Stilde \Stilde}^{-1} + \Xxi_{\Stilde^c \Stilde} \hat \Ggamma_{\Stilde \Stilde}^{-1} \\
    & = \Ggamma_{\Stilde^c \Stilde} \Ggamma_{\Stilde \Stilde}^{-1} - \Ggamma_{\Stilde^c \Stilde} \Ggamma_{\Stilde \Stilde}^{-1} \Xxi_{\Stilde \Stilde}\hat \Ggamma_{\Stilde \Stilde}^{-1} + \Xxi_{\Stilde^c \Stilde} \hat \Ggamma_{\Stilde \Stilde}^{-1}.
\end{align*}
Hence, 
\begin{align*}
    \bigvertiii{ \hat \Ggamma_{\Stilde^c \Stilde} \hat \Ggamma_{\Stilde \Stilde}^{-1}}_{\infty} & \leq \bigvertiii{\Ggamma_{\Stilde^c \Stilde} \Ggamma_{\Stilde \Stilde}^{-1}}_{\infty} + \bigvertiii{\Ggamma_{\Stilde^c \Stilde} \Ggamma_{\Stilde \Stilde}^{-1}}_{\infty} \bigvertiii{\Xxi_{\Stilde \Stilde}}_{\infty} \bigvertiii{\hat \Ggamma_{\Stilde \Stilde}^{-1}}_{\infty} + \bigvertiii{\Xxi_{\Stilde^c \Stilde}}_{\infty} \bigvertiii{\hat \Ggamma_{\Stilde \Stilde}^{-1}}_{\infty} \\
    & \leq 1 - \nu_x + (1 - \nu_x)\frac{2\stilde a_m}{C_e} + \frac{2\stilde a_m}{C_e} \\ 
    & \leq 1 - \nu_x + (1 - \nu_x)\frac{\nu_x}{4} + \frac{\nu_x}{4} \\
    & = 1 - \frac{\nu_x}{2} - \frac{\nu_x^2}{4} \\
    & \leq 1 - \frac{\nu_x}{2},
\end{align*}
which is the estimated version of the irrepresentability assumption.

\paragraph*{Part 2.}
\paragraph*{Primal-Dual Witness Construction.}
Define the score vector $\hat \g = \frac{1}{N} \sum_{i=1}^m (\X_i \otimes \I_d)^\top \hat \Ssigma_i^{-1} \bm \varepsilon_i \in \R^{pd}.$ Then, the KKT condition for \eqref{equ-fgls} with $\mathcal{P}_\beta (\bbeta) = \tau \|\vvec(\bbeta^\top)\|_1$ is
\begin{equation}
\label{appendix-equ-kkt}
    \hat \Ggamma\left[\vvec(\hat \bbeta^\top) - \vvec(\bbeta^{\star\top})\right] - \hat \g + \tau \hat \u = 0,
\end{equation}
where $\u \in \partial \|\vvec(\hat \bbeta^\top)\|_1$ is a subgradient vector. We construct a pair $(\vvec(\check \bbeta^\top), \check{\u}) \in \R^{pd} \times \R^{pd}$ according to PDW procedure:
\begin{enumerate}[label=(\arabic*)]
    \item Obtain $\vvec(\check{\bbeta}^\top)_{\Stilde}$ by solving the restricted problem, and set $\vvec(\check{\bbeta})_{\Stilde^c} = \bm 0.$
    \item Choose $\check{\u}_{\Stilde}$ as an element of the subdifferential of the $\ell_1$ norm evaluated at $\vvec(\check{\bbeta})_{\Stilde}.$
    \item Solve for $\check{\u}_{\Stilde^c}$ from the KKT condition \eqref{appendix-equ-kkt}, then check the strict dual feasibility condition $\|\check{\u}_{\Stilde^c}\|_{\infty} < 1.$
    \item Check the sign consistency condition $\check{\u}_{\Stilde} = \mathrm{sign}(\vvec(\bbeta^{\star \top})_{\Stilde}).$
\end{enumerate}
Then, the KKT condition \eqref{appendix-equ-kkt} can be written as
\begin{equation*}
    \begin{bmatrix}
        \hat{\Ggamma}_{\Stilde \Stilde} & \hat{\Ggamma}_{\Stilde \Stilde^c} \\
        \hat{\Ggamma}_{\Stilde^c \Stilde} & \hat{\Ggamma}_{\Stilde^c \Stilde^c}
    \end{bmatrix}
    \begin{bmatrix}
        \vvec(\check{\bbeta}^\top)_{\Stilde} - \vvec(\bbeta^{\star \top})_{\Stilde} \\
        \bm 0
    \end{bmatrix}
    -
    \begin{bmatrix}
        \hat{\g}_{\Stilde} \\
        \hat{\g}_{\Stilde^c}
    \end{bmatrix}
    + \tau
    \begin{bmatrix}
        \check{\u}_{\Stilde} \\
        \check{\u}_{\Stilde^c}
    \end{bmatrix}
    = 0.
\end{equation*}
Solving for the lower block and the upper block gives
\begin{equation}
\label{appendix-equ-fix-u}
    \check{\u}_{\Stilde^c} = \hat{\Ggamma}_{\Stilde^c \Stilde} \hat{\Ggamma}_{\Stilde \Stilde}^{-1} \check{\u}_{\Stilde} + \frac{1}{\tau} \left(\hat{\g}_{\Stilde^c} - \hat{\Ggamma}_{\Stilde^c \Stilde} \hat{\Ggamma}_{\Stilde \Stilde}^{-1} \hat{\g}_{\Stilde}\right),
\end{equation}
and
\begin{equation}
\label{appendix-equ-fix-beta}
    \vvec(\check{\bbeta}^\top)_{\Stilde} - \vvec(\bbeta^{\star \top})_{\Stilde} = \hat{\Ggamma}_{\Stilde \Stilde}^{-1} \left(\hat{\g}_{\Stilde} - \tau \cdot \mathrm{sign} (\vvec(\bbeta^{\star \top}))\right),
\end{equation}
respectively. Before we check the strict dual feasibility and bound $\|\vvec(\check{\bbeta}^\top)_{\Stilde} - \vvec(\bbeta^{\star \top})_{\Stilde}\|_{\infty}$, we first establish the $\ell_{\infty}$ bound for $\hat{\g}.$

\paragraph*{Establish the $\ell_{\infty}$ bound for $\hat{\g}$.}
With a slight abuse of notation, let $$\hat{g}_j = \frac{1}{N}\sum_{i=1}^m (\X_i \otimes \I_d)^\top_j \hat{\Ssigma}_i^{-1} \bm \varepsilon_i := \frac{1}{N} \sum_{i=1}^m \v_{ij}^\top \o_i,$$
for $j \in [pd],$ where $\v_{ij} = \Ssigma_i^{\star^{1/2}} \hat{\Ssigma}_i^{-1} (\X_i \otimes \I_d)_j$ and $\o_i = \Ssigma_i^{\star^{-1/2}} \bm \varepsilon_i.$ The following concentration argument should be understood as a sample-splitting
argument. More precisely, split the original subjects into two independent subsets $\mathcal I_{\Sigma}$ and $\mathcal I_g$. The covariance estimators 
$\hat\Ssigma_B$ and $\hat\Ssigma_{\epsilon}$, and hence 
$\hat\Ssigma_i$ for $i\in\mathcal I_g$, are constructed using only the data in 
$\mathcal I_{\Sigma}$. But the score vector $\hat{\g},$ estimated Gram matrix $\hat{\Ggamma},$ and event $\Eepsilon$ used in the fixed-effects selection step are then 
formed using only the independent subjects in \(\mathcal I_g\). 

To avoid introducing additional notation, we relabel the subjects in
$\mathcal I_g$ as $1,\ldots,m$ and write $N=\sum_{i=1}^m n_i$ for the total number
of observations in this relabeled score sample. With a slight abuse of notation,
the previously defined $\hat g_j$ is therefore interpreted as the
score over this relabeled sample. The same relabeling convention is also used for the estimated Gram matrix
$\hat\Ggamma$ and the event $\Eepsilon$ throughout this proof.
Equivalently, the proof analyzes the sample-splitting version of the problem \eqref{equ-fgls}, while retaining the notation of the full-sample problem for simplicity. If the splitting proportion is fixed, the relabeled $N$ is of the same order
as the original sample size, so the rate is unchanged up to constants.
Let
$$
\mathcal F_{\Sigma}
=
\sigma\{\hat\Ssigma_i:i\in\mathcal I_g\}
$$
be the sigma-field generated by the first-stage covariance estimators. Conditional on 
$\mathcal F_{\Sigma}$, the vectors $\hat\Ssigma_i$ are fixed, while the random errors
$\o_i$'s for $i\in\mathcal I_g$
remain independent, mean-zero, joint sub-Weibull$(\alpha)$ random vectors by Assumption~\ref{a-fix4}. Therefore, 
conditional on $\mathcal F_{\Sigma}$, for each $j\in[pd]$, $\hat{g}_j$
is a sum of independent mean-zero sub-Weibull($\alpha$) scalar random variables with deterministic coefficients $\v_{ij}$. On the event $\Eepsilon,$ the $\ell_2$ norm of $\v_{ij}$ is bounded by
\begin{align*}
    \|\v_{ij}\|_2 & = \|\Ssigma_i^{\star^{1/2}} \hat{\Ssigma}_i^{-1} (\X_i \otimes \I_d)_j\|_2 \\
    & \leq \bigvertiii{\Ssigma_i^{\star^{1/2}} \hat{\Ssigma}_i^{-1/2
    }}_2 \|\hat \Ssigma_i^{-1/2} (\X_i \otimes \I_d)_j\|_2 \\
    & = \sqrt{\bigvertiii{\hat{\Ssigma}_i^{-1/2
    }\Ssigma_i^{\star} \hat{\Ssigma}_i^{-1/2
    }}_2} \|\hat \Ssigma_i^{-1/2} (\X_i \otimes \I_d)_j\|_2\\
    & \leq \frac{1}{\sqrt{1 - a_m}} \|\hat{\Ssigma}_i^{-1/2
    } (\X_i \otimes \I_d)_j\|_2 \\
    & \leq (1 + a_m) \|\hat{\Ssigma}_i^{-1/2
    } (\X_i \otimes \I_d)_j\|_2,
\end{align*}
where the last inequality follows the sufficient condition that $a_m \leq \frac{\sqrt{5} - 1}{2} \approx 0.618.$
Again, Theorem 3.1 from \cite{kuchibhotla2022moving} gives
\begin{align*}
    \PP\left(|\hat{g}_j| \geq t \mid \mathcal{F}_{\Sigma}\right) &\leq 2 \exp\left\{-c\cdot \min \left[\frac{N^2t^2}{\sum_{i=1}^m \|\v_{ij}^\top \o_i\|_{\psi_{\alpha}}^2}, \left(\frac{Nt}{\max_i \|\v_{ij}^\top \o_i\|_{\psi_{\alpha}}}\right)^{\alpha}\right]\right\} \\
    &\leq 2 \exp\left\{-c\cdot \min \left[\frac{N^2t^2}{\sum_{i=1}^m \left\|\v_{ij}\right\|_2^2 \tilde K_i^2}, \left(\frac{Nt}{\max_i \left\|\v_{ij}\right\|_2 \tilde K_i}\right)^{\alpha}\right]\right\} \\
    &\leq 2 \exp\left\{-c\cdot \min \left[\frac{N^2 t^2}{(1+a_m)^2 N (\max_i \tilde K_i)^2}, \left(\frac{Nt }{(1+a_m) \sqrt{N} \max_i \tilde K_i}\right)^{\alpha}\right]\right\}.
\end{align*}
By the union bound, we have
\begin{equation*}
    \PP\left(\|\hat{\g}\|_{\infty} \geq t \mid \mathcal{F}_{\Sigma} \right) \leq 2pd \cdot \exp\left\{-c\cdot \min \left[\frac{N^2 t^2}{(1+a_m)^2 N (\max_i \tilde K_i)^2}, \left(\frac{Nt }{(1+a_m) \sqrt{N} \max_i \tilde K_i}\right)^{\alpha}\right]\right\}.
\end{equation*}
Since $\Eepsilon \in\mathcal F_\Sigma$, setting $t = \frac{\nu_x \cdot \tau}{8},$ we have
\begin{align*}
    \PP\left(\|\hat{\g}\|_{\infty} \geq \frac{\nu_x \cdot \tau}{8}\right) & \leq \PP(\Eepsilon^c) + \PP\left(\left\{\|\hat{\g}\|_{\infty} \geq \frac{\nu_x \cdot \tau}{8}\right\} \cap \Eepsilon\right) \\
    & = \PP(\Eepsilon^c) + \mathbb E\left[\mathbf 1_{\mathcal E}\mathbb P\left(\left\{\|\hat{\g}\|_{\infty} \geq \frac{\nu_x \cdot \tau}{8}\right\}\mid\mathcal F_\Sigma\right) \right] \\
    &\leq \delta_{\Sigma} + 2pd \cdot \exp\left\{-c\cdot \min \left[\frac{N \nu_x^2 \tau^2}{64(1+a_m)^2 (\max_i \tilde K_i)^2}, \left(\frac{\sqrt{N}\nu_x \tau }{8(1+a_m) \max_i \tilde K_i}\right)^{\alpha}\right]\right\}.
\end{align*}
With a slightly different constant $\tilde c$, let 
\begin{equation*}
    \tau \geq \frac{8 (1+a_m)
    \max_i \tilde{K}_i}{\tilde{c} \nu_x \sqrt{N}}\max \left\{\sqrt{\log \frac{2pd}{\delta_{\beta}}}, \left(\log \frac{2pd}{\delta_{\beta}}\right)^{1/\alpha}\right\},
\end{equation*}
then $\|\hat{\g}\|_{\infty} < \frac{\nu_x \cdot \tau}{8}$ holds with probability at least $1-\delta_{\Sigma} - \delta_{\beta}.$

\paragraph*{Check strict dual feasibility.}
We can bound the $\ell_{\infty}$ norm of \eqref{appendix-equ-fix-u} by
\begin{align*}
    \|\check{\u}_{\Stilde^c}\|_{\infty} & \leq \bigvertiii{\hat{\Ggamma}_{\Stilde^c \Stilde} \hat{\Ggamma}_{\Stilde \Stilde}^{-1}}_{\infty} + \frac{1}{\tau} \left(\|\hat{\g}_{\Stilde^c}\|_{\infty} + \bigvertiii{\hat{\Ggamma}_{\Stilde^c \Stilde} \hat{\Ggamma}_{\Stilde \Stilde}^{-1}}_{\infty}\|\hat{\g}_{\Stilde}\|_{\infty}\right) \\
    & \leq 1 - \frac{\nu_x}{2} + \frac{1}{\tau} (2 - \frac{\nu_x}{2}) \|\hat{\g}\|_{\infty} \\
    & \leq 1 - \frac{\nu_x}{4} - \frac{\nu_x^2}{16} \\
    & < 1,
\end{align*}
holds with probability at least $1-\delta_{\Sigma} - \delta_{\beta}.$

\paragraph*{Establish $\ell_{\infty}$ bound for \eqref{appendix-equ-fix-beta}.}
Since the strict dual feasibility is verified, the PDW candidate is the unique global lasso solution and $\check{\bbeta} = \hat{\bbeta}.$ So,
\begin{align*}
    \|\vvec(\hat{\bbeta}^\top)_{\Stilde} - \vvec(\bbeta^{\star \top})_{\Stilde}\|_{\infty} & \leq \bigvertiii{\hat{\Ggamma}_{\Stilde \Stilde}^{-1}}_{\infty} \|\hat{\g}_{\Stilde}\|_{\infty} + \tau \bigvertiii{\hat{\Ggamma}_{\Stilde \Stilde}^{-1}}_{\infty} \\
    & \leq 2  \|\hat{\g}\|_{\infty} / C_e + 2
    \tau / C_e \\
    & \leq 2  (1 + \frac{\nu_x}{8}) \tau / C_e
\end{align*}
holds with probability at least $1-\delta_{\Sigma} - \delta_{\beta}.$
\end{proof}

\section{Estimation error bounds for refitted $\hat{\Ssigma}_B$ in Step 3, $\hat{\Ssigma}_{\epsilon}$ in Step 4, and $\hat{\Ssigma}_i$ in FGLS \eqref{equ-fgls}}
\label{appendix-am}
\begin{assumption}
    \label{a-am-1}
    There exists a constant $C_B>0$ such that 
    $$\Lambda_{\min} \left(\frac{1}{Md^2} \A_\J^\top \A_\J\right) \geq C_B,$$
    where $\A$ is defined in \eqref{equ-theory-linreg}, $\J$ indexes the active block of $\Ssigma_B$ with cardinality be $s^2,$ and $\A_\J \in \R^{Md^2 \times s^2}.$
\end{assumption}

\begin{assumption}
    \label{a-am-2}
    There exists a constant $C_r > 0$ such that each row of the design is bounded by
    $$\max_{\ell} \left\|\A_{\J_{\ell,.}}\right\|_2 \leq C_r,$$
    where $\A_{\J_{\ell,.}} \in \R^{s^2}$ denotes the $\ell$-th row of matrix $\A_\J.$
\end{assumption}

\begin{assumption}
\label{a-am-4}
The random effects $\vvec(\B_{i})$ and random error $\eepsilon_{ij}$ are both joint sub-Gaussian random vectors. 
\end{assumption}

\begin{proof}
In this proof, we first give bounds for $\vertiii{\hat \Ssigma_B - \Ssigma_B^\star}_2$ and $\vertiii{\hat \Ssigma_{\epsilon} - \Ssigma_{\epsilon}^\star}_2$, then we show the bound of $\max_i \bigvertiii{\hat \Ssigma_i^{1/2} (\hat{\Ssigma}_i^{-1} - \Ssigma_i^{\star^{-1}}) \hat \Ssigma_i^{1/2}}_2,$ which is $a_m.$
Throughout this proof, we condition on the event $\{\hat{S} = S\}$, i.e., the random-effects selection in Step 2 is correct, and use the oracle $\bbeta^\star$ to replace the working estimator $\hat \bbeta$ for simplicity. For notational simplicity, we focus on the sub-Gaussian setting for the random effects and random errors. The more general sub-Weibull setting can be handled using the same proof techniques developed in the previous analysis.

\paragraph*{Bound $\vertiii{\hat \Ssigma_B - \Ssigma_B^\star}_2$.}
Conditional on the event $\{\hat{S} = S\}$ that we correctly selected all the true random effects, we consider the refitting problem (Step 3 of \texttt{MOMENT})
\begin{align}
\label{equ-refit}
    &\min_{\Ssigma_B \succeq 0} \frac{1}{2} \sum_{i=1}^m \sum_{j \neq k}^{n_i} \left\| (\y_{ij}- \bbeta^{\star\top}\x_{ij})(\y_{ik}-\bbeta^{\star\top}\x_{ik})^\top - (\bm{I}_d \otimes \bm{z}_{ij}^\top)\Ssigma_B (\bm{I}_d \otimes \bm{z}_{ik}) \right\|_F^2  \\
    &\text{ s.t. } \quad \Ssigma_{B_{tt}} = 0 \quad \text{for all } t \notin \hat{S} \nonumber.
\end{align}
Then the refitting problem \eqref{equ-refit} without the PSD constraint corresponds to an unpenalized linear regression problem
$$\r = \A_{\J} \vvarphi_S^\star + \xxi,$$
where $\r, \A,$ and $\xxi$ are defined in \eqref{equ-theory-linreg}, $\J$ indexes the active block of $\Ssigma_B,$ and $\vvarphi_S = \vvec(\Ssigma_{B_S}) \in \R^{s^2}$ represents the vectorized active block of $\Ssigma_B.$ Therefore, we have
$$\hat \vvarphi_S = (\A_\J^\top \A_\J)^{-1} \A_\J^\top \r, \quad \text{and} \quad 
\hat \vvarphi_S - \vvarphi_S^\star = (\A_\J^\top \A_\J)^{-1} \A_\J^\top \xxi.$$
Hence, by Assumption~\ref{a-am-1},
\begin{align*}
    \bigvertiii{\hat \Ssigma_{B_S} - \Ssigma_{B_S}^\star}_2 &\leq \left\|\hat \Ssigma_{B_S} - \Ssigma_{B_S}^\star\right\|_F \\
    & \leq \left\|\hat \vvarphi_S - \vvarphi_S^\star\right\|_2 \\
    & \leq \left\|\left(\frac{1}{Md^2} \A_\J^\top \A_\J\right)^{-1}\right\|_2 \left\|\frac{1}{Md^2} \A_\J^\top \xxi\right\|_2 \\
    & \leq \frac{1}{C_B} \left\|\frac{1}{Md^2} \A_\J^\top \xxi\right\|_2.
\end{align*}
Then, we use the $\varepsilon$-net argument in Section 4.2 of \cite{vershynin2018high} to bound $\|\frac{1}{Md^2} \A_\J^\top \xxi\|_2.$ To elaborate, an $\varepsilon$-net $\mathcal{N}_{\varepsilon}$ of the unit sphere $\SS^{s^2 - 1} = \{\u \in \R^{s^2}: \|\u\|_2 = 1\}$ is a finite subset of $\SS^{s^2 - 1}$ such that for every $\u \in \SS^{s^2 - 1},$ there exists $\v \in \mathcal{N}_{\varepsilon}$ satisfying $\|\u - \v \|_2 \leq \varepsilon.$ By Proposition 4.2.13 of \cite{vershynin2018high}, the cardinality of $\mathcal{N}_{\varepsilon}$ is upper bounded by $|\mathcal{N}_{\varepsilon}| \leq (\frac{2}{\varepsilon} + 1)^{s^2}.$

Define a unit vector $\tilde{\u} = \frac{1}{Md^2} \A_\J^\top \xxi \big/ \|\frac{1}{Md^2} \A_\J^\top \xxi\|_2,$ then $\|\frac{1}{Md^2} \A_\J^\top \xxi\|_2 = \tilde{\u}^\top \frac{1}{Md^2} \A_\J^\top \xxi.$ Choose $\tilde{\v} \in \mathcal{N}_{\varepsilon}$ such that $\|\tilde{\u} - \tilde{\v}\|_2 \leq \varepsilon,$ then 
\begin{align*}
    \left\|\frac{1}{Md^2} \A_\J^\top \xxi\right\|_2 &= \tilde{\u}^\top \frac{1}{Md^2} \A_\J^\top \xxi \\
    & = \tilde{\v}^\top \frac{1}{Md^2} \A_\J^\top \xxi + (\tilde{\u} - \tilde{\v})^\top \frac{1}{Md^2} \A_\J^\top \xxi \\
    & \leq \left|\tilde{\v}^\top \frac{1}{Md^2} \A_\J^\top \xxi \right| + \left\|\tilde{\u} - \tilde{\v}\right\|_2 \left\|\frac{1}{Md^2} \A_\J^\top \xxi\right\|_2 \\
    & \leq \max_{\v \in \mathcal{N_{\varepsilon}}} \left|\v^\top \frac{1}{Md^2} \A_\J^\top \xxi \right| + \varepsilon \left\|\frac{1}{Md^2} \A_\J^\top \xxi\right\|_2.
\end{align*}
Taking $\varepsilon = \frac{1}{2},$ we have
$$\left\|\frac{1}{Md^2} \A_\J^\top \xxi\right\|_2 \leq \frac{1}{1 - \varepsilon} \max_{\v \in \mathcal{N_{\varepsilon}}} \left|\v^\top \frac{1}{Md^2} \A_\J^\top \xxi \right| = 2 \max_{\v \in \mathcal{N}_{1/2}} \left|\v^\top \frac{1}{Md^2} \A_\J^\top \xxi \right|, \quad \text{and} \quad \left|\mathcal{N}_{1/2}\right| \leq 5^{s^2}.$$
Therefore,
\begin{align*}
    \PP\left(\left\|\frac{1}{Md^2} \A_\J^\top \xxi\right\|_2 \geq t \right) & \leq \PP \left(2 \max_{\v \in \mathcal{N}_{1/2}} \left|\v^\top \frac{1}{Md^2} \A_\J^\top \xxi \right| \geq t\right) \\
    &\leq \sum_{\v \in \mathcal{N}_{1/2}} \PP \left(\left|\v^\top \frac{1}{Md^2} \A_\J^\top \xxi \right| \geq \frac{t}{2}\right) \\
    &\leq 5^{s^2} \sup_{\v \in \mathcal{N}_{1/2}} \PP\left(\left| \sum_{i=1}^m \frac{1}{Md^2} \v^\top \A_{\J,i}^\top \xxi_i \right| \geq \frac{t}{2}\right),
\end{align*}
where $\A_{\J,i} \in \R^{N_i d^2 \times s^2}$ represents the $i$-th row block of the design matrix $\A_\J$ that corresponds to the $i$-th subject, and $\xxi_i \in \R^{N_i d^2}$ are independent sub-Exponential random vectors that are defined in the analysis of random-effects selection. By Assumption~\ref{a-am-2}, the operator norm of $\A_{\J,i}$ is bounded by
\begin{equation*}
    \bigvertiii{\A_{\J,i}}_2 \leq \left\|\A_{\J,i}\right\|_F = \left(\sum_{\ell=1}^{N_i d^2}\left\|\A_{\J_{\ell,.}}\right\|_2^2\right)^{1/2} \leq C_r \sqrt{N_i}d.
\end{equation*}
Then, using Theorem 3.1 from \cite{kuchibhotla2022moving} with $\alpha/2 = 1,$ we have
\begin{align*}
     \PP\left(\left\|\frac{1}{Md^2} \A_\J^\top \xxi\right\|_2 \geq t \right) & \leq 5^{s^2} \sup_{\v \in \mathcal{N}_{1/2}} \PP\left(\left| \sum_{i=1}^m \frac{1}{Md^2} \v^\top \A_{\J,i}^\top \xxi_i \right| \geq \frac{t}{2}\right) \\
     & \leq 2 \times 5^{s^2} \sup_{\v \in \mathcal{N}_{1/2}} \exp\left\{-c \cdot \min \left[\frac{t^2 M^2 d^4}{4 \sum_{i=1}^m \left\| (\A_{\J,i} \v)^\top \xxi_i\right\|_{\psi_1}^2}, \frac{t M d^2}{2 \max_{i}\left\| (\A_{\J,i} \v)^\top \xxi_i\right\|_{\psi_1}} \right]\right\} \\
     & \leq 2 \times 5^{s^2} \exp\left\{-c \cdot \min \left[\frac{t^2 M^2 d^4}{4 (\max_{i} K_i)^2\sum_{i=1}^m \bigvertiii{\A_{\J,i}}_2^2}, \frac{t M d^2}{2 (\max_{i} K_i) \max_{i} \bigvertiii{\A_{\J,i}}_2} \right]\right\}\\
     & \leq 2 \times 5^{s^2} \exp\left\{-c \cdot \min \left[\frac{t^2 M d^2}{4 (\max_{i} K_i)^2 C_r^2}, \frac{t M d}{2 (\max_{i} K_i) (\max_{i} \sqrt{N_i}) C_r} \right]\right\}.
\end{align*}
Let 
$$t = \max\left\{\sqrt{\frac{4 (\max_{i} K_i)^2 C_r^2}{c M d^2} \left(\log \frac{4}{\delta_{\Sigma}} + s^2 \log 5\right)}, \frac{2 (\max_{i} K_i) (\max_{i} \sqrt{N_i}) C_r}{c M d} \left(\log \frac{4}{\delta_{\Sigma}} + s^2 \log 5\right)\right\},$$
then $\|\frac{1}{Md^2} \A_\J^\top \xxi\|_2 \leq t$ holds with probability at least $1 - \frac{\delta_{\Sigma}}{2}.$ Hence, $\vertiii{\hat \Ssigma_{B_S} - \Ssigma_{B_S}^\star}_2 \leq \frac{t}{C_B}$ also holds with probability at least $1 - \frac{\delta_{\Sigma}}{2}.$ 

Although symmetry of the solution is not imposed explicitly after dropping the PSD constraint, it is implied by the structure of the loss. Indeed, the loss is transpose-invariant: for any active-block matrix $\Ssigma \in \R^{s \times s},$ the term indexed by $(j,k,\ell,\ell')$ for $\Ssigma$ coincides with the term indexed by $(k,j,\ell',\ell)$ for $\Ssigma^\top.$ Since the objective sums over all ordered pairs $j \neq k$ and all response pairs $(\ell,\ell')$, we have $L_S(\Ssigma) = L_S(\Ssigma^\top),$ where $L_S$ denotes the objective function of \eqref{equ-refit}. On the other hand, Assumption~\ref{a-am-1} implies that $\A_\J^\top \A_\J$ is nonsingular, so the unconstrained active-block least-squares problem is strictly convex in $\vvec(\Ssigma)$ and therefore has a unique minimizer. Consequently, if $\hat \Ssigma_{B_S}$ is the minimizer, then $\hat \Ssigma_{B_S}^\top$ is also a minimizer by transpose-invariance. Uniqueness then yields $\hat \Ssigma_{B_S} = \hat \Ssigma_{B_S}^\top.$ Thus, even though symmetry is not imposed explicitly in the unconstrained refit, it follows automatically from transpose-invariance of the loss and uniqueness of the least-squares solution.
If the minimum eigenvalue of the true covariance matrix $\Lambda_{\min}(\Ssigma_{B_S}^\star) > \frac{t}{C_B}$ for this specific choice of $t,$ then by Weyl's inequality,
$$\Lambda_{\min}\left(\hat \Ssigma_{B_S}\right) \geq \Lambda_{\min}(\Ssigma_{B_S}^\star) - \bigvertiii{\hat \Ssigma_{B_S} - \Ssigma_{B_S}^\star}_2 > 0,$$
which implies that $\hat{\Ssigma}_{B_S}$ is indeed positive definite. Therefore, it is feasible for the PSD-constrained problem \eqref{equ-refit}.

Since the $K_i$ here follows exactly the same definition as the $K_i$ in Theorem~\ref{thm-sel-consist} except that we specify $\alpha = 2$ in this subsection, by \eqref{appendix-equ-ki}, we further simplify the rate of estimation error bound for $\Ssigma_{B_S}$ as
$$\vertiii{\hat \Ssigma_{B_S} - \Ssigma_{B_S}^\star}_2 = \mathcal{O}_p \left\{ \kappa^2 \max_{\ell} \bigvertiii{\Ssigma_B^{\star^{(\ell,\ell)}}}_2 \left(\sqrt{\frac{s^2}{M \big / \max_{i} N_i}} \vee \frac{s^2 }{M \big / \max_{i} N_i}\right)\right\}.$$

\paragraph*{Bound $\vertiii{\hat \Ssigma_{\epsilon} - \Ssigma_{\epsilon}^\star}_2$.}
Solving the random error covariance estimation problem \eqref{equ-opt-sigmae} without the PSD constraint gives
\begin{equation}
\label{appendix-equ-sigmae-hat}
    \hat{\Ssigma}_\epsilon =\frac{1}{N} \sum_{i=1}^m \sum_{j=1}^{n_i} \left[(\y_{ij}-\bbeta^{\star\top} \x_{ij})(\y_{ij}-\bbeta^{\star\top} \x_{ij})^\top -(\I_d \otimes \z_{ij}^\top)\hat{\Ssigma}_{B}(\I_d \otimes \z_{ij})\right].
\end{equation}
The moment equation \eqref{equ-moment} where $j = k$ gives
\begin{equation}
\label{appendix-equ-sigmae-star}
    \EE \left[(\y_{ij}-\bbeta^{\star\top} \x_{ij})(\y_{ij}-\bbeta^{\star\top} \x_{ij})^\top\right] = (\I_d \otimes \z_{ij}^\top) \Ssigma_B^\star (\I_d \otimes \z_{ij}) + \Ssigma_{\epsilon}^\star.
\end{equation}
Let $\eeta_{ij} = \y_{ij} - \bbeta^{\star\top} \x_{ij},$ \eqref{appendix-equ-sigmae-hat} minus \eqref{appendix-equ-sigmae-star} yields
$$\hat \Ssigma_{\epsilon} - \Ssigma_{\epsilon}^\star = \frac{1}{N} \sum_{i=1}^m \sum_{j=1}^{n_i} \left[\eeta_{ij}\eeta_{ij}^\top - \EE (\eeta_{ij}\eeta_{ij}^\top)\right] - \frac{1}{N} \sum_{i=1}^m \sum_{j=1}^{n_i} \left[(\I_d \otimes \z_{ij}^\top)(\hat{\Ssigma}_{B} - \Ssigma_B^\star)(\I_d \otimes \z_{ij})\right].$$
Recall again that in Corollary~\ref{corollary-rate}, we assume $\max_{i,j} \|\z_{ij}\|_2 \leq C_z,$ then $\frac{1}{N} \sum_{i=1}^m \sum_{j=1}^{n_i} \|\z_{ij}\|_2^2 \leq C_z^2.$ So,
\begin{align*}
    \bigvertiii{\hat \Ssigma_{\epsilon} - \Ssigma_{\epsilon}^\star}_2 &\leq \bigvertiii{\frac{1}{N} \sum_{i=1}^m \sum_{j=1}^{n_i} \left[\eeta_{ij}\eeta_{ij}^\top - \EE (\eeta_{ij}\eeta_{ij}^\top)\right]}_2 + \left(\frac{1}{N} \sum_{i=1}^m \sum_{j=1}^{n_i} \|\z_{ij}\|_2^2\right) \bigvertiii{\hat \Ssigma_{B} - \Ssigma_{B}^\star}_2 \\
    & \leq \bigvertiii{\frac{1}{N} \sum_{i=1}^m \sum_{j=1}^{n_i} \left[\eeta_{ij}\eeta_{ij}^\top - \EE (\eeta_{ij}\eeta_{ij}^\top)\right]}_2 + C_z^2 \bigvertiii{\hat \Ssigma_{B} - \Ssigma_{B}^\star}_2 \\
    & := \bigvertiii{\Ssigma_{\eta}}_2 + C_z^2 \bigvertiii{\hat \Ssigma_{B} - \Ssigma_{B}^\star}_2,
\end{align*}
where we denote $\Ssigma_{\eta} = \frac{1}{N} \sum_{i=1}^m \sum_{j=1}^{n_i} [\eeta_{ij}\eeta_{ij}^\top - \EE (\eeta_{ij}\eeta_{ij}^\top)].$ We use the $\varepsilon$-net argument again to bound $\vertiii{\frac{1}{N} \sum_{i=1}^m \sum_{j=1}^{n_i} [\eeta_{ij}\eeta_{ij}^\top - \EE (\eeta_{ij}\eeta_{ij}^\top)]}_2.$ Let $\mathcal{N}_{1/4}$ be an $\varepsilon$-net of the unit sphere $\SS^{d - 1} = \{\u \in \R^{d}: \|\u\|_2 = 1\}$, then $|\mathcal{N}_{1/4}| \leq 9^d.$ Let $\tilde{\u} \in \SS^{d-1},$ choose $\tilde{\v} \in \mathcal{N}_{1/4}$ such that $\|\tilde{\u} - \tilde{\v}\|_2 \leq \frac{1}{4}.$ Then,
$$\tilde{\u}^\top \Ssigma_{\eta} \tilde{\u} = \tilde{\v}^\top \Ssigma_{\eta} \tilde{\v} + (\tilde{\u} - \tilde{\v})^\top \Ssigma_{\eta} \tilde{\u} + \tilde{\v}^\top \Ssigma_{\eta} (\tilde{\u} - \tilde{\v}).$$
So,
\begin{align*}
    |\tilde{\u}^\top \Ssigma_{\eta} \tilde{\u}| & \leq |\tilde{\v}^\top \Ssigma_{\eta} \tilde{\v}| + \|\tilde{\u} - \tilde{\v}\|_2 \vertiii{\Ssigma_{\eta}}_2 \|\tilde{\u}\|_2 + \|\tilde{\v}\|_2 \vertiii{\Ssigma_{\eta}}_2 \|\tilde{\u} - \tilde{\v}\|_2 \\
    &\leq \max_{\v \in \mathcal{N}_{1/4}} |\v^\top \Ssigma_{\eta} \v| + \frac{1}{2} \vertiii{\Ssigma_{\eta}}_2.
\end{align*}
Taking the supremum over $\tilde{\u} \in \SS^{d-1},$ we have 
$$\vertiii{\Ssigma_{\eta}}_2 \leq \max_{\v \in \mathcal{N}_{1/4}} |\v^\top \Ssigma_{\eta} \v| + \frac{1}{2} \vertiii{\Ssigma_{\eta}}_2,$$
which implies
\begin{equation} \label{appendix-equ-Sigma_eta}
\vertiii{\Ssigma_{\eta}}_2 \leq 2 \max_{\v \in \mathcal{N}_{1/4}} |\v^\top \Ssigma_{\eta} \v|.
\end{equation}
Note that $\eeta_{ij} = \y_{ij} - \bbeta^{\star^\top} \x_{ij} = (\I_d \otimes \z_{ij}^\top) \vvec (\B_i) + \eepsilon_{ij},$ so $\Ssigma_{\eta}$ can be decomposed as
$$\Ssigma_{\eta} = \bm T_{\epsilon \epsilon} + \bm T_{BB} + \bm T_{B \epsilon} + \bm T_{\epsilon B} ,$$
where 
\begin{align*}
    & \bm T_{\epsilon \epsilon} = \frac{1}{N} \sum_{i=1}^m \sum_{j=1}^{n_i} \left\{\eepsilon_{ij} \eepsilon_{ij}^\top - \Ssigma_{\epsilon}^\star\right\} \\
    & \bm T_{BB} = \frac{1}{N} \sum_{i=1}^m \sum_{j=1}^{n_i} \left\{ (\I_d \otimes \z_{ij}^\top) \left[\vvec(\B_i) \vvec(\B_i)^\top - \Ssigma_B^\star\right] (\I_d \otimes \z_{ij})\right\} \\
    & \bm T_{B \epsilon} = \frac{1}{N} \sum_{i=1}^m \sum_{j=1}^{n_i} \left\{(\I_d \otimes \z_{ij}^\top) \vvec(\B_i) \eepsilon_{ij}^\top \right\} \\
    & \bm T_{\epsilon B} = \bm T_{B \epsilon}^\top.
\end{align*}
Following the same $\varepsilon$-net argument, each of these quantities is upper bounded by a similar format of \eqref{appendix-equ-Sigma_eta}. Let $K_{\epsilon} = \sup_{i,j} \|\eepsilon_{ij}\|_{J,\psi_2} = \sup_{i,j} \sup_{\|\v\|=1} \|\v^\top \eepsilon_{ij}\|_{\psi_2}$ and $K_{B} = \sup_{i,j} \|(\I_d \otimes \z_{ij}^\top) \vvec(\B_i)\|_{J,\psi_2} = \sup_{i,j} \sup_{\|\v\|=1} \|\v^\top (\I_d \otimes \z_{ij}^\top) \vvec(\B_i)\|_{\psi_2}.$ By the proof of Corollary~\ref{corollary-rate} (Supplementary Material~\ref{appendix-corollary-rate}), we have $K_{\epsilon} \leq \sqrt{d} \sigma_{\epsilon}$ and $K_B \leq \sqrt{d} C_z \kappa \sqrt{\max_{\ell} \vertiii{\Ssigma_B^{\star^{(\ell,\ell)}}}_2.}$ Then, since $\eepsilon_{ij}$'s are i.i.d. across $i \in [m]$ and $j \in [n_i],$ we have
{\small
\begin{align*}
    & \PP\left(\bigvertiii{\bm T_{\epsilon \epsilon}}_2 \geq t\right) \\
    \leq & 9^d \sup_{\v\in\mathcal{N}_{1/4}} \PP \left(|\v^\top \bm T_{\epsilon \epsilon} \v| \geq \frac{t}{2}\right) \\
    = & 9^d \sup_{\v\in\mathcal{N}_{1/4}} \PP \left(\left|\frac{1}{N} \sum_{i=1}^m \sum_{j=1}^{n_i} \left[(\v^\top\eepsilon_{ij})^2 - \EE (\v^\top\eepsilon_{ij})^2\right]\right| \geq \frac{t}{2}\right) \\
    \leq & 2 \times 9^d \sup_{\v\in\mathcal{N}_{1/4}} \exp\left\{-c \cdot \min\left[\frac{N^2 t^2}{4 \sum_{i=1}^m \sum_{j=1}^{n_i} \left\| \left[(\v^\top\eepsilon_{ij})^2 - \EE (\v^\top\eepsilon_{ij})^2\right]\right\|_{\psi_1}^2}, \frac{N t}{2 \max_i \left\| \left[(\v^\top\eepsilon_{ij})^2 - \EE (\v^\top\eepsilon_{ij})^2\right]\right\|_{\psi_1}} \right]\right\} \\
    \leq & 2 \times 9^d \exp\left\{-c \cdot \min\left[\frac{N t^2}{4 K_{\epsilon}^{4}}, \frac{N t}{2 K_{\epsilon}^{2}} \right]\right\} \\
    \leq & 2 \times 9^d \exp\left\{-c \cdot \min\left[\frac{N t^2}{4 d^2 \sigma_{\epsilon}^4}, \frac{N t}{2 d \sigma_{\epsilon}^2} \right]\right\}.
\end{align*}
}
Let
$$t = \frac{2d \sigma_{\epsilon}^2}{c}\max \left\{\sqrt{ \frac{\log \frac{16}{\delta} + d \log 9}{N}}, \frac{\log \frac{16}{\delta} + d \log 9}{N}\right\},$$
then $\vertiii{\bm T_{\epsilon \epsilon}}_2 \leq t$ with probability at least $1 - \frac{\delta_{\Sigma}}{8}.$ So, 
$$\bigvertiii{\bm T_{\epsilon \epsilon}}_2 = \mathcal{O}_p\left[d \left(\sqrt{\frac{d}{N}} \vee \frac{d}{N} \right)\right].$$
On the other hand, since $\B_i$'s are only independent across $i \in [m],$ we have
{\small
\begin{align*}
    & \PP\left(\bigvertiii{\bm T_{BB}}_2 \geq t\right) \\
    \leq & 9^d \sup_{\v\in\mathcal{N}_{1/4}} \PP \left(|\v^\top \bm T_{BB} \v| \geq \frac{t}{2}\right) \\
    = & 9^d \sup_{\v\in\mathcal{N}_{1/4}} \PP \left(\left|\frac{1}{N} \sum_{i=1}^m \sum_{j=1}^{n_i} \left\{[\v^\top(\I_d \otimes \z_{ij}^\top)\vvec(\B_i)]^2 - \EE [\v^\top(\I_d \otimes \z_{ij}^\top)\vvec(\B_i)]^2\right\}\right| \geq \frac{t}{2}\right) \\
    \leq & 2 \times 9^d \sup_{\v\in\mathcal{N}_{1/4}} \exp\left\{-c \cdot \min\left[\frac{N^2 t^2}{4 \sum_{i=1}^m \left\|\sum_{j=1}^{n_i} \left\{[\v^\top(\I_d \otimes \z_{ij}^\top)\vvec(\B_i)]^2 - \EE [\v^\top(\I_d \otimes \z_{ij}^\top)\vvec(\B_i)]^2\right\}\right\|_{\psi_1}^2}, \right.\right.\\
     &\left.\left. \qquad \qquad \qquad \frac{N t}{2 \max_i \left\|\sum_{j=1}^{n_i} \left\{[\v^\top(\I_d \otimes \z_{ij}^\top)\vvec(\B_i)]^2 - \EE [\v^\top(\I_d \otimes \z_{ij}^\top)\vvec(\B_i)]^2\right\}\right\|_{\psi_1}} \right]\right\} \\
    \leq & 2 \times 9^d \exp\left\{-c \cdot \min\left[\frac{N^2 t^2}{4 \sum_{i=1}^m n_i^2 K_B^{4}}, \frac{N t}{2 \max_i n_i K_B^{2}} \right]\right\} \\
    \leq & 2 \times 9^d \exp\left\{-c \cdot \min\left[\frac{N^2 t^2}{4 \sum_{i=1}^m n_i^2 d^2 C_z^4 \kappa^4 \max_{\ell}\vertiii{\Ssigma_B^{\star^{(\ell,\ell)}}}_2^2}, \frac{N t}{2 \max_i n_i d C_z^2 \kappa^2 \max_{\ell}\vertiii{\Ssigma_B^{\star^{(\ell,\ell)}}}_2} \right]\right\}.
\end{align*}
}
Let 
$$t = \frac{2d C_z^2 \kappa^2 \max_{\ell}\vertiii{\Ssigma_B^{\star^{(\ell,\ell)}}}_2}{c}\max \left\{\sqrt{ \frac{\log \frac{16}{\delta} + d \log 9}{N^2 / \sum_{i=1}^m n_i^2}}, \frac{\log \frac{16}{\delta} + d \log 9}{N / \max_i n_i}\right\},$$
then $\vertiii{\bm T_{BB}}_2 \leq t$ with probability at least $1 - \frac{\delta_{\Sigma}}{8}.$ Hence,
$$\bigvertiii{\bm T_{BB}}_2 = \mathcal{O}_p\left[\kappa^2 \max_{\ell}\vertiii{\Ssigma_B^{\star^{(\ell,\ell)}}}_2 d \left(\sqrt{\frac{d}{N^2 / \sum_{i=1}^m n_i^2}} \vee \frac{d}{N / \max n_i}\right)\right].$$
Moreover,
{\small
\begin{align*}
    & \PP\left(\bigvertiii{\bm T_{B\epsilon}}_2 \geq t\right) \\
    \leq & 9^d \sup_{\v\in\mathcal{N}_{1/4}} \PP \left(|\v^\top \bm T_{B\epsilon} \v| \geq \frac{t}{2}\right) \\
    = & 9^d \sup_{\v\in\mathcal{N}_{1/4}} \PP \left(\left|\frac{1}{N} \sum_{i=1}^m \sum_{j=1}^{n_i} \left[\v^\top(\I_d \otimes \z_{ij}^\top)\vvec(\B_i) \eepsilon_{ij}^\top \v\right]\right| \geq \frac{t}{2}\right) \\
    \leq & 2 \times 9^d \sup_{\v\in\mathcal{N}_{1/4}} \exp\left\{-c \cdot \min\left[\frac{N^2 t^2}{4 \sum_{i=1}^m \left\|\sum_{j=1}^{n_i} \left[\v^\top(\I_d \otimes \z_{ij}^\top)\vvec(\B_i) \eepsilon_{ij}^\top \v\right]\right\|_{\psi_1}^2}, \right.\right.\\
     &\left.\left. \qquad \qquad \qquad \frac{N t}{2 \max_i \left\|\sum_{j=1}^{n_i} \left[v^\top(\I_d \otimes \z_{ij}^\top)\vvec(\B_i) \eepsilon_{ij}^\top \v\right]\right\|_{\psi_1}} \right]\right\} \\
    \leq & 2 \times 9^d \exp\left\{-c \cdot \min\left[\frac{N^2 t^2}{4 \sum_{i=1}^m n_i^2 K_B^{2} K_{\epsilon}^2}, \frac{N t}{2 \max_i n_i K_B K_{\epsilon}} \right]\right\} \\
    \leq & 2 \times 9^d \exp\left\{-c \cdot \min\left[\frac{N^2 t^2}{4 \sum_{i=1}^m n_i^2 d^2 C_z^2 \kappa^2 \max_{\ell}\vertiii{\Ssigma_B^{\star^{(\ell,\ell)}}}_2 \sigma_{\epsilon}^2}, \frac{N t}{2 \max_i n_i d C_z \kappa \max_{\ell}\sqrt{\vertiii{\Ssigma_B^{\star^{(\ell,\ell)}}}_2}\sigma_{\epsilon}} \right]\right\}.
\end{align*}
}
Let 
$$t = \frac{2d C_z \kappa \sigma_{\epsilon} \sqrt{\max_{\ell}\vertiii{\Ssigma_B^{\star^{(\ell,\ell)}}}_2}}{c}\max \left\{\sqrt{ \frac{\log \frac{16}{\delta} + d \log 9}{N^2 / \sum_{i=1}^m n_i^2}}, \frac{\log \frac{16}{\delta} + d \log 9}{N / \max_i n_i}\right\},$$
then $\vertiii{\bm T_{B\epsilon}}_2 = \vertiii{\bm T_{\epsilon B}}_2 \leq t$ with probability at least $1 - \frac{\delta_{\Sigma}}{8}.$ Thus,
$$\bigvertiii{\bm T_{B \epsilon}}_2 = \bigvertiii{\bm T_{\epsilon B}}_2 = \mathcal{O}_p\left[\kappa \sqrt{\max_{\ell}\vertiii{\Ssigma_B^{\star^{(\ell,\ell)}}}_2} d \left(\sqrt{\frac{d}{N^2 / \sum_{i=1}^m n_i^2}} \vee \frac{d}{N / \max n_i}\right)\right]$$
Taking together, we have 
{\small
\begin{align*}
    &\bigvertiii{\hat \Ssigma_{\epsilon} - \Ssigma_{\epsilon}^\star}_2 \\
    \leq & \bigvertiii{\Ssigma_{\eta}}_2 + C_z^2 \bigvertiii{\hat \Ssigma_{B} - \Ssigma_{B}^\star}_2 \\
    \leq & \bigvertiii{\bm T_{BB}}_2 + \bigvertiii{\bm T_{\epsilon \epsilon}}_2 + \bigvertiii{\bm T_{B\epsilon}}_2 + \bigvertiii{\bm T_{\epsilon B}}_2 + C_z^2 \bigvertiii{\hat \Ssigma_{B} - \Ssigma_{B}^\star}_2 \\
    = & \mathcal{O}_p \left\{d \left(\sqrt{\frac{d}{N}} \vee \frac{d}{N} \right) + \left(\kappa^2 \max_{\ell}\vertiii{\Ssigma_B^{\star^{(\ell,\ell)}}}_2 + \kappa \sqrt{\max_{\ell}\vertiii{\Ssigma_B^{\star^{(\ell,\ell)}}}_2} \right) d \left(\sqrt{\frac{d}{N^2 / \sum_{i=1}^m n_i^2}} \vee \frac{d}{N / \max n_i}\right) \right. \\
    & \left. \qquad + \kappa^2 \max_{\ell} \bigvertiii{\Ssigma_B^{\star^{(\ell,\ell)}}}_2 \left(\sqrt{\frac{s^2}{M \big / \max_{i} N_i}} \vee \frac{s^2 }{M \big / \max_{i} N_i}\right) \right\}.
\end{align*}
}

By the similar argument as when bounding $\vertiii{\hat{\Ssigma}_{B_S} - \Ssigma_{B_S}^\star}_2,$ $\hat{\Ssigma}_{\epsilon}$ is also a feasible solution to the PSD-constrainted problem \eqref{equ-opt-sigmae}. 

\paragraph*{Bound $\max_i \vertiii{\hat \Ssigma_i^{1/2} (\hat{\Ssigma}_i^{-1} - \Ssigma_i^{\star^{-1}}) \hat \Ssigma_i^{1/2}}_2.$}

Denote the bounds of $\vertiii{\hat \Ssigma_{B_S} - \Ssigma_{B_S}^\star}_2$ and $\vertiii{\hat \Ssigma_{\epsilon} - \Ssigma_{\epsilon}^\star}_2$ as
$$\bigvertiii{\hat \Ssigma_{B_S} - \Ssigma_{B_S}^\star}_2 \leq b_m, \qquad \bigvertiii{\hat \Ssigma_{\epsilon} - \Ssigma_{\epsilon}^\star}_2 \leq e_m ,$$
with probability tending to $1.$ Denote
$$ a_m := \max\left\{\frac{b_m}{\Lambda_{\min}(\Ssigma_{B_S}^\star)}, \frac{e_m }{\Lambda_{\min}(\Ssigma_{\epsilon}^\star)}\right\}.$$
Recall that $\mathcal{Z}_i = [\I_d \otimes \z_{i1}, ..., \I_d \otimes \z_{i n_i}]^\top \in \R^{n_i d \times qd}$, then 
$$\Ssigma_i^\star = \mathcal{Z}_i \Ssigma_B^\star \mathcal{Z}_i^\top + \I_{n_i} \otimes \Ssigma_{\epsilon}^\star, \quad \hat \Ssigma_i = \mathcal{Z}_i \hat \Ssigma_B \mathcal{Z}_i^\top + \I_{n_i} \otimes \hat \Ssigma_{\epsilon}.$$
Therefore, 
$$\hat \Ssigma_i - \Ssigma_i^\star = \mathcal{Z}_i (\hat \Ssigma_B - \Ssigma_B^\star) \mathcal{Z}_i^\top + \I_{n_i} \otimes (\hat \Ssigma_{\epsilon} - \Ssigma_{\epsilon}^\star).$$
Conditional on the correct random-effects selection $\{\hat{S} = S\},$ we also have
$$\hat \Ssigma_i - \Ssigma_i^\star = \mathcal{Z}_{i_S} (\hat \Ssigma_{B_S} - \Ssigma_{B_S}^\star) \mathcal{Z}_{i_S}^\top + \I_{n_i} \otimes (\hat \Ssigma_{\epsilon} - \Ssigma_{\epsilon}^\star).$$
Since $\vertiii{\hat \Ssigma_{B_S} - \Ssigma_{B_S}^\star}_2 \leq b_m$, we have $-b_m \I_s \preceq \hat \Ssigma_{B_S} - \Ssigma_{B_S}^\star \preceq b_m \I_s.$ Also since $\Ssigma_{B_S}^\star \succeq \Lambda_{\min}(\Ssigma_{B_S}^\star) \I_s,$ we have $\I_s \preceq \Lambda_{\min}(\Ssigma_{B_S}^\star)^{-1} \Ssigma_{B_S}^\star.$ Therefore
$$-\frac{b_m}{\Lambda_{\min}(\Ssigma_{B_S}^\star)} \Ssigma_{B_S}^\star \preceq \hat \Ssigma_{B_S} - \Ssigma_{B_S}^\star \preceq \frac{b_m}{\Lambda_{\min}(\Ssigma_{B_S}^\star)} \Ssigma_{B_S}^\star.$$
Similarly, for $\Ssigma_{\epsilon},$ we have
$$-\frac{e_m}{\Lambda_{\min}(\Ssigma_{\epsilon}^\star)} \Ssigma_{\epsilon}^\star \preceq \hat \Ssigma_{\epsilon} - \Ssigma_{\epsilon}^\star \preceq \frac{e_m}{\Lambda_{\min}(\Ssigma_{\epsilon}^\star)} \Ssigma_{\epsilon}^\star.$$
By definition, $a_m = \max\left\{\frac{b_m}{\Lambda_{\min}(\Ssigma_{B_S}^\star)}, \frac{e_m }{\Lambda_{\min}(\Ssigma_{\epsilon}^\star)}\right\}$, so
$$-a_m \Ssigma_{B_S}^\star \preceq \hat \Ssigma_{B_S} - \Ssigma_{B_S}^\star \preceq a_m \Ssigma_{B_S}^\star,$$
and 
$$-a_m \Ssigma_{\epsilon}^\star \preceq \hat \Ssigma_{\epsilon} - \Ssigma_{\epsilon}^\star \preceq a_m  \Ssigma_{\epsilon}^\star.$$
From these two inequalities, we have
$$-a_m \mathcal{Z}_{i_S} \Ssigma_{B_S}^\star \mathcal{Z}_{i_S}^\top \preceq \mathcal{Z}_{i_S} (\hat \Ssigma_{B_S} - \Ssigma_{B_S}^\star) \mathcal{Z}_{i_S}^\top \preceq a_m \mathcal{Z}_{i_S} \Ssigma_{B_S}^\star \mathcal{Z}_{i_S}^\top,$$
and
$$-a_m (\I_{n_i} \otimes \Ssigma_{\epsilon}^\star) \preceq [\I_{n_i} \otimes (\hat \Ssigma_{\epsilon} - \Ssigma_{\epsilon}^\star)] \preceq a_m  (\I_{n_i} \otimes \Ssigma_{\epsilon}^\star).$$
Adding these two inequalities gives
$$-a_m \Ssigma_i^\star \preceq \hat{\Ssigma}_i - \Ssigma_i^\star \preceq a_m \Ssigma_i^\star,$$
which implies
$$(1-a_m) \I_{n_i d} \preceq \Ssigma_i^{\star^{-1/2}}\hat{\Ssigma}_i \Ssigma_i^{\star^{-1/2}} \preceq (1+a_m) \I_{n_i d}.$$
Since matrices $\Ssigma_i^{\star^{-1/2}}\hat{\Ssigma}_i \Ssigma_i^{\star^{-1/2}}$ and $\hat \Ssigma_i^{1/2}\Ssigma_i^{\star^{-1}} \hat \Ssigma_i^{1/2}$ have the same eigenvalues, we have
\begin{align*}
    &\bigvertiii{\hat \Ssigma_i^{1/2} (\hat \Ssigma_i^{-1} - \Ssigma_i^{\star^{-1}}) \hat \Ssigma_i^{1/2}}_2 \\
    =& \bigvertiii{\I_{n_i d} - \hat \Ssigma_i^{1/2}\Ssigma_i^{\star^{-1}} \hat \Ssigma_i^{1/2}}_2 \\
    \leq & a_m \\
    = & \max\left\{\frac{b_m}{\Lambda_{\min}(\Ssigma_{B_S}^\star)}, \frac{e_m }{\Lambda_{\min}(\Ssigma_{\epsilon}^\star)}\right\} \\
    =& \mathcal{O}_p \left\{d \left(\sqrt{\frac{d}{N}} \vee \frac{d}{N} \right) + \left(\kappa^2 \max_{\ell}\vertiii{\Ssigma_B^{\star^{(\ell,\ell)}}}_2 + \kappa \sqrt{\max_{\ell}\vertiii{\Ssigma_B^{\star^{(\ell,\ell)}}}_2} \right) d \left(\sqrt{\frac{d}{N^2 / \sum_{i=1}^m n_i^2}} \vee \frac{d}{N / \max n_i}\right) \right. \\
    & \left. \qquad + \kappa^2 \max_{\ell} \bigvertiii{\Ssigma_B^{\star^{(\ell,\ell)}}}_2 \left(\sqrt{\frac{s^2}{M \big / \max_{i} N_i}} \vee \frac{s^2 }{M \big / \max_{i} N_i}\right) \right\}\\
\end{align*}
We can simplify the rate of $a_m$ in the balanced case where $n_i = n$ for all $i\in[m],$
\begin{align*}
    a_m = & \mathcal{O}_p \left\{d \left(\sqrt{\frac{d}{N}} \vee \frac{d}{N} \right) + \left(\kappa^2 \max_{\ell}\vertiii{\Ssigma_B^{\star^{(\ell,\ell)}}}_2 + \kappa \sqrt{\max_{\ell}\vertiii{\Ssigma_B^{\star^{(\ell,\ell)}}}_2} \right) d \left(\sqrt{\frac{d}{m}} \vee \frac{d}{m}\right) \right. \\
    & \left. \qquad \qquad + \kappa^2 \max_{\ell} \bigvertiii{\Ssigma_B^{\star^{(\ell,\ell)}}}_2 \left(\sqrt{\frac{s^2}{m}} \vee \frac{s^2 }{m}\right) \right\}.
\end{align*}
Furthermore, we assume $m > d$, $m > s^2$, and random effect $\B_{i,\ell}$ is joint Gaussian distributed for all $\ell \in [d]$ so that $\kappa$ is a constant. In such a case, the rate of $a_m$ can be simplified as
\begin{align*}
    a_m = \mathcal{O}_p \left\{\sqrt{\frac{d^3}{N}} + \left( \max_{\ell}\vertiii{\Ssigma_B^{\star^{(\ell,\ell)}}}_2 +  \sqrt{\max_{\ell}\vertiii{\Ssigma_B^{\star^{(\ell,\ell)}}}_2} \right) \sqrt{\frac{d^3}{m}} + \max_{\ell} \bigvertiii{\Ssigma_B^{\star^{(\ell,\ell)}}}_2 \sqrt{\frac{s^2}{m}} \right\}.
\end{align*}
\end{proof}

\end{document}